%
%
%
%
%
%
%
\documentclass[%
reprint,
superscriptaddress,
frontmatterverbose, 
preprintnumbers,
longbibliography,
 amsmath,amssymb,
 aps,
 pra,
notitlepage,
nofootinbib,
twocolumn
]{revtex4-1}

\usepackage{lipsum}

\usepackage{graphicx}
\usepackage{dcolumn}
\usepackage{bm}
\usepackage{hyperref}
\usepackage{subfigure}
\hypersetup{
colorlinks=true,
linkcolor=blue,
filecolor=blue,
citecolor=blue,  
urlcolor=blue,
}

\newcommand{\mycomment}[1]{}

\usepackage{amsfonts}

\usepackage{comment}

\usepackage{thmtools}
\usepackage{thm-restate}

\usepackage{enumerate} 


\usepackage{amssymb}
\usepackage{mathtools}


\usepackage{mathrsfs}
\usepackage{multirow}
\usepackage{bbm}

\usepackage[dvipsnames]{xcolor}

\usepackage{amsmath,empheq}

\definecolor{sanddune}{rgb}{0.59, 0.44, 0.09}
\definecolor{darkblue}{RGB}{0,0,102}
\definecolor{darkred}{rgb}{0.5,0.,0.}
\definecolor{BlueViolet}{RGB}{138,43,226}
\definecolor{SkyBlue}{RGB}{30,144,255}
\definecolor{DarkGreen}{RGB}{0,100,0}

\usepackage{amsthm}
\usepackage{amsmath}
\theoremstyle{plain}
\newtheorem{thm}{Theorem}
\newtheorem{lem}[thm]{Lemma}
\newtheorem{prop}[thm]{Proposition}
\newtheorem{cor}[thm]{Corollary}

\newtheorem{fact}[thm]{Fact}

\theoremstyle{definition}
\newtheorem{defn}{Definition}
\newtheorem{conj}{Conjecture}
\newtheorem{exmp}{Example}

\newtheorem*{rem}{Remark}

\newcommand{\ket}[1]{|#1\rangle}
\newcommand{\bra}[1]{\langle #1|}
\newcommand{\bracket}[2]{\langle #1|#2\rangle}
\newcommand{\ketbra}[2]{|#1\rangle\langle #2|}

\newcommand{\abs}[1]{\left|#1\right|}

\newcommand{\mc}{\mathcal}

\newcommand{\mbb}{\mathbb}

\newcommand{\ba}{\begin{eqnarray}}
\newcommand{\ea}{\end{eqnarray}}

\DeclareMathOperator{\Tr}{Tr}

\newcommand{\wt}{\mathrm{wt}}

\newcommand{\supp}{\operatorname{supp}}

\newcommand{\T}{\mathsf{T}}

\usepackage{pifont}
\newcommand{\cmark}{\ding{51}}  
\newcommand{\xmark}{\ding{55}}  

\usepackage{makecell}
\usepackage{multirow}

\newcommand{\norm}[1]{\left\| #1 \right\|}

\newcommand{\twonorm}[1]{\left\| #1 \right\|_2}
\newcommand{\bignorm}[1]{\big\| #1 \big\|}


\begin{document}

\title{Long-range nonstabilizerness and quantum codes, phases, and complexity}

\author{Fuchuan Wei}
\affiliation{Yau Mathematical Sciences Center, Tsinghua University, Beijing 100084, China}
\affiliation{Department of Mathematical Sciences, Tsinghua University, 100084 Beijing, China}

\author{Zi-Wen Liu}
\affiliation{Yau Mathematical Sciences Center, Tsinghua University, Beijing 100084, China}

\date{\today}

\begin{abstract}
As a necessary resource for quantum computational advantage, quantum magic (nonstabilizerness) is of fundamental importance in the study of quantum computation and physics.
We develop a systematic theory of \emph{long-range magic (LRM)}---nonstabilizerness that cannot be erased by shallow unitary circuits---and demonstrate its broad relevance.
By bridging LRM with fault-tolerant logical gate theory, we show the emergence of LRM families from quantum error-correcting codes and devise a simple yet powerful method for testing transversal logical gates.
Further, we introduce and characterize \emph{LRM phases} in which all ground states exhibit LRM, and identify certain non-Abelian topological orders as representative examples.
Then, adopting a complexity theory perspective, we demonstrate the classicality of non-LRM systems in e.g.~preparation and learning settings, and present a ``no low-energy trivial magic'' (NLTM) conjecture with key motivation in the quantum PCP context, for which our LRM results suggest a promising route.
Additionally, we demonstrate how to diagnose LRM with correlation functions.
The concepts and results admit substantive extensions to approximate (robust) and nongeometric scenarios.
Our LRM theory illuminates profound new connections among quantum resources, computational advantage, error correction and fault tolerance, many-body physics, and complexity theory.
\end{abstract}

\pacs{}
\maketitle


\section{Introduction}

Understanding the nature of quantum systems that enable computational advantages over classical means is a central pursuit of quantum information science.
As implied by the Gottesman–Knill theorem~\cite{gottesman1998heisenberg,Aaronson2004Improved,NielsenChuang}, quantum computation composed solely of stabilizer elements is classically efficiently simulable, despite permitting arbitrarily large entanglement. Consequently, nonstabilizerness (magic)~\cite{BravyiKitaev,Veitch2014resource} emerges as a necessary resource for quantum computational advantages, with its characterizations and meanings extensively studied from e.g.~resource theory~\cite{Veitch2014resource,Howard2017application} and classical simulation~\cite{PhysRevX.6.021043,Bravyi2016Improved,Bravyi2019simulationofquantum,PRXQuantum.2.010345,PhysRevLett.123.170502} perspectives over the years.
Recently, nonstabilizerness in many-body quantum systems has attracted substantial interest from both quantum information and physics communities, with extensive research efforts devoted to understanding its interplay with entanglement (see e.g.~Refs.~\cite{Liu2022manybody,Lorenzo2022stabilizer,PhysRevB.107.035148,Lami2023Nonstabilizerness,Tarabunga2024Nonstabilizerness,Frau2024Nonstabilizerness,gu2024magicinducedcomputationalseparationentanglement,zejun2025Nonequilibrium}) and connecting it with many-body physics (see e.g.~Refs.~\cite{Sarkar2020Characterization,White2021Conformal,Liu2022manybody,Tarabunga2023ManyBody,gu2024magicinducedcomputationalseparationentanglement,Mircea2024Dynamical}).

In the development of condensed matter physics, the concept of long-range entanglement~\cite{ChenGuWen10:lu,wen2013topological,HastingsWen:quasi} has played a vital role, underpinning intrinsically quantum notions of orders and phases that extend beyond the traditional Landau paradigm.
Here ``long-range'' captures the global, topological nature of the feature, requiring it to be preserved under shallow local quantum circuits representing ``smooth'' deformations that do not alter fundamentally global properties.

We adopt a computational perspective and study \emph{long-range magic (LRM)}~\cite{White2021Conformal,Ellison2021symmetryprotected}, defined as a form of nonstabilizerness that cannot be erased by shallow circuits, which captures the ``topological'' kind of nonstabilizerness embedded in global structures and robust against local perturbations. 
Several insights are worth noting. First, it is evident that LRM is a notion strictly stronger than long-range entanglement, hence representing a refined, computationally grounded characterization of long-range quantumness.
Second, from a computational complexity viewpoint, LRM can be understood as high-complexity nonstabilizerness that is intrinsically far from the classical regime, thus naturally holding importance in quantum complexity theory.
Third, as we will demonstrate, LRM bridges numerous fields including quantum computation, error correction and fault tolerance, complexity, and many-body physics, in new interesting ways, drawing on insights from these fields to inform one another.
Building on these perspectives, a rigorous understanding and constructions of LRM would have broad applications and spur new directions across these important domains.

In this work, we fulfill this goal by developing a systematic theory of LRM as well as its robust extension that accommodates highly permissive approximations, and studying its construction and significance from diverse perspectives. 
Note that proving LRM feature essentially amounts to the fundamentally challenging problem of setting nontrivial circuit lower bounds for erasing nonstabilizerness. 
We first demonstrate the manifestation of robust LRM in topological codes and orders, showcasing how quantum computational resources emerge from prominent physical systems.
Here, a key insight is that shallow circuits underpin fault-tolerant logical operations in the quantum coding context and the characterization of quantum phases of matter, enabling us to bridge LRM with these fields.
In particular, leveraging fault-tolerant gate theory we show that topological codes protect LRM in logical nonstabilizer states, along the way establishing various new results such as a robust generalization of the Bravyi--König theorem that may find independent interest in quantum error correction.
The link between LRM and logical gate theory also leads us to a novel application for logical gates: we devise a simple and widely useful method for testing whether a transversal implementation of certain non-Clifford gates is possible on stabilizer codes, e.g., ruling out transversal $T$ gates on the gross code~\cite{Bravyi2024High,yoder2025tourgrossmodularquantum}.
Furthermore, we formalize the theory of \emph{LRM phases} which capture ``inherently magical'' phases of matter and thereby offer a new paradigm for quantum phase classification and identification based on computational characteristics.
{We rigorously link the LRM-ness of a topological order to its realizability as a stabilizer model and derive clear ground state degeneracy conditions for its certification, which enable us to establish profound connections between the non-Abelian and LRM features of topological orders and identify explicit examples of LRM phases such as the doubled Fibonacci topological order.}
These LRM phases are also mathematically remarkable in that they naturally evade all stabilizer states that pervade the Hilbert space, highlighting LRM as a robust emergent property with computational significance.
Lastly, we pivot to complexity theory through the lens of LRM. We first elucidate the quantum complexity significance of LRM from widespread perspectives including preparation, description, and learning complexities.
Then, building upon our LRM phase results, we formulate a ``no low-energy trivial
magic'' (NLTM) conjecture, which unifies the trivial entanglement and nonstabilizerness types of classical proofs and therefore strengthens the recently proven no low-energy trivial state (NLTS) theorem~\cite{freedman2013quantumsystemsnonkhyperfinitecomplexes,Anshu2023NLTS}, potentially yielding important advances towards quantum PCP. We explain how our LRM phases suggest a compelling path towards fully proving it.
Additionally, we present a general and practical scheme for diagnosing LRM with correlation functions, which we use to show e.g.,~$\mathrm{C}^{n-1}Z$ and GHZ-type states are LRM. Note that most of our LRM results are robust even against $O(1/n)$ approximation and do not rely on geometric locality.

\section{Preliminaries}
 
As a foundational setup, consider $n$-qubit quantum systems\footnote{For understanding the essence of the theory, it is largely sufficient and convenient to consider qubit systems. Cases of general dimensions are useful but technically more involved; rigorous details for qudit cases can be found in the appendix.}.
The Clifford group $\mathcal{C}_n$ is defined as all unitaries $U$ such that $U P U^\dagger \in \mathcal{P}_n$ for all $P \in \mathcal{P}_n$, where $\mathcal{P}_n$ is the $n$-qubit Pauli group; equivalently, $\mathcal{C}_n$ is the group generated by Hadamard, phase, and CNOT gates. Then the states that can be generated from $\ket{0}^{\otimes n}$ by acting an $n$-qubit Clifford unitary are stabilizer states.
Our discussion builds on the seminal observation noted earlier that ``magic'' (nonstabilizerness) is a necessary quantum computational resource due to the efficient classical simulability of any process composed of stabilizer-preserving elements~\cite{gottesman1998heisenberg,Aaronson2004Improved,NielsenChuang}.

Henceforth, by \emph{shallow circuits} we mean families of local quantum circuits consisting of a constant ($O(1)$ as the system size grows)\footnote{Note that extensions of subsequent definitions and results to higher circuit depths are feasible and sometimes meaningful 
(see Footnote \ref{fn:depth} for more information); for ease of exposition we stick to constant depth in the general definitions, which suffices for conveying the essence.} number of layers of disjoint local unitary gates which act only on finitely many neighboring (as determined by an underlying connectivity/adjacency graph) subsystems.

\begin{defn}
We say a family of states $\{\ket{\psi_n}\}$ composed of $n$-partite states $\ket{\psi_n}$ exhibits \emph{long-range magic (LRM)} if for any family of shallow circuits $\{U_n\}$, $U_n\ket{\psi_n}$ remains a magic state for sufficiently large $n$.
Furthermore, we say $\{\ket{\psi_n}\}$ exhibits \emph{strong LRM} if its LRM is robust against even $\Omega(1/n)$ error, i.e., $\big\|U_n\ketbra{\psi_n}{\psi_n}U_n^\dagger-\ketbra{S_n}{S_n}\big\|_1=\Omega(1/n)$ for any family of shallow circuits $\{U_n\}$ and any family of stabilizer states $\{\ket{S_n}\}$.
\end{defn}

Conversely, we say $\{\ket{\psi_n}\}$ has \emph{short-range magic (SRM)} if there exists a family of shallow circuits $\{U_n\}$ such that $\{U_n\ket{\psi_n}\}$ are stabilizer states.
We may sometimes simply say some state has LRM while implicitly referring to its naturally associated state family.

Some elucidating remarks are in order. Importantly, LRM is a strictly stronger condition than long-range entanglement (cannot be transformed into product states) and therefore characterizes a more refined property.
Also, the presence or absence of LRM does not depend on the choice of local bases.
Furthermore, this $\Omega(1/n)$ robustness in our notion of strong LRM is highly lenient and bears fundamental physical and information-theoretic significance, arising as a universal critical error scaling in connection to various essential principles including symmetry, order and phase classification~\cite{faist20,liu2022approximate,Yi_2024}. For example, physically,
an $o(1/n)$ error ensures bounded energy deviation for natural systems so this $\Omega(1/n)$ robustness is naturally protected by an energy gap.

\section{Long-range magic and QEC codes}\label{sec:LRM_qec}

Our first insight is the fundamental connection between LRM and quantum error correction (QEC),
which yields new understanding on both the information protection and processing properties of QEC codes.
We elucidate this insight through topological stabilizer codes (TSCs), which play a central role across quantum computing and many-body physics, encompassing toric~\cite{KITAEV2003anyons}, surface~\cite{Topological2022Dennis}, color codes~\cite{Bombin2006TopologicalQuantumDistillation} and their variants as prominent examples. Specifically, we bridge logical nonstabilizerness to the intensively studied logical non-Clifford gate problem, and thereby establish LRM through limitations on fault-tolerant  gates~\cite{Bravyi2013Classification} (the results extend to much broader kinds of codes even without geometry~\cite{nogohgp}).

Formally, now consider $D$-dimensional TSC families characterized by the following properties: the physical qubits are arranged on a lattice embedded in a suitable $D$-dimensional manifold with {Pauli} checks (stabilizer generators) locally supported on bounded-size regions, and the code distance grows macroscopically with the lattice size.

\begin{thm}\label{thm:LRM_from_TSC}
For any $D$-dimensional ($D\ge2$) TSC family, if $\ket{\phi}\notin\big\{V^\dagger\ket{0^k}\mid V\in \mc{C}_k^{(D)}\big\}$\footnote{The Clifford hierarchy is a nested tower of unitaries defined in the following recursive manner:  for an $n$-qubit system, the first level is given by the Pauli group $\mc{C}_n^{(1)}:=\mc{P}_n$, and then the $l$-th level ($l\ge2$) is defined as $\mc{C}_n^{(l)}:=\{U\in\operatorname{U}(2^n)\mid UPU^\dagger\in\mc{C}_n^{(l-1)},\forall P\in\mc{P}_n\}$, where $\operatorname{U}(2^n)$ denotes the unitary group. In particular, $\mc{C}_n^{(2)}$ is the Clifford group. A specific $\mc{C}_n^{(3)}$ gate of outstanding importance is $T=\operatorname{diag}(1,e^{i\pi/4})$.} (where $k$ is the number of logical qubits), then the corresponding logical state family $\ket{\overline{\phi}}$ exhibits strong LRM.
\end{thm}

Particularly for 2D, $\{V^\dagger\ket{0^k}\mid V\in \mc{C}_k^{(2)}\}$ is the set of $k$-qubit stabilizer states and Theorem \ref{thm:LRM_from_TSC} implies that any logical nonstabilizer state exhibits LRM.

With many details and rigorous proofs given in Appendix~\ref{app:LRM_TSC}, here we highlight the key physical mechanism.
Heuristically, shallow circuits capture ``smooth'', topologically trivial deformations that do not alter global phase properties. 
A physical insight behind the theorem is that erasing the magic of certain logical states requires deforming the underlying TSC into a distinct phase, which is beyond the capability of shallow circuits.

More concretely, suppose that a shallow circuit could erase the magic of a code state $\ket{\overline{\phi}}$, i.e.~map it to a stabilizer state.
This shallow circuit induces a deformed code that remains in the same phase and contains a stabilizer code state.
As a critical lemma, we prove that all such deformed codes must themselves be a TSC (see Appendix~\ref{app:proof_of_LRM_from_TSC}).
Note that, despite its conceptual simplicity, making this statement rigorous entails substantial insight and effort.
These insights also underpin the LRM phase results later (Section~\ref{sec:LRMphase}).

Next, observe that these
shallow circuits induce fault-tolerant operations mapping between the original and deformed codes,  bridging our problem to the rich theory of fault-tolerant logical gates. Since both codes are TSCs as demonstrated above, we can apply the fundamental result of Bravyi and König on fault-tolerant gates of TSCs~\cite{Bravyi2013Classification}, which states that the Clifford hierarchy level of shallow-circuit-implementable logical actions is bounded by the spatial dimension. It follows that our initial assumption cannot hold except for a restricted, discrete set of $\ket{\overline{\phi}}$'s (corresponding to the Clifford hierarchy restriction), implying that any state outside this set exhibits LRM, which yields the theorem statement.

To promote this to \emph{strong} LRM, we make every step of the above argument error-robust. In particular, we establish the following robust extension of the Bravyi--König theorem (detailed proof in Appendix~\ref{app:robust_bk}) which is potentially of independent interest in QEC and fault tolerance.

\begin{thm}[Robust Bravyi--König]
Let $\Pi_1=J_1J_1^\dagger$ and $\Pi_2=J_2J_2^\dagger$  be the code projectors of two $D$-dimensional ($D\ge2$) TSCs defined on the same lattice, each encoding $k$ logical qubits.
If a shallow circuit $U$ satisfies $\big\|U\Pi_1U^{\dagger}-\Pi_2\big\|_2=o(1)$, then the induced logical unitary satisfies
\begin{equation}
\min\nolimits_{Q\in\mc{C}_k^{(D)}}\bignorm{J_2^\dagger UJ_1-Q}_2=o(1).
\end{equation}
\end{thm}

To provide more intuitive insight into how LRM emerges from topology as unveiled by Theorem~\ref{thm:LRM_from_TSC}, we present an explicit proof for a basic {toric code} $\ket{\overline{T0}}$ example in Appendix~\ref{app:t0}.
Interestingly, this example exhibits a state-preparation task achievable by the ``shallow circuit then Clifford'' model but not the ``Clifford then shallow circuit'' model; as a related insight, LRM can be generated by only one non-Clifford gate (first apply a $T$ gate and then perform the Clifford encoding for the toric code) rather than a macroscopic amount as one may naturally expect.
We also note an intriguing property of topological nonstabilizerness of TSCs: the erasure of any union of contractable regions on the torus does not reduce the amount of nonstabilizerness in $\ket{\overline{T0}}$ since it is correctable by Clifford recovery operations.

\subsection*{Logical gate and symmetry testing}

The above connection between magic erasure and logical gates not only helps establish LRM, but also leads to interesting applications to the understanding of logical operations (which also correspond to code symmetries) which is crucial to quantum computing and physics,  as we will now demonstrate.

Here we consider transversal gates, the most important and simple setting for fault-tolerant logical operations.
As a variant of the above definition of SRM/LRM, we say an $n$-qubit state $\ket{\psi}$ has $\mathrm{SRM}_0$ (or $\mathrm{LRM}_0$) if it can (or cannot) be mapped to a stabilizer state by transversal unitaries taking the form of a tensor product of single-qubit unitaries.
Notably, $\mathrm{LRM}_0$ essentially captures local-basis-independent magic, which merits separate study.

Now consider an $[\![n,k]\!]$ stabilizer code: if a $k$-qubit unitary $U$ admits a transversal implementation, then for any $k$-qubit state $\ket{\phi}\in\{U\ket{S}:\ket{S}\text{ a $k$-qubit stabilizer state}\}$, the corresponding $n$-qubit logical state $\ket{\overline{\phi}}$ has $\mathrm{SRM}_0$ because its magic can be removed by the transversal implementation of logical $U^\dagger$.
Consequently, the $\mathrm{LRM}_0$-ness of $\ket{\overline{\phi}}$ rules out any transversal implementation of $U$.

Crucially, we can formulate a simple test for $\mathrm{LRM}_0$ based on Pauli expectation values.
For any region $R\subset[n]$, the quantity
\begin{equation}
f(\ket{\psi},R)=\sum_{P\in\mc{P}_n^+:\,\supp(P)=R} \big|\bra{\psi}P\ket{\psi}\big|^2
\end{equation}
is invariant under transversal unitaries (Lemma~\ref{lemma:pauli_support_invariance} in Appendix~\ref{app:logical_gate_testing}), where $\mc{P}_n^+=\{\mbb{I},X,Y,Z\}^{\otimes n}$.
Since for any stabilizer state $\ket{S}$ we have $\bra{S}P\ket{S}\in\{-1,0,1\}$, it follows that $f(\ket{S},R)$ must be an integer.
Therefore, if for some region $R$, $f(\ket{\psi},R)$ is non-integer, then $\ket{\psi}$ exhibits $\mathrm{LRM}_0$.

Combining the above insights, we obtain the following result (detailed proof in Appendix~\ref{app:logical_gate_testing}).

\begin{thm}[Transversal gate/symmetry testing (informal)]
Consider an $[\![n,k]\!]$ stabilizer code. Let $U$ be a $k$-qubit unitary. If for some $\ket{\phi}\in\{U\ket{S}:\ket{S}\text{ is a $k$-qubit stabilizer state}\}$, the $n$-qubit code state $\ket{\overline{\phi}}$ satisfies
\begin{equation}\label{eq:transversal_gate_testing_main}
\sum_{P\in\mc{P}_n^+:\,\supp(P)=R} 
\big|\bra{\overline{\phi}}P\ket{\overline{\phi}}\big|^2\notin \mathbb{Z},
\end{equation}
for some subset $R\subset[n]$, then $\ket{\overline{\phi}}$ has $\mathrm{LRM}_0$, and logical $U$ does not admit any strictly transversal implementation.
\end{thm}

This criterion is simple and widely applicable. It only entails the computation of Pauli expectation values on code states, does not require any geometric or structural assumptions (such as CSS structure or specific gates), and it applies to both asymptotic and finite-size codes. Notably, it enables us to handle general-dimensional cases and codes with long-range check operators, yielding results that are not accessible via existing logical gate results such as the Bravyi--König bound~\cite{Bravyi2013Classification}.
Here we give two representative examples:
\begin{exmp}
All $D$-dimensional ($D\ge3$) toric codes~\cite{Kubica2015Unfolding} do not admit strictly transversal  $T$ gate on any logical qubit.
\end{exmp}
\begin{exmp}
The $[\![144,12,12]\!]$ gross code~\cite{Bravyi2024High,yoder2025tourgrossmodularquantum} does not admit strictly transversal $T$ gate on any logical qubit.
\end{exmp}
Detailed derivations and proofs can be found in Appendix~\ref{app:logical_gate_testing}.
They showcase applications to asymptotic and finite-size codes respectively, and remarkably, neither result follows from known bounds including Bravyi--König.

\section{Topological order and long-range magic phases of matter}\label{sec:LRMphase}

Fundamentally characterized by long-range entanglement, topological order represents an intrinsically quantum notion of phase of matter and has been of central interest in modern quantum many-body physics. A major motivation for our LRM theory is to incorporate computational complexity into the understanding and classification of quantum phases. This leads to a natural question: are there phases in which every ground state exhibits LRM throughout the phase, thereby intrinsically harboring robust quantum computational resources?

We answer this in the affirmative and demonstrate that LRM, as a strictly stronger notion than long-range entanglement, provides a novel lens into phases of matter, in particular yielding a sharp dichotomy between LRM and non-LRM topological orders (see Table~\ref{tab:LRE_LRM_phase} for representative examples).

\begin{table}[htbp]
\centering
\resizebox{0.49\textwidth}{!}{
\begin{tabular}{c|c|c|c}
\hline
\hline
Examples &  Ground states & LRE phase & LRM phase \\
\hline
\hline
Repetition code & Some SRE & \xmark & \xmark \\
\hline
Toric code & All LRE, some SRM & \cmark & \xmark \\
\hline
Doubled Fibonacci & All LRM & \cmark & \cmark \\
\hline 
\end{tabular}
}
\caption{
Phase classification by entanglement and magic.
Short-range entangled (SRE) phases (e.g., repetition code) contain at least one SRE ground state, and long-range entangled (LRE) phases consist entirely of LRE ground states and are closely related to the notion of topological order. LRE phases are further stratified into subclasses distinguished by magic: non-LRM phases (e.g., toric code, 2D Abelian topological orders) admit at least one stabilizer ground state, whereas LRM phases (e.g., doubled Fibonacci, non-Abelian orders) contain only LRM ground states.
}
\label{tab:LRE_LRM_phase}
\end{table}

To formalize this new classification, note that topological orders admit local Hamiltonian realizations, such as the quantum double~\cite{KITAEV2003FaultTolerant} and string-net~\cite{Levin2005StringNet} models, which characterize these phases through their ground spaces. These Hamiltonians are typically defined on generic qudits, for which the corresponding stabilizer formalism becomes more intricate than in the standard qubit setting (see Appendix~\ref{app:qudit_stab_formalism} for details).
Furthermore, since the stabilizer formalism and magic theory depend on the dimension and tensor structure of the Hilbert space, studying magic in topological orders requires explicitly specifying the \emph{local configuration}---a tuple of integers $(q_1,\cdots,q_m)$---such that each local site on which the Hamiltonian acts is associated with a Hilbert space $\bigotimes_{j=1}^m \mathbb{C}^{q_j}$ (integers $q_j$ need not be prime).
For instance, a qubit corresponds to $q_1=2$, while a 6-dimensional qudit could have a configuration of $q_1=6$ or $(q_1,q_2)=(2,3)$.

\begin{defn}\label{def:LRM_SRM_phase}
Given some local configuration, we call a topological order a \emph{(strong) LRM phase}
if, for every compatible local Hamiltonian realization, all ground states exhibit (strong) LRM\footnote{\label{fn:depth}In this definition, (strong) LRM is discussed with respect to geometrically local circuits, following physics conventions; however, the (strong) LRM phase results here naturally hold for shallow circuits with all-to-all connectivity. Moreover, the circuit-depth lower bound for eliminating magic can be strengthened from super-constant to $\Omega(\log n)$ in the all-to-all-connectivity setting, and to $\Omega(\text{system size})$ in the geometrically local setting, which are considered particularly strong bounds.}.
\end{defn}

Note that ``LRM phase" is a stronger notion than the emergence of LRM in TSCs discussed in Section~\ref{sec:LRM_qec}. While certain LRM logical state families emerge from TSCs due to their topological protection, the code themselves are defined by commuting Pauli operators so that they necessarily admit at least one stabilizer state within their code/ground space. That is, topological orders that are realizable by TSCs cannot constitute LRM phases. For example, when considering local configuration $q\in\mbb{Z}$, the $\mathbb{Z}_q$ topological order can be realized by the $\mathbb{Z}_q$ toric code, so it is not considered an LRM phase.

Remarkably, we establish the converse implication: any topological order that is not an LRM phase admits a TSC realization.
This yields a profound correspondence between LRM phases and the impossibility of a stabilizer realization, leading to the following theorem:

\begin{thm}\label{thm:LRM_phase_strongLRM_phase}
Given a certain local configuration:
\begin{enumerate}[i)]
\item A topological order is an LRM phase iff it cannot be realized by any TSC with this local configuration.
\item If a topological order cannot be approximated up to $o(1)$ error in projectors' distance by any TSC with this local configuration, then it is a strong LRM phase.
\end{enumerate}
\end{thm}

Again, despite their simple and general form, the rigorous proofs of these results are nontrivial and have interesting consequences as we will demonstrate. We sketch the key insights here and relegate the full proofs to  Appendix~\ref{app:TopologicalOrder}.

For (i), the key is to show that any topological order possessing a stabilizer ground state $\ket{S}$ (since shallow circuits do not alter phase, local Hamiltonians with SRM ground states can be turned into local Hamiltonians with stabilizer ground states) admits a TSC realization.
According to the TQO condition~\cite{LiebRobinson2006Bravyi,Bravyi2010TQO}, the ground space of a topological order constitutes a QEC code with a macroscopic distance, so all ground states are locally indistinguishable. Consequently, the local properties of a single ground state fully characterize the entire ground space.
Specifically, the ground space consists entirely of states that are locally indistinguishable from $\ket{S}$, forcing it to be a TSC governed exactly by the local stabilizers of $\ket{S}$. This argument also underlies the proof of Theorem~\ref{thm:LRM_from_TSC}.

Then for (ii), we need to further invoke robustness and show that any topological order containing a ground state $o(1/n)$ close to a stabilizer state $\ket{S}$ in trace distance admits a TSC realization with $o(1)$ error.
Crucially, such robustness is naturally guaranteed by the constant energy gap of conventional topological orders: the TSC defined by local stabilizers of $\ket{S}$ has a basis with $o(1)$ excitation energy relative to the local Hamiltonian, thereby ensuring the proximity of the TSC to the ground space.

\subsection*{Dimension criterion for strong LRM phases}

Remarkably, Theorem~\ref{thm:LRM_phase_strongLRM_phase} yields a concrete and practical condition for identifying strong LRM phases based simply on the ground space degeneracy or dimension of the topological orders.

\begin{cor}\label{corollary:dimension_mismatch}
Given a local configuration $(q_1,\cdots,q_m)$, if the ground space degeneracy of a topological order contains a prime factor that does not divide $\prod_{i=1}^mq_i$, then it constitutes a strong LRM phase.
\end{cor}

This dimension criterion immediately allows us to identify representative strong LRM phases.
For a 2D bosonic topological order on a closed surface of genus $g$, the ground state degeneracy is given by~\cite{Barkeshli2009Structure,Kong2020Relation}
\begin{equation}\label{eq:degeneracy_formula}
\Big(\sum_{i=0}^{N-1} d_i^2 \Big)^{g-1} \sum_{i=0}^{N-1} d_i^{-2(g-1)},
\end{equation}
where $d_0,\cdots,d_{N-1}$ are the quantum dimensions of the anyons in the theory.
For instance, consider the doubled Fibonacci topological order, which can be realized by a Levin–Wen Hamiltonian on qubit systems~\cite{KOENIG20102707TuraevViro}.
For the qubit configuration, Corollary~\ref{corollary:dimension_mismatch} implies that a topological order with the ground space dimension that is not a power of two is a strong LRM phase.
The doubled Fibonacci topological order possesses anyons with quantum dimensions $1,\tau,\tau,\tau+1$~\cite{KOENIG20102707TuraevViro} where $\tau = (1+\sqrt{5})/2$ is the golden ratio. According to \eqref{eq:degeneracy_formula}, the ground space degeneracy on a genus-$g$ closed surface is given by $5^{g-1}(\tau^{g-1}+\tau^{1-g})^{2}$ (which yields 4, 25, 225, 2500 for $g=1, 2, 3,4$, respectively).
Since the term $(\tau^{g-1}+\tau^{1-g})^{2}$ is always an integer (see Appendix~\ref{app:number_theory_lemmas}), the degeneracy for $g \ge 2$ contains 5 as a prime factor.
Hence, Corollary~\ref{corollary:dimension_mismatch} immediately indicates:
\begin{exmp}
Qubit doubled Fibonacci topological order living on a closed surface of genus $g\ge2$ is a strong LRM phase.
Consequently, for the standard Levin--Wen Hamiltonian realization of this order on such manifolds, every family of ground states exhibits LRM.
\end{exmp}

As another important example, consider the $S_3$ quantum double model, which has eight anyon types with quantum dimensions $1,1,2,3,3,2,2,2$~\cite{Chen2025universal}. Applying \eqref{eq:degeneracy_formula}, the genus-$g$ ground space degeneracy is $2\cdot6^{2g-2}+4\cdot3^{2g-2}+2\cdot2^{2g-2}$. For $g\ge2$, it has a prime factor that does not divide 6 (see Appendix~\ref{app:number_theory_lemmas}), so by Corollary~\ref{corollary:dimension_mismatch} we obtain:
\begin{exmp}
The $S_3$ quantum double canonically defined on qudits of dimension $\abs{S_3} = 6$ and a closed surface of genus $g\ge2$ is a strong LRM phase.
\end{exmp}

\subsection*{Non-Abelian topological orders and intrinsic LRM phases}

While the presence of magic in a topological order generally depends on the specific choice of the local Hilbert space, some phases have entanglement structures that are fundamentally incompatible with stabilizer states.
To capture this strong notion, we call a topological order an \emph{intrinsic LRM phase} if it constitutes an LRM phase for \emph{any} local configuration. 
By Theorem~\ref{thm:LRM_phase_strongLRM_phase}, a topological order is an intrinsic LRM phase iff it cannot be realized by any TSC (regardless of the local configuration).

To illustrate the relevance of this distinction, consider the double semion phase. On qubits, it constitutes an LRM phase given the standard belief that it does not admit a TSC realization; this follows from the understanding that all translation-invariant 2D qubit TSCs belong to the same phase as decoupled copies of toric codes~\cite{Bombin2012Universal,Bombín2014Structure,Haah2021Classification}. However, interestingly, the double semion phase is known to admit a TSC realization when defined on 4-dimensional qudits~\cite{Ellison2022Pauli}. Because its magic can be resolved by tuning the local dimension, it is not considered an intrinsic LRM phase.

We point out that intrinsic LRM phases naturally arise from non-Abelian topological orders. For 2D topological orders, existing arguments relate the possibility of  TSC realizations to Abelian versus non-Abelian nature of the order~\cite{Potter2016Symmetry,Ellison2022Pauli}, by which we conclude:
\begin{rem}
Any 2D non-Abelian topological order constitutes an intrinsic LRM phase, whereas any 2D Abelian topological order that admits a gapped boundary does not.
\end{rem}

Crucially, since the concept of intrinsic LRM phases can be defined for topological orders with connectivity more general than 2D, it extends the Abelian/non-Abelian classification beyond 2D topological orders, encompassing higher-dimensional topological orders~\cite{WalkerWang2012,Tian2018Classification,Tian2019Classification} and orders with general connectivity~\cite{rakovszky2023physicsgoodldpccodes,rakovszky2024physicsgoodldpccodes,deroeck2024ldpcstabilizercodesgapped,placke2024topologicalquantumspinglass}.

\section{Quantum complexity of long-range magic and NLTM conjecture}

We now shift to a complexity-theoretic perspective.
A critical observation is that any SRM state family $\{\ket{\psi_n}=U_n\ket{S_n}\}$ admits a succinct classical description specified directly by the shallow circuit $U_n$ and the stabilizer tableaux of $\ket{S_n}$.
Furthermore, SRM ensures efficient classical tractability of all local information: any $O(1)$-body reduced density matrix (hence expectation values of local Hamiltonians) can be computed in polynomial time by tracking the backward lightcone of $U_n$ and evaluating the corresponding reduced state on $\ket{S_n}$.
This observation unveils a fundamental connection between our theory and the quantum probabilistically checkable proofs (qPCP) program, a central open problem in quantum complexity theory (see the appendix and e.g.~Ref.~\cite{aharonov2013quantumpcpconjecture} for background),
indicating that SRM states can serve as classical witnesses for qPCP Hamiltonians. In particular, the recent landmark progress towards qPCP known as  the NLTS theorem~\cite{freedman2013quantumsystemsnonkhyperfinitecomplexes,Anshu2023NLTS} rules out shallow-circuit-type low-energy proofs, while SRM states substantially broaden the landscape of such proofs that must be altogether precluded in order to prove qPCP.
Based on this, we formulate the following no low-energy trivial magic (NLTM) conjecture, which is a necessary consequence of qPCP (detailed in Appendix~\ref{app:NLTM_qPCP}) strictly stronger than NLTS and would bring us significantly closer to qPCP (also mentioned in a talk~\cite{Nirkhe_2023}):

\begin{conj}[NLTM]
There exists a constant $\epsilon>0$ and a family of $O(1)$-local Hamiltonians $\{H^{(n)}=\sum_{i=1}^{m^{(n)}}H_i^{(n)}\}$ acting on $n$ qubits with $m^{(n)}=\Theta(n)$ and $\|H_i^{(n)}\|_\infty\le1$, such that any family of states $\{\ket{\psi_n}\}$ satisfying $\bra{\psi_n}H^{(n)}\ket{\psi_n}\le\lambda_{\min}(H^{(n)})+\epsilon n$ exhibits LRM.
\end{conj}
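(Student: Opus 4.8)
Since this is a conjecture rather than a settled theorem, I outline the strategy I would pursue. The plan is to construct the NLTM Hamiltonian family in two layers: an ``entanglement backbone'' supplying NLTS-type robustness, and a topologically protected ``magic injector.'' For the backbone I would reuse the commuting code Hamiltonians of good quantum LDPC codes underlying the NLTS theorem~\cite{Anshu2023NLTS}; these already satisfy the $\mc{O}(1)$-local, commuting, $m^{(n)}=\Theta(n)$, $\|H_i^{(n)}\|_\infty\le 1$ requirements and guarantee that any state of energy below $\lambda_{\min}+\epsilon n$ has superconstant (indeed $\Omega(\log n)$) circuit complexity, hence long-range entanglement. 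The reason NLTS does not already imply NLTM is that these are \emph{stabilizer}-code Hamiltonians, so their ground spaces contain stabilizer states and therefore fail to have LRM (the corollary to Theorem~\ref{thm:topological_order}). A genuinely non-stabilizer ingredient must be added, and---as adding single-qubit non-Clifford-favoring terms or concatenating with a constant-size magic code both visibly fail (constant-size magic is SRM)---it has to encode the magic at the global scale.

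I would try two routes for the magic injector, in parallel with the backbone. \textbf{(i) Non-stabilizer commuting-projector codes.} Use a Turaev--Viro / string-net-type commuting Hamiltonian---or, more generally, any code whose logical dimension is not a power of $2$---but defined on expanding geometry rather than a fixed-dimensional lattice, so that it inherits the macroscopic distance needed for an NLTS-type argument. Theorems~\ref{thm:topological_order} and~\ref{thm:strong_LRM_by_dim} already establish that the \emph{ground} states of such models have (strong) LRM when they are not realizable by a TSC or have non-power-of-$2$ degeneracy; the task is to upgrade this from the ground space to the whole $\epsilon n$-energy window, by combining the local-indistinguishability (TQO) argument of those theorems with the NLTS complexity lower bound. \textbf{(ii) Good qLDPC codes with a locked-in logical magic state.} Take a good qLDPC code carrying a constant-depth (e.g.\ transversal) non-Clifford logical gate, and engineer a local, commuting ``pinning'' gadget---a scalable analogue of the stabilizer picture in Fig.~\ref{fig:top_fig}(a)---that fixes the logical qudits into a non-stabilizer state. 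One would then invoke a Bravyi--König-type statement (Lemma~\ref{lemma:B-K}) for the relevant connectivity, to the effect that any constant-depth circuit mapping this code onto another LDPC code induces only a logical Clifford, thereby forbidding Cliffordization of the pinned logical magic.

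The main obstacle is to obtain \emph{both} ingredients for a single family at once: (a) an NLTS-type statement---robustness of high complexity to $\mc{O}(1)$ energy density---and (b) a Bravyi--König-type ``no constant-depth Cliffordization'' statement that applies not only to the exact ground space but to the entire $\epsilon n$-energy window, all while keeping the Hamiltonian commuting, $\mc{O}(1)$-local, with $\Theta(n)$ terms and a constant energy-density gap. The tension is structural: the known NLTS constructions are stabilizer codes, hence magic-free, whereas the known LRM constructions are either geometrically local with $\mc{O}(1)$ logical qubits (and thus nowhere near satisfying NLTS) or are non-stabilizer codes whose ``goodness''---linear distance on expanding geometry---has not been established. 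Closing this gap---producing a \emph{good} non-stabilizer (or magic-pinned) code, and extending the local-indistinguishability argument for LRM from ground states to low-energy states---is where I expect the real difficulty, and I would anticipate partial progress first: NLTM for a relaxed notion of ``low energy,'' or conditional on the existence of good non-Abelian or non-power-of-$2$-dimensional codes.
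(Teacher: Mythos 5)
You are right to treat this statement as an open conjecture: the paper does not prove it, so your outline should be judged as a research program rather than a proof. On that footing, your plan is essentially the paper's own. Your route (i) — taking a non-stabilizer commuting-projector (Turaev--Viro/Fibonacci-type) model, whose ground states already have (strong) LRM by Theorems~\ref{thm:topological_order} and~\ref{thm:strong_LRM_by_dim}, and porting it to expanding/general connectivity so that the local-indistinguishability argument extends from the exact ground space to the whole $\epsilon n$-energy window — is precisely the strengthening the paper itself suggests (its Fibonacci result is explicitly billed as ``no lowest-energy trivial magic,'' to be upgraded ``by generalizing Fibonacci-type topological orders to systems with general connectivity''). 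Your diagnosis of why NLTS does not already give NLTM (the NLTS Hamiltonians are stabilizer codes, so their ground spaces contain SRM witnesses) and why constant-size magic gadgets fail also matches the paper's framing of NLTM as strictly stronger than NLTS.

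What your proposal does not engage with is the one thing the paper actually proves about this conjecture: Proposition~\ref{prop:NLTM_qPCP-app} in Appendix~\ref{app:NLTM_qPCP}, adapted from \cite{Nirkhe2022Lower}, showing that NLTM is a necessary consequence of the quantum PCP conjecture together with $\mathrm{NP}\neq\mathrm{QMA}$ — the point being that an SRM witness $U_n\ket{S_n}$ lets a classical verifier compute $\bra{\psi_n}H^{(n)}\ket{\psi_n}$ exactly in polynomial time, so low-energy SRM states for a QMA-hard PCP Hamiltonian would put the problem in NP. If any rigorous statement is available today, it is that conditional one; your unconditional two-layer construction runs into exactly the structural tension you name (good codes are stabilizer; non-stabilizer codes with macroscopic distance and the right degeneracy on expander-like geometry are not known), and the paper supplies no tools beyond what you already cite for closing it. One caution on your route (ii): Lemma~\ref{lemma:B-K} is proved only for $D$-dimensional topological stabilizer codes on lattices, so the ``Bravyi--König-type statement for the relevant connectivity'' you invoke for good qLDPC codes is itself an open problem, not an available lemma, and the same caveat applies to extending the TQO/local-indistinguishability machinery of Appendix~\ref{app:TopologicalOrder} beyond geometric locality.
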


A key conceptual advance of NLTM is that it unifies disparate types of efficient witnesses originating from trivial entanglement and classical simulation algorithms (see also Refs.~\cite{gharibianlegall,coble2023hamiltonians}), thereby elevating them into a more coherent condition for the qPCP conjecture.
Hence, the investigation of NLTM is expected to yield more solid evidence and provide new insights and techniques for qPCP.
Our LRM phase results above can be interpreted as ``no lowest-energy trivial magic" theorems and point to a compelling route towards a full resolution of the NLTM conjecture: namely, to extend LRM phases, e.g.,~non-Abelian topological orders, to general connectivity, in light of the role of good qLDPC codes in proving NLTS~\cite{Anshu2023NLTS}.

We further offer a baseline discussion from gate and learning complexity perspectives, which are of independent interest and merit further study.
First note that SRM states have $O(n^2)$ gate complexity: any $n$-qubit stabilizer state can be prepared using $O(n^2)$ gates~\cite{Aaronson2004Improved,Nash2020optimizations} and the additional shallow circuit contributes at most $O(n)$ more gates.
Therefore, states with $\omega(n^2)$ gate complexity must exhibit LRM.
An immediate corollary is that almost all quantum states exhibit LRM since the typical gate complexity is $\Omega(2^n)$~\cite{Plesch2011quantum,sun2023asymptotically} with probability $\rightarrow1$ as $n\rightarrow\infty$.
A more concrete random construction of LRM states can be realized by applying 2-qubit Haar-random gates at random locations in an all-to-all connected  system. In this setting, the gate complexity of the resulting state grows linearly with the number of gates~\cite[Corollary 1.9]{chen2024incompressibility}, and applying $\omega(n^6)$ such random gates generates a state with gate complexity $\omega(n^2)$ (hence LRM) with probability approaching $1$.
Another consequence of the $O(n^2)$ complexity is that all SRM states are sample-efficiently learnable: they can be reconstructed to $\epsilon$-precision with high probability using $\widetilde{O}(n^2\epsilon^{-2})$ copies~\cite{Zhao2024Learning}.

\section{Diagnosing long-range magic with magical correlation}

Having discussed various theoretical perspectives, a natural practical question is whether one can easily diagnose LRM. 
Here, we present a powerful scheme for diagnosing LRM (which applies to arbitrary connectivity) based on two-point ``magical'' correlations and explicitly demonstrate it on two prominent entangled state families.

\begin{figure}[t]
\centering
\includegraphics[width=0.33\textwidth]{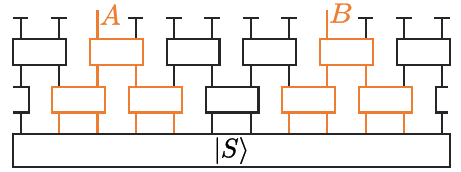}
\caption{
Correlations in SRM systems. For two distant subsystems $A$ and $B$ with nonintersecting backward lightcones, the bipartite reduced state can only exhibit correlations mediated by a stabilizer state $\ket{S}$. This constraint can be used to diagnose LRM. (Here we depict the 1D case for simplicity but the same reasoning applies to arbitrary connectivity.)
}
\label{fig:SRM_lightcone}
\end{figure}

More specifically,
observe that separated pairs of qubits in an SRM state can only exhibit rigid, stabilizer-type correlations (see Fig.~\ref{fig:SRM_lightcone}), which leads to the following theorem (detailed proof given in Appendix~\ref{app:SRM_2qubits_cor_by_EPRs}).
\begin{thm}\label{thm:SRM_2qubits_cor_by_EPRs}
For arbitrary connectivity, suppose $\{\ket{\psi_n}\}$ is an SRM state family given by $U_n\ket{S_n}$ where $U_n$ is a shallow circuit, then for any two qubits sufficiently well separated (such that their backward lightcones specified by $U_n$ do not intersect), the reduced state of these two qubits can be obtained by local operations on $O(1)$ EPR pairs.
\end{thm}

Consequently, we can certify LRM by identifying behaviors of correlation functions incompatible with the EPR constraints which can be concretely characterized. 

For illustration, we first examine the canonical hypergraph state family $\ket{\mathrm{C}^{n-1}Z}:=\mathrm{C}^{n-1}Z\ket{+^n}$ with $\mathrm{C}^{n-1}Z=\operatorname{diag}(1,\cdots,1,-1)$ being the $n$-qubit multi-controlled-$Z$ gate, which is of major interest in quantum computing~\cite{Kim2022iToffoli,Nguyen2024Floquet,Bluvstein2024Logical,Wang_2024,wei2024noise,PhysRevResearch.3.013118,WillsHsiehYamasaki2025ConstantOverhead}.   

\begin{exmp}
$\ket{\mathrm{C}^{n-1}Z}$ family exhibits LRM.
\end{exmp}

The proof goes specifically as follows, with an illustration in Fig.~\ref{fig:bc_correlation} and complete technical details in Appendix~\ref{app:LRM_from_correlation}.
First, we show that a two-qubit state $\rho$ with $\langle X\otimes X\rangle_\rho\equiv b$ and $\langle X\otimes \mbb{I}\rangle_\rho=\langle \mbb{I}\otimes X\rangle_\rho \equiv c$ can be generated by local operations on $K$ EPR pairs  
if and only if $(b,c)$ satisfies a nontrivial set of constraints corresponding to $K$ (a special case of Proposition~\ref{prop:correlations_EPR} in Appendix~\ref{app:LRM_from_correlation}). We illustrate the $K=2,3$ cases in  Fig.~\ref{fig:bc_correlation}:  the feasible regions of $(b,c)$ given by Proposition~\ref{prop:correlations_EPR} are highlighted in blue.
Now, consider the correlations in $\ket{\mathrm{C}^{n-1}Z}$:  for any two qubits $\{k,l\}\subset[n]$ of $\ket{\mathrm{C}^{n-1}Z}$, we have $\langle X_kX_l\rangle:=\bra{\mathrm{C}^{n-1}Z}X_kX_l\ket{\mathrm{C}^{n-1}Z}=1-2^{2-n}$ and  $\langle X_k\rangle=\langle X_l\rangle=1-2^{2-n}$.
That is, any two-qubit subsystem has $(b,c)=(1-2^{2-n},1-2^{2-n})$, which lie on the orange lines and moves towards the top-right corner as $n$ increases in Fig.~\ref{fig:bc_correlation}.  
It is straightforward to see that for any $K$, sufficiently large $n$ takes $(1-2^{2-n},1-2^{2-n})$ outside the feasible region.
It follows that the corresponding two-qubit reduced states cannot be obtained from finitely many EPR pairs by local operations. By Theorem~\ref{thm:SRM_2qubits_cor_by_EPRs}, this indicates that $\ket{\mathrm{C}^{n-1}Z}$ has LRM.

\begin{figure}[t]
\centering
\includegraphics[width=0.48\textwidth]{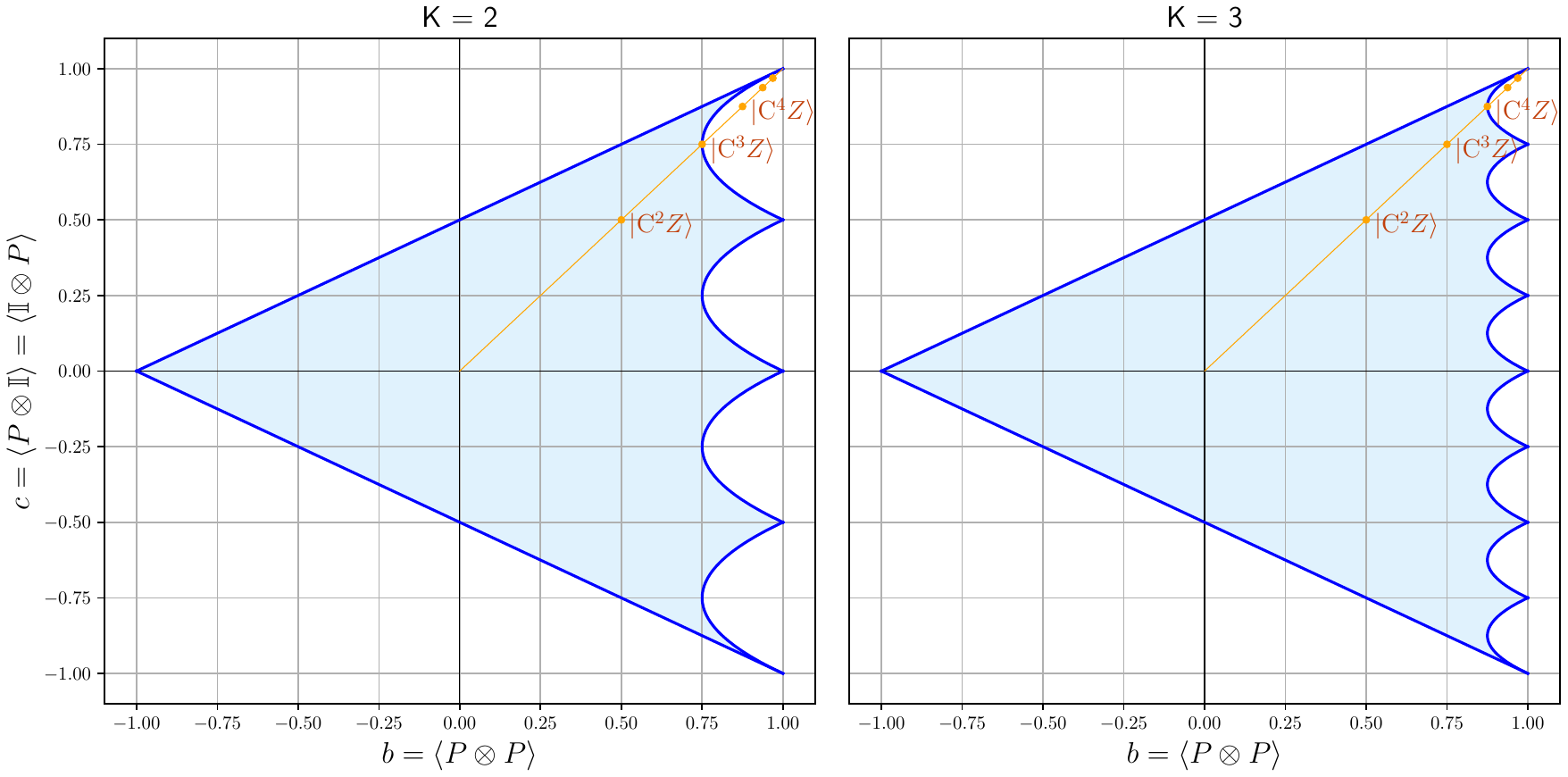}
\caption{Diagnosing LRM with two-point correlation functions. 
Here we plot the $K=2,3$ cases for illustration; the general cases are analogous.
$P\in\{X,Y,Z\}$ is a single-qubit Pauli operator.
The blue areas enclosed by dark blue boundaries are ``feasible regions" of $(b,c)$ given by Proposition~\ref{prop:correlations_EPR} in Appendix~\ref{app:LRM_from_correlation} (the boundaries consist of two linear functions and $2^K$ quadratic functions):
if a two-qubit state $\rho$ has $(b,c)$ lying outside the feasible region for $K$, then $\rho$ cannot be obtained by local operations on $K$ EPR pairs.
For the $\ket{\mathrm{C}^{n-1}Z}$ example,
choosing $P=X$, the corresponding $(b,c)=(1-2^{2-n},1-2^{2-n})$ lie on the orange line $b=c$ and exit the feasible region for any fixed $K$ upon increasing $n$, signifying LRM.}
\label{fig:bc_correlation}
\end{figure}

Another simple example is the generalized GHZ state family $\left\{\sqrt{1-\alpha}\ket{0^n}+\sqrt{\alpha}e^{i\theta}\ket{1^n}\right\}$. By straightforward calculation, $\langle Z_kZ_l\rangle=1$ and $\langle Z_k\rangle=1-2\alpha$ for all positions $\{k,l\}\subset[n]$ and all values of $\theta$. When $1-2\alpha$ is not a finite-length binary decimal, $(1,1-2\alpha)$ falls outside the feasible region (also see Fig.~\ref{fig:bc_correlation} for illustration) for arbitrary $K$, implying that the corresponding family exhibits LRM.

\section{Concluding remarks}

Driven by the envisaged importance of studying complex quantum many-body systems through a computational lens, we developed the concept of LRM and extensively explored its properties and manifestations leveraging connections with diverse areas including quantum error-correcting codes and condensed matter.  We showcased the fundamental importance of LRM across quantum computing, complexity theory, and many-body physics, unveiling various substantive avenues for enriching and advancing these fields and their interactions via LRM theory.

There are various compelling directions for future research. 
From the perspective of LRM state and phase theory itself, potentially important future work includes e.g.~developing quantitative characterizations or signatures (a natural possibility being magic variants of topological entanglement entropies), setting explicit bounds on the circuit depth required to erase LRM in various scenarios, exploring the mixed-state cases, and further looking into computational features of phases of matter.
Also, the question of the general possibility of learning SRM states with polynomial computational complexity~\cite{Huang2024Learning,Landau2024Learning} is important but remains not fully understood.
Moreover, the question of how LRM relates to nonstabilizerness distribution holds both theoretical and practical interest and merits further study, which we briefly discuss in Appendix~\ref{app:LRM_and_nonlocalMagic}.
Other directions worth pursuing include: relation with measurement-assisted complexity of many-body systems~\cite{PRXQuantum.4.020339,gu2024magicinducedcomputationalseparationentanglement,du2025spacetimequantumcircuitcomplexity}; influence on nonstabilizerness noise thresholds~\cite{wei2024noise}; further applications to QEC and fault tolerance via e.g., logical gates and computational complexity; a deeper understanding of the $n^{-1}$ critical error scaling, potentially in connection with approximate QEC scenarios where similar phenomena emerge in different fundamental ways~\cite{faist20,liu2022approximate,Yi_2024}.

Looking ahead, our LRM paradigm is expected to provide new lens for further understanding the interplay between quantum complexity and physics, which has already sparked notable interest from other perspectives~\cite{gu2024magicinducedcomputationalseparationentanglement,schuster-low}.
Further, we anticipate such unified perspectives of computation, information and physics will continue to foster novel insights and even unforeseen paradigm shifts.

\begin{acknowledgments}
We thank Anurag Anshu, Soonwon Choi, Tyler Ellison, Andi Gu, Tobias Haug, Yu-Jie Liu, Zimu Li, Zhengwei Liu, Zhenhuan Liu, Natalie Parham, Yuguo Shao, Jinmin Yi and Zishuo Zhao for valuable discussion and feedback.
F.W.\ is supported by BMSTC and ACZSP (Grant No.~Z221100002722017).
Z.-W.L.\ is supported in part by NSFC under Grant No.~12475023, Dushi Program, and a startup funding from YMSC. 
We note an independent work~\cite{korbany2025longrangenonstabilizernessphasesmatter} released contemporaneously with our study, which considers relevant notions of LRM but adopts different perspectives and focuses on 1D matrix product states.
\end{acknowledgments}


%

\appendix
\onecolumngrid

\section{Strong LRM from limitations of fault-tolerant logical gates on quantum codes}\label{app:LRM_TSC}

We define two edges on a lattice (more generally, a graph) to be adjacent if they share a common vertex. We define the distance between two edges to be the length of the shortest path on the lattice connecting the two edges.
For a set of edges $E$, its \emph{diameter} $\operatorname{Diam}(E)$ is defined as the smallest integer $\delta$ such that every pair of edges in $E$ is within distance $\delta$ of each other.
For an operator $O$ acting on qubits associated with the edges, we also denote $\operatorname{Diam}(O):=\operatorname{Diam}(\supp(O))$, where $\supp(O)$ is the set of qubits on which $O$ acts nontrivially.
Note that the ``diameter" of an operator on a lattice is well defined even when the lattice cannot be embedded into any finite-dimensional manifold.
Important cases include expander graphs and  complete graphs.

\begin{fact}\label{fact:pure_locally_same_then_same_Evalue}
Suppose $n$-qubit pure states $\ket{\psi_1}$ and $\ket{\psi_2}$ satisfy $\Tr_{\overline{A}}\ketbra{\psi_1}{\psi_1}=\Tr_{\overline{A}}\ketbra{\psi_2}{\psi_2}$ for some region $A\subset[n]$, where $\overline{A}:=[n]-A$ is the complement of $A$, and observable $O$ satisfies $\supp(O)\subset A$, then $\bra{\psi_1}O\ket{\psi_1}=\bra{\psi_2}O\ket{\psi_2}$.
\end{fact}
\begin{proof}
Since $\Tr_{\overline{\supp(O)}}\ketbra{\psi_1}{\psi_1}=\Tr_{\overline{\supp(O)}}\ketbra{\psi_2}{\psi_2}$, we know
\begin{align}
\bra{\psi_1}O\ket{\psi_1}=&\Tr(\ketbra{\psi_1}{\psi_1}O)=\Tr\left(\Tr_{\overline{\supp(O)}}(\ketbra{\psi_1}{\psi_1})O|_{\supp(O)}\right)\\
=&\Tr\left(\Tr_{\overline{\supp(O)}}(\ketbra{\psi_2}{\psi_2})O|_{\supp(O)}\right)=\Tr(\ketbra{\psi_2}{\psi_2}O)=\bra{\psi_2}O\ket{\psi_2},
\end{align}
where $O|_{\supp(O)}$ denotes the restriction of $O$ to its support.
\end{proof}

\begin{lem}\label{lemma:small_energy_ground_dim}
Suppose $H$ is a Hamiltonian acting on a finite-dimensional Hilbert space $\mathcal{H}$, with eigenvalues $E_0 < E_1 < E_2 < \cdots$. Let $D_0$ be the degeneracy of the ground-state space with energy $E_0$. 
For an orthonormal state set $\{\ket{\phi_1},\cdots,\ket{\phi_t}\}$, if
\begin{equation}
\bra{\phi_i} H \ket{\phi_i}<E_0+\frac{E_1-E_0}{D_0+1}
\end{equation}
for all $i=1,\cdots,t$, then $t \le D_0$.
\end{lem}
\begin{proof}
Denote by $\Pi_0,\Pi_1,\Pi_2,\cdots$  projectors onto the eigenspaces of $H$ corresponding to $E_0,E_1,E_2,\cdots$. For $i=1,\cdots,t$ we have
\begin{align}
\bra{\phi_i}H\ket{\phi_i}=&E_0\bra{\phi_i}\Pi_0\ket{\phi_i}+\sum_{i\ge1} E_i \bra{\phi_i}\Pi_i\ket{\phi_i}\\
\ge& E_0\bra{\phi_i}\Pi_0\ket{\phi_i}+ E_1 \sum_{i\ge1}\bra{\phi_i}\Pi_i\ket{\phi_i}\\
=&E_0\bra{\phi_i}\Pi_0\ket{\phi_i}+E_1\bra{\phi_i}(\mbb{I}-\Pi_0)\ket{\phi_i}\\
=&E_1-(E_1-E_0)\bra{\phi_i}\Pi_0\ket{\phi_i}.
\end{align}
Combining with $\bra{\phi_i}H\ket{\phi_i}< E_0+\frac{E_1-E_0}{D_0+1}$ we obtain $\bra{\phi_i}\Pi_0\ket{\phi_i}>\frac{D_0}{D_0+1}$. Denote $\Pi'=\sum_{i=1}^t\ketbra{\phi_i}{\phi_i}$, we have
\begin{equation}
\Tr(\Pi'\Pi_0)=\sum_{i=1}^t\bra{\phi_i}\Pi_0\ket{\phi_i}>\frac{tD_0}{D_0+1}.
\end{equation}
On the other hand, since
\begin{equation}
\Tr(\Pi'\Pi_0)\le\norm{\Pi'}_\infty\norm{\Pi_0}_1=D_0,
\end{equation}
we know $t<D_0+1$, that is, $t\le D_0$.
\end{proof}

\subsection{Proof of Theorem~\ref{thm:LRM_from_TSC}}\label{app:proof_of_LRM_from_TSC}

We restate Theorem~\ref{thm:LRM_from_TSC} here:

\begin{thm}\label{thm:LRM_from_TSC_app}
For any $D$-dimensional ($D\ge2$) TSC family and a $k$-qubit state $\ket{\phi}$, if
\begin{equation}
\ket{\phi}\notin\big\{V^\dagger\ket{0^k}\mid V\in \mc{C}_k^{(D)}\big\},
\end{equation}
where $k$ is the number of logical qubits     ($\mc{C}_k^{(D)}:=\{U\in\operatorname{U}(2^k)\mid UPU^\dagger\in\mc{C}_k^{(D-1)},\forall P\in\mc{P}_k\}$, with the first level $\mc{C}_k^{(1)}:=\mc{P}_k$ given by the multi-qubit Pauli group, is the $D$-th level of the Clifford hierarchy on $k$ qubits), then the corresponding logical state $\ket{\overline{\phi}}$ exhibits strong LRM.
\end{thm}

\begin{proof}[Proof sketch for the $D=2$ case]
Suppose for the sake of contradiction that a shallow circuit (family) $U$ can map a logical nonstabilizer state $\ket{\overline{\phi}}$ to a stabilizer state $\ket{S}$ that possibly lies outside the code space. In this scenario, the action of $U$ deforms the original TSC to a new subspace that contains the stabilizer state $\ket{S}$.

As a critical step, we deduce that any such deformed topological code containing a stabilizer state must itself be a TSC.
Since the code is obtained from the original one by a shallow circuit, it also has growing distance (equivalently, all of its code states are locally indistinguishable), and can be described as the ground space of a local Hamiltonian. Therefore, the deformed code space is exactly the set of states that are locally indistinguishable from $\ket{S}$. We then prove that this space coincides with a stabilizer code: it is precisely the common $+1$ eigenspace of the stabilizer group generated by all local Pauli strings that stabilize $\ket{S}$.

With this in mind, $U$ acts as a shallow circuit connecting the original TSC and a deformed TSC such that the logical magic state $\ket{\overline{\phi}}$ is mapped to a state $\ket{S}$, which constitutes a logical stabilizer state in the deformed code. The impossibility of this mapping is a consequence of fundamental restrictions on shallow circuits acting between 2D TSCs, as formulated in the context of fault-tolerant logical gates by the Bravyi–König theorem~\cite{Bravyi2013Classification}. Specifically, for 2D TSCs, any shallow circuit can only induce a logical Clifford operation between the codes, making it impossible to transform $\ket{\overline{\phi}}$ (nonstabilizer) into $\ket{S}$ (stabilizer). We therefore conclude that $\ket{\overline{\phi}}$ must exhibit LRM.

We can further make all the steps above error-robust to establish the strong LRM of $\ket{\overline{\phi}}$.
\end{proof}

\begin{proof}[Proof of Theorem~\ref{thm:LRM_from_TSC_app}]
Suppose, on the contrary, that the logical state family $\{\ket{\overline{\phi}_n}\}$, where $n$ denotes the number of qubits, does not have strong LRM. 
Then there exists a sequence $T\subset\mbb{Z}_{>0}$, a family of shallow circuits $\{U_n\}_{n\in T}$, a family of stabilizer states $\{\ket{S_n}\}_{n\in T}$, such that for $n\in T$ we have
\begin{equation}
\big\|U_n\ketbra{\overline{\phi}_n}{\overline{\phi}_n}U_n^\dagger-\ketbra{S_n}{S_n}\big\|_1=o(1/n).
\end{equation}

Denote projectors onto the code space of this family of TSC as $\{\Pi^{(n)}\}$.
Denote by $\{Q_1^{(n)},\cdots,Q_{n-k}^{(n)}\}$ a set of local Pauli stabilizer generators associated with this  TSC family.
Let $d_n$ be the code distance of $\{\Pi^{(n)}\}$, which satisfies $d_n\rightarrow\infty$ as $n\rightarrow\infty$.
Let $\eta$ be a sufficiently large constant that upper bound  the light-cone sizes of $\{U_n\}$.

Let $C$ be a constant independent of $n$ that upper bounds the diameters of all deformed generators $U_nQ_i^{(n)}U_n^\dagger$.
Define $\mc{B}_n$ as the set of states that are $C$-locally indistinguishable from $U_n\ket{\overline{\phi}_n}$:
\begin{equation}
\mc{B}_n:=\Big\{\ket{\psi}\Big|\Tr_{\overline{J}}\ketbra{\psi}{\psi}=\Tr_{\overline{J}}(U_n\ketbra{\overline{\phi}_n}{\overline{\phi}_n}U_n^\dagger),\forall J\subset[n]\text{ with }\operatorname{Diam}(J)\le C\Big\}.
\end{equation}
For an arbitrary $\ket{\psi}$ satisfying $\bra{\psi}U_n\Pi^{(n)}U_n^\dagger\ket{\psi}=1$, $U_n^\dagger\ket{\psi}$ is a logical state of $\Pi_n$ and thus indistinguishable from $\ket{\overline{\phi}_n}$ when restricted to regions of size $\le d_n-1$. Since the light-cone sizes of $U_n$ are upper bounded by $\eta$, we know $\ket{\psi}$ is indistinguishable from $U_n\ket{\overline{\phi}_n}$ when restricted to regions of size $\le \lfloor(d_n-1)/\eta\rfloor$. $d_n\rightarrow\infty$ implies $\ket{\psi}\in\mc{B}_n$ when $n\in T$ is sufficiently large.
On the other hand, since all $\ket{\psi}\in\mc{B}_n$ is $C$-locally indistinguishable from $U_n\ket{\overline{\phi}_n}$ and the diameters of $U_nQ_i^{(n)}U_n^\dagger$'s are upper bounded by $C$, by Fact~\ref{fact:pure_locally_same_then_same_Evalue} we know 
\begin{equation}
\bra{\psi}U_nQ_i^{(n)}U_n^\dagger\ket{\psi}=\bra{\overline{\phi}_n}U_n^\dagger U_nQ_i^{(n)}U_n^\dagger U_n\ket{\overline{\phi}_n}=\bra{\overline{\phi}_n}Q_i^{(n)}\ket{\overline{\phi}_n}=1
\end{equation}
for $i=1,\cdots,n-k$, implying $\bra{\psi}U_n\Pi^{(n)}U_n^\dagger\ket{\psi}=1$.
Therefore, $\mc{B}_n$ is exactly the set of pure states in the deformed code $U_n\Pi^{(n)}U_n^\dagger$:
\begin{equation}
\mc{B}_n=\big\{\ket{\psi}\mid\bra{\psi}U_n\Pi^{(n)}U_n^\dagger\ket{\psi}=1\big\}.
\end{equation}

Define $\mc{A}_n$ as the set of states that are $C$-locally indistinguishable from $\ket{S_n}$:
\begin{equation}
\mc{A}_n:=\Big\{\ket{\psi}\Big|\Tr_{\overline{J}}\ketbra{\psi}{\psi}=\Tr_{\overline{J}}\ketbra{S_n}{S_n},\forall J\subset[n]\text{ with }\operatorname{Diam}(J)\le C\Big\}.
\end{equation}
Denote by $G_n\subset\mc{P}_n$  the stabilizer group of $\ket{S_n}$.
Let $r_n$ be the maximum number of independent Pauli strings in $\{g\in G_n\mid\operatorname{Diam}(g)\le C\}$. Pick such an independent set $\{P_1^{(n)},\cdots,P_{r_n}^{(n)}\}\subset\{g\in G_n\mid\operatorname{Diam}(g)\le C\}$. We must have
\begin{equation}\label{eq:P_1_to_P_r_coverall}
\{g\in G_n\mid\operatorname{Diam}(g)\le C\}\subset\langle P_1,\cdots,P_r\rangle,
\end{equation}
because otherwise we would be able to find $r+1$ independent Pauli strings in $\{g\in G_n\mid\operatorname{Diam}(g)\le C\}$.
Let
\begin{equation}
\mc{A}_n':=\big\{\ket{\psi}\mid P_i^{(n)}\ket{\psi}=\ket{\psi},i=1,\cdots,r_n\big\}=\big\{\ket{\psi}\mid\bra{\psi}\widetilde{\Pi}^{(n)}\ket{\psi}=1\big\},
\end{equation}
where $\widetilde{\Pi}^{(n)}:=\prod_{i=1}^{r_n}\frac{\mbb{I}+P_i^{(n)}}{2}$.

Now we show that $\mc{A}_n\subset\mc{A}_n'$, and $\mc{A}_n$ contains a set of $2^{n-r_n}$ orthonormal pure states (actually $\mc{A}_n=\mc{A}_n'$, as we will show later).
For $\ket{a}\in\mc{A}_n$, since $\operatorname{Diam}(P_i^{(n)})\le C$, by Fact~\ref{fact:pure_locally_same_then_same_Evalue} we know $\bra{a}P_i^{(n)}\ket{a}=\bra{S_n}P_i^{(n)}\ket{S_n}=1$ for $i=1,\cdots,r_n$. This implies $\ket{a}\in\mc{A}_n'$ and thus $\mc{A}_n\subset\mc{A}_n'$.

We take $\{P_{r_n+1}^{(n)},\cdots,P_{n}^{(n)}\}\subset G_n$ such that $\{P_1^{(n)},\cdots,P_n^{(n)}\}$ are independent. We have $\langle P_1^{(n)},\cdots,P_n^{(n)}\rangle=G_n$.
Now we omit the upper script $(n)$ for $P_i^{(n)}$ for brevity.
For $j=r_n+1,\cdots,n$, since $P_j\notin\langle P_1,\cdots,P_{r_n}\rangle$, by \eqref{eq:P_1_to_P_r_coverall} we know $\operatorname{Diam}(P_j)> C$. For $\mathbf{b}=(b_{r_n+1},\cdots,b_n)\in\{0,1\}^{n-r_n}$, consider the states
\begin{align}
\ketbra{\psi_{\mathbf{b}}^{(n)}}{\psi_{\mathbf{b}}^{(n)}}&:=\prod_{i=1}^{r_n}\frac{\mbb{I}+P_i}{2}\prod_{j=r_n+1}^n\frac{\mbb{I}+(-1)^{b_j}P_j}{2}\\
&=\frac{1}{2^n}\sum_{\mathbf{x}\in\{0,1\}^n}\prod_{i=1}^{r_n}P_i^{x_i}\prod_{j=r_n+1}^n((-1)^{b_j}P_j)^{x_j},
\end{align}
where $\mathbf{x}=(x_1,\cdots,x_n)$. These states are mutually orthogonal. In particular, we have $\ket{\psi_{\mathbf{0}}^{(n)}}=\ket{S_n}$.
Consider an arbitrary subset $J\subset[n]$ with $\operatorname{Diam}(J)\le C$.
For $\mathbf{x}\in\{0,1\}^n$ such that $(x_{r_n+1},\cdots,x_n)\neq0^{n-r_n}$, we must have 
\begin{equation}
\operatorname{Diam}\left(\prod_{i=1}^{r_n}P_i^{x_i}\prod_{j=r_n+1}^n((-1)^{b_j}P_j)^{x_j}\right)=\operatorname{Diam}\left(\prod_{i=1}^{r_n}P_i^{x_i}\prod_{j=r_n+1}^nP_j^{x_j}\right)> C,
\end{equation}
otherwise $\prod_{i=1}^{r_n}P_i^{x_i}\prod_{j=r_n+1}^nP_j^{x_j}\in\{g\in G_n\mid\operatorname{Diam}(g)\le C\}\subset\langle P_1,\cdots,P_{r_n}\rangle$, which would contradict the independence of $\{P_1,\cdots,P_n\}$. Therefore, for arbitrary $\mathbf{b}\in\{0,1\}^{n-r_n}$, when $(x_{r_n+1},\cdots,x_n)\neq0^{n-r_n}$ we have
\begin{equation}
\begin{aligned}
\Tr_{\overline{J}}\left(\prod_{i=1}^{r_n}P_i^{x_i}\prod_{j=r_n+1}^n((-1)^{b_j}P_j)^{x_j}\right)=0.
\end{aligned}
\end{equation}
Denoting $\mathbf{y}=(y_1,\cdots,y_{r_n})$, for all $\mathbf{b}\in\{0,1\}^{n-r_n}$ we have
\begin{align}
\Tr_{\overline{J}}\ketbra{\psi^{(n)}_{\mathbf{b}}}{\psi^{(n)}_{\mathbf{b}}}=&\frac{1}{2^n}\sum_{\mathbf{x}\in\{0,1\}^n}\Tr_{\overline{J}}\left(\prod_{i=1}^{r_n}P_i^{x_i}\prod_{j=r_n+1}^n((-1)^{b_j}P_j)^{x_j}\right)\\
=&\frac{1}{2^n}\sum_{\mathbf{y}\in\{0,1\}^{r_n}}\Tr_{\overline{J}}\left(\prod_{i=1}^{r_n}P_i^{y_i}\prod_{j=r_n+1}^n((-1)^{b_j}P_j)^{0}\right)\\
=&\frac{1}{2^n}\sum_{\mathbf{y}\in\{0,1\}^{r_n}}\Tr_{\overline{J}}\left(\prod_{i=1}^{r_n}P_i^{y_i}\prod_{j=r_n+1}^nP_j^{0}\right)\\
=&\frac{1}{2^n}\sum_{\mathbf{x}\in\{0,1\}^n}\Tr_{\overline{J}}\left(\prod_{i=1}^{r_n}P_i^{x_i}\prod_{j=r_n+1}^nP_j^{x_j}\right)=\Tr_{\overline{J}}\ketbra{S}{S}.
\end{align}
Therefore, we find a set of $2^{n-r_n}$ orthonormal pure states $\{\ket{\psi^{(n)}_{\mathbf{b}}}\mid\mathbf{b}\in\{0,1\}^{n-r_n}\}\subset\mc{A}_n$. By $\mc{A}_n\subset\mc{A}_n'$ we know $\dim(\operatorname{span}(\mc{A}_n))=\dim(\operatorname{span}(\mc{A}_n'))=2^{n-r_n}$.

For the original TSC $\Pi^{(n)}$, define Hamiltonian $H^{(n)}=\sum_{i=1}^{n-k}\frac{\mbb{I}-Q_i^{(n)}}{2}$, which has eigenvalues $\{0,1,\cdots,n-k\}$ and ground space projector $\Pi^{(n)}$.
For the subspace $\widetilde{\Pi}^{(n)}$ corresponding to $\mc{A}_n'$, we define $\widetilde{H}^{(n)}=\sum_{j=1}^{r_n}\frac{\mbb{I}-P^{(n)}_j}{2}$, which has eigenvalues $\{0,1,\cdots,r_n\}$ and ground space projector $\widetilde{\Pi}^{(n)}$.
Intuitively, $U_n\ket{\overline{\phi}_n}\approx\ket{S_n}$ implies $U_n\Pi^{(n)}U_n^\dagger\approx\widetilde{\Pi}^{(n)}$.

We first show that $r_n=n-k$ when $n\in T$ is sufficiently large. For all $\mathbf{b}\in\{0,1\}^{n-r_n}$, since $\ket{\psi^{(n)}_{\mathbf{b}}}\in\mc{A}_n$ and $\operatorname{Diam}(U_nQ_i^{(n)}U_n^\dagger)\le C$, we know
\begin{equation}
\bra{\psi^{(n)}_{\mathbf{b}}}U_nH^{(n)}U_n^\dagger\ket{\psi^{(n)}_{\mathbf{b}}}=\sum_{i=1}^{n-k}\frac{1-\bra{\psi^{(n)}_{\mathbf{b}}}U_nQ_i^{(n)}U_n^\dagger\ket{\psi^{(n)}_{\mathbf{b}}}}{2}=\sum_{i=1}^{n-k}\frac{1-\bra{S_n}U_nQ_i^{(n)}U_n^\dagger\ket{S_n}}{2}=\bra{S_n}U_nH^{(n)}U_n^\dagger\ket{S_n},
\end{equation}
therefore
\begin{align}
\abs{\bra{\psi_{\mathbf{b}}^{(n)}}U_nH^{(n)}U_n^\dagger\ket{\psi^{(n)}_{\mathbf{b}}}-\bra{\overline{\phi}_n} H^{(n)}\ket{\overline{\phi}_n}}=&\abs{\bra{S_n}U_nH^{(n)}U_n^\dagger\ket{S_n}-\bra{\overline{\phi}_n} H^{(n)}\ket{\overline{\phi}_n}}\\
=&\abs{\Tr\big[H^{(n)}\big(U_n^\dagger\ketbra{S_n}{S_n}U_n-\ketbra{\overline{\phi}_n}{\overline{\phi}_n}\big)\big]}\\
\le&\sum_{i=1}^{n-k}\abs{\Tr\Big[\frac{\mbb{I}-Q_i^{(n)}}{2}\big(U_n^\dagger\ketbra{S_n}{S_n}U_n-\ketbra{\overline{\phi}_n}{\overline{\phi}_n}\big)\Big]}\\
\le&\sum_{i=1}^{n-k}\big\|\frac{\mbb{I}-Q_i^{(n)}}{2}\big\|_\infty\big\|\ketbra{S_n}{S_n}-U_n\ketbra{\overline{\phi}_n}{\overline{\phi}_n}U_n^\dagger\big\|_1\\
\le&(n-k)o(1/n)=o(1).
\end{align}
Therefore, for arbitrary $\mathbf{b}\in\{0,1\}^{n-r_n}$ we have
\begin{equation}
\bra{\psi^{(n)}_\mathbf{b}}U_nH^{(n)}U_n^\dagger\ket{\psi^{(n)}_\mathbf{b}}\le\bra{\overline{\phi}_n}H^{(n)}\ket{\overline{\phi}_n}+o(1)=o(1).
\end{equation}
Since $k$ is a constant independent of $n$, by Lemma~\ref{lemma:small_energy_ground_dim} we know that $2^{n-r_n}\le2^{k}$ when $n\in T$ is sufficiently large.
For arbitrary $\ket{b}\in\mc{B}_n$,
\begin{align}
\abs{\bra{b}\widetilde{H}^{(n)}\ket{b}-\bra{S_n}\widetilde{H}^{(n)}\ket{S_n}}=&\abs{\bra{\overline{\phi}_n}U_n^\dagger\widetilde{H}^{(n)}U_n\ket{\overline{\phi}_n}-\bra{S_n}\widetilde{H}^{(n)}\ket{S_n}}\\
=&\abs{\Tr\big[\widetilde{H}^{(n)}\big(U_n\ketbra{\overline{\phi}_n}{\overline{\phi}_n}U_n^\dagger-\ketbra{S_n}{S_n}\big)\big]}\\
\le&\sum_{i=1}^{r_n}\abs{\Tr\Big[\frac{\mbb{I}-P_i^{(n)}}{2}\big(U_n\ketbra{\overline{\phi}_n}{\overline{\phi}_n}U_n^\dagger-\ketbra{S_n}{S_n}\big)\Big]}\\
\le&\sum_{i=1}^{r_n}\big\|U_n\ketbra{\overline{\phi}_n}{\overline{\phi}_n}U_n^\dagger-\ketbra{S_n}{S_n}\big\|_1\\
\le&r_no(1/n)=o(1).
\end{align}
Therefore, for arbitrary $\ket{b}\in\mc{B}_n$ we have 
\begin{equation}\label{eq:b_energy_vanishingsmall}
\bra{b}\widetilde{H}^{(n)}\ket{b}\le\bra{S_n}\widetilde{H}^{(n)}\ket{S_n}+o(1)=o(1).
\end{equation}
We already know the ground space dimension $2^{n-r_n}$ of $\widetilde{H}^{(n)}$ is upper bounded by $2^k$. When $n\in T$ is sufficiently large, $\bra{b}\widetilde{H}^{(n)}\ket{b}\le\frac{1}{2^k+1}\le0+\frac{1-0}{2^{n-r_n}+1}$. By Lemma~\ref{lemma:small_energy_ground_dim} we know $2^k\le2^{n-r_n}$. We conclude that $r_n=n-k$ when $n\in T$ is sufficiently large.

Now we show $U_n\Pi^{(n)}U_n^{\dagger}\approx\widetilde{\Pi}^{(n)}$. 
Since $\widetilde{H}^{(n)}\ge\mbb{I}-\widetilde{\Pi}^{(n)}$, for arbitrary $\ket{b}\in\mc{B}_n$,
\begin{equation}
\bra{b}\widetilde{H}^{(n)}\ket{b}
\ge\bra{b}(\mbb{I}-\widetilde{\Pi}^{(n)})\ket{b}=1-\bra{b}\widetilde{\Pi}^{(n)}\ket{b}.
\end{equation}
By \eqref{eq:b_energy_vanishingsmall} we know
\begin{equation}
\bra{b}\widetilde{\Pi}^{(n)}\ket{b}\ge1-\bra{b}\widetilde{H}^{(n)}\ket{b}=1-o(1).
\end{equation}
We take $\{\ket{b_1},\cdots,\ket{b_{2^k}}\}$ be a set of orthogonal basis of $\operatorname{span}(\mc{B}_n)$, satisfying $\sum_{i=1}^{2^k}\ketbra{b_i}{b_i}=U_n\Pi^{(n)}U_n^\dagger$, then
\begin{equation}
\Tr\big(U_n\Pi^{(n)}U_n^\dagger\widetilde{\Pi}^{(n)}\big)=\sum_{i=1}^{2^k}\bra{b_i}\widetilde{\Pi}^{(n)}\ket{b_i}\ge2^k(1-o(1)).
\end{equation}
Therefore we have
\begin{align}
\big\|U_n\Pi^{(n)}U_n^{\dagger}-\widetilde{\Pi}^{(n)}\big\|_2^2=&\Tr\left(\big(U_n\Pi^{(n)}U_n^{\dagger}-\widetilde{\Pi}^{(n)}\big)^2\right)\\
=&\Tr\big(U_n\Pi^{(n)}U_n^{\dagger}\big)+\Tr\big(\widetilde{\Pi}^{(n)}\big)-2\Tr\big(U_n\Pi^{(n)}U_n^\dagger\widetilde{\Pi}^{(n)}\big)\\
\le&2^k+2^k-2\cdot2^k(1-o(1))=2^{k+1}o(1)=o(1),
\end{align}
implying $\big\|U_n\Pi^{(n)}U_n^{\dagger}-\widetilde{\Pi}^{(n)}\big\|_2=o(1)$.

Denote $\widetilde{d}_n$ be the distance of the stabilizer code $\widetilde{\Pi}^{(n)}$. Recall that $d_n$ is the distance of $\Pi_n$, and $\eta$ upper bounds the light-cone sizes of $\{U_n\}$.
Therefore, the deformed code $U_n\Pi^{(n)}U_n^\dagger$ has distance $\ge\lfloor\frac{d_n-1}{\eta}\rfloor+1$.
Now we show that when $n\in T$ is sufficiently large, $\widetilde{\Pi}^{(n)}$ is a code with distance
\begin{equation}
\widetilde{d}_n\ge\lfloor\frac{d_n-1}{\eta}\rfloor+1,
\end{equation}
which tends to infinity as $n\rightarrow\infty$. We remark that this implies all states in $\mc{A}_n'$ are $C$-locally indistinguishable, thus we have $\mc{A}_n=\mc{A}_n'$.

Suppose on the contrary that there exists a subsequence $T'\subset T$, such that $\widetilde{d}_n\le\lfloor\frac{d_n-1}{\eta}\rfloor$ for all $n\in T'$. 
Denote $K_n$ be the stabilizer group corresponding to $\widetilde{\Pi}^{(n)}$.
Then there exists a family of Pauli strings $\{E_n\}_{n\in T'}$, where $E_n\in\mc{P}_n$ has phase $+1$ and satisfies $\wt(E_n)=\widetilde{d}_n\le\lfloor\frac{d_n-1}{\eta}\rfloor$, such that $E_n$ is a logical error for $\widetilde{\Pi}^{(n)}$. This implies that $\mbb{I}_2^{\otimes n},i\mbb{I}_2^{\otimes n},-\mbb{I}_2^{\otimes n},-i\mbb{I}_2^{\otimes n}\notin\{E_nh\mid h\in K_n\}$.
We have
\begin{equation}
\widetilde{\Pi}^{(n)}E_n\widetilde{\Pi}^{(n)}=E_n\widetilde{\Pi}^{(n)} \text{, and } \Tr(E_n\widetilde{\Pi}^{(n)})=\frac{1}{2^{n-k}}\sum_{h\in K_n}\Tr(E_nh)=0.
\end{equation}
Since $U_n\Pi^{(n)}U_n^\dagger$ has distance $\ge\lfloor\frac{d_n-1}{\eta}\rfloor+1$, we know there exist a constant $c_n\in\mbb{R}$ with $\abs{c_n}\le1$ such that
\begin{equation}
(U_n\Pi^{(n)}U_n^\dagger)E_n(U_n\Pi^{(n)}U_n^\dagger)=c_nU_n\Pi^{(n)}U_n^\dagger.
\end{equation}
Therefore, we have
\begin{align}
&\bignorm{E_n\widetilde{\Pi}^{(n)}-c_nU_n\Pi^{(n)}U_n^\dagger}_2\\
=&\bignorm{\widetilde{\Pi}^{(n)}E_n\widetilde{\Pi}^{(n)}-(U_n\Pi^{(n)}U_n^\dagger)E_n(U_n\Pi^{(n)}U_n^\dagger)}_2\\
\le&\bignorm{\widetilde{\Pi}^{(n)}E_n\widetilde{\Pi}^{(n)}-\widetilde{\Pi}^{(n)}E_n(U_n\Pi^{(n)}U_n^\dagger)}_2+\bignorm{\widetilde{\Pi}^{(n)}E_n(U_n\Pi^{(n)}U_n^\dagger)-(U_n\Pi^{(n)}U_n^\dagger)E_n(U_n\Pi^{(n)}U_n^\dagger)}_2\\
\le&\bignorm{\widetilde{\Pi}^{(n)}}_\infty\bignorm{E_n}_\infty\bignorm{\widetilde{\Pi}^{(n)}-U_n\Pi^{(n)}U_n^\dagger}_2+\bignorm{\widetilde{\Pi}^{(n)}-U_n\Pi^{(n)}U_n^\dagger}_2\bignorm{E_n}_\infty\bignorm{U_n\Pi^{(n)}U_n^\dagger}_\infty\\
=&2\bignorm{\widetilde{\Pi}^{(n)}-U_n\Pi^{(n)}U_n^\dagger}_2=o(1).
\end{align}
This implies
\begin{align}
&\bignorm{c_nE_n\widetilde{\Pi}^{(n)}-\widetilde{\Pi}^{(n)}}_2=\bignorm{c_n\widetilde{\Pi}^{(n)}-E_n\widetilde{\Pi}^{(n)}}_2\\
\le&\bignorm{c_n\widetilde{\Pi}^{(n)}-c_nU_n\Pi^{(n)}U_n^\dagger}_2+\bignorm{c_nU_n\Pi^{(n)}U_n^\dagger-E_n\widetilde{\Pi}_n}_2=o(1).
\end{align}
However, this is impossible since $\Tr(c_nE_n\widetilde{\Pi}^{(n)})=0$ and $\Tr(\widetilde{\Pi}^{(n)})=2^k$. Therefore, we conclude that $\widetilde{d}_n\ge\lfloor\frac{d_n-1}{\eta}\rfloor+1$ when $n\in T$ is sufficiently large.

When $n\in T$ is sufficiently large, the stabilizer code family $\widetilde{\Pi}^{(n)}$ has the following properties:
\begin{enumerate}
\item It is defined on the same manifold as the original TSC $\Pi^{(n)}$.
\item It has local Pauli generators $P^{(n)}_1,\cdots,P^{(n)}_{n-k}$, with diameters $\le C$.
\item Its distance can be arbitrarily large when $n\rightarrow\infty$.
\end{enumerate}
Therefore, $\{\widetilde{\Pi}^{(n)}\}$ is a TSC family.

Denote $\Pi^{(n)}=J_1^{(n)}J_1^{(n)\dagger}$ and $\widetilde{\Pi}^{(n)}=J_2^{(n)}J_2^{(n)\dagger}$, where $J_1^{(n)}$ and $J_2^{(n)}$ are Clifford encoding isometries.
We have $\ket{\overline{\phi}_n}=J_1^{(n)}\ket{\phi}$.
The constant depth local circuit $U_n$ maps the original TSC $\Pi^{(n)}$ to $U_n\Pi^{(n)}U_n^\dagger$, which is near the newly defined TSC $\widetilde{\Pi}^{(n)}$: $\bignorm{U_n\Pi^{(n)}U_n^\dagger-\widetilde{\Pi}^{(n)}}_2=o(1)$.
By Theorem~\ref{thm:robust_BK}, the robust form of Bravyi--König theorem, there exists a sequence of $k$-qubit unitaries $\{V^{(n)}\}\subset\mc{C}_k^{(D)}$ such that the induced linear transform $J_2^{(n)\dagger} U_nJ_1^{(n)}$ (not a unitary in general) satisfies
\begin{equation}
\big\|J_2^{(n)\dagger} U_nJ_1^{(n)}-V^{(n)}\big\|_2=o(1).
\end{equation}

Therefore, we have
\begin{align}
&\big\|J_2^{(n)\dagger} U_nJ_1^{(n)}\ketbra{\phi}{\phi}\big(J_2^{(n)\dagger} U_nJ_1^{(n)}\big)^\dagger-V^{(n)}\ketbra{\phi}{\phi}V^{(n)\dagger}\big\|_1\\
\le&2\bignorm{J_2^{(n)\dagger} U_nJ_1^{(n)}-V^{(n)}}_1\le2\cdot2^{k/2}\bignorm{J_2^{(n)\dagger} U_nJ_1^{(n)}-V^{(n)}}_2=o(1).
\end{align}
Also,
\begin{equation}
\bignorm{J_2^{(n)\dagger} U_nJ_1^{(n)}\ketbra{\phi}{\phi}\big(J_2^{(n)\dagger} U_nJ_1^{(n)}\big)^\dagger-J_2^{(n)\dagger}\ketbra{S_n}{S_n}J_2^{(n)}}_1=\big\|U_n\ketbra{\overline{\phi}_n}{\overline{\phi}_n}U_n^\dagger-\ketbra{S_n}{S_n}\big\|_1=o(1/n).
\end{equation}
Therefore,
\begin{equation}\label{eq:stab_and_nonstab_near}
\big\|J_2^{(n)\dagger}\ketbra{S_n}{S_n}J_2^{(n)}-V^{(n)}\ketbra{\phi}{\phi}V^{(n)\dagger}\big\|_1\le o(1)+o(1/n)=o(1).
\end{equation}
$J_2^{(n)\dagger}\ket{S_n}$ is a $k$-qubit stabilizer state.
Note that $\ket{\phi}\notin\mc{C}_k^{(D)\dagger}\ket{0^k}:=\big\{W^\dagger\ket{0^k}\mid W\in \mc{C}_k^{(D)}\big\}$.
From Ref.~\cite{Zeng2008SemiClifford}, we know $\mc{C}_k^{(D)\dagger}\mc{C}^{(2)}_k=\mc{C}_k^{(D)\dagger}$, thus $\ket{\phi}\notin\mc{C}_k^{(D)\dagger}\ket{0^k}=\mc{C}_k^{(D)\dagger}\mc{C}^{(2)}_k\ket{0^k}$, implying $V\ket{\phi}\notin\mc{C}^{(2)}_k\ket{0^k}$. Therefore, $V\ket{\phi}$ is a $k$-qubit nonstabilizer state, implying that \eqref{eq:stab_and_nonstab_near} is impossible.

Therefore, we conclude that $\{\ket{\overline{\phi}_n}\}$ has strong LRM.
\end{proof}

\subsection{Robust extension of the Bravyi--König theorem}
\label{app:robust_bk}

Given a matrix $A$, we denote $\norm{A}:=\norm{A}_2=\sqrt{\Tr(A^\dagger A)}$.
For convenience, we define
\begin{equation}
\mc{C}_k^{(1)}:=\{e^{i\theta}P\mid \theta\in\mbb{R},P\in\mc{P}_k\}    
\end{equation}
in this section, which does not change the definition of $\mc{C}_k^{(j)}$ for $j\ge2$.

\begin{lem}\label{lem:nearly_commute_sequence}
For square matrices $A,B_1,\cdots,B_m$ of the same size, if for $j=1,\cdots,m$ we have $\bignorm{[A,B_j]}\le\epsilon_j$ and $\norm{B_j}_\infty\le1$, then
\begin{equation}
\bignorm{[A,B_1\cdots B_m]}\le\epsilon_1+\cdots+\epsilon_m.
\end{equation}
\end{lem}
\begin{proof}
We have
\begin{align}
\bignorm{[A,B_1\cdots B_m]}=&\bignorm{AB_1\cdots B_m-B_1\cdots B_mA}\\
\le&\sum_{i=1}^m\bignorm{B_1\cdots B_{i-1} (AB_i)B_{i+1}\cdots B_m-B_1\cdots B_{i-1} (B_iA)B_{i+1}\cdots B_m}\\
=&\sum_{i=1}^m\bignorm{B_1\cdots B_{i-1} (AB_i-B_iA)B_{i+1}\cdots B_m}\\
\le&\sum_{i=1}^m\norm{B_1}_\infty\cdots\norm{B_{i-1}}_\infty \norm{AB_i-B_iA}\norm{B_{i+1}}_\infty\cdots \norm{B_{m}}_\infty\\
\le&\epsilon_1+\cdots+\epsilon_m.
\end{align}
\end{proof}

\begin{lem}\label{lemma:approx-(anti)commute_with_Pauli_then_Pauli}
Suppose a unitary $M$ acts on $k$ qubits.
If for some  sufficiently small $\epsilon\ge0$,
\begin{equation}
\max_{P\in\mc{G}_k}~\min\left\{\bignorm{[M,P]}^2,~\bignorm{\{M,P\}}^2\right\}\le\epsilon,
\end{equation}
where $\mc{G}_k:=\{X_1,\cdots,X_k,Z_1,\cdots,Z_k\}$, then there exists a $Q\in\mc{C}_k^{(1)}$ such that
\begin{equation}
\norm{M-Q}^2\le2^{-1}k\epsilon+2^{-k-2}k^2\epsilon^2=O(\epsilon).
\end{equation}
\end{lem}
\begin{proof}
Decompose $M$ as $M=\sum_{R\in\mc{P}_k^+}c_RR$, where $\mc{P}_k^+$ denotes the set of $4^k$ $k$-qubit Pauli strings with phase $+1$, and $c_R=\frac{1}{2^k}\Tr(RM)\in\mbb{C}$.
Since $M$ is unitary, we have $\Tr(M^\dagger M)=\Tr(\mbb{I}_{2^k})=2^k$, implying $\sum_{R\in\mc{P}_k^+}\abs{c_R}^2=1$. For arbitrary $P\in\mc{P}_k$,
\begin{align}
[M,P]&=\sum_{R\in\mc{P}_k^+}c_R[R,P]=\sum_{R\in\mc{P}_k^+,RP=-PR}2c_RRP,\\
\{M,P\}&=\sum_{R\in\mc{P}_k^+}c_R\{R,P\}=\sum_{R\in\mc{P}_k^+,RP=PR}2c_RRP.
\end{align}
Therefore we have
\begin{align}
\bignorm{[M,P]}^2&=\Tr([M,P]^\dagger[M,P])=2^k\sum_{R\in\mc{P}_k^+,RP=-PR}4\abs{c_R}^2,\\
\bignorm{\{M,P\}}^2&=\Tr(\{M,P\}^\dagger\{M,P\})=2^k\sum_{R\in\mc{P}_k^+,RP=PR}4\abs{c_R}^2.
\end{align}
Consider $Q'\in\mc{P}_k^+$ defined as following: For $i=1,\cdots,k$,
\begin{enumerate}
\item If $\bignorm{[M,X_i]}^2\le\epsilon$ and $\bignorm{[M,Z_i]}^2\le\epsilon$, let $Q'|_i=\mbb{I}$,
\item If $\bignorm{[M,X_i]}^2\le\epsilon$ and $\bignorm{\{M,Z_i\}}^2\le\epsilon$, let $Q'|_i=X$,
\item If $\bignorm{\{M,X_i\}}^2\le\epsilon$ and $\bignorm{[M,Z_i]}^2\le\epsilon$, let $Q'|_i=Z$,
\item If $\bignorm{\{M,X_i\}}^2\le\epsilon$ and $\bignorm{\{M,Z_i\}}^2\le\epsilon$, let $Q'|_i=Y$,
\end{enumerate}
where $X_i$ is the $k$-qubit Pauli string which is $X$ on the $i$-th qubit and $\mbb{I}$ elsewhere, and $Q'|_i$ is the 
$i$-th Pauli operator of $Q'$.
For case 2, $\bignorm{[M,X_i]}^2\le\epsilon$ and $\bignorm{\{M,Z_i\}}^2\le\epsilon$, we have
\begin{align}
\bignorm{[M,X_i]}^2&=2^k\sum_{R\in\mc{P}_k^+,RX_i=-X_iR}4\abs{c_R}^2=2^{k+2}\sum_{R\in\mc{P}_k^+,R|_i=Y,Z}\abs{c_R}^2\le\epsilon,\\
\bignorm{\{M,Z_i\}}^2&=2^k\sum_{R\in\mc{P}_k^+,RZ_i=Z_iR}4\abs{c_R}^2=2^{k+2}\sum_{R\in\mc{P}_k^+,R|_i=I,Z}\abs{c_R}^2\le\epsilon,
\end{align}
thus $\sum_{R\in\mc{P}_k^+,R|_i\neq X}\abs{c_R}^2=\sum_{R\in\mc{P}_k^+,R|_i=I,Y,Z}\abs{c_R}^2\le2\cdot2^{-k-2}\epsilon=2^{-k-1}\epsilon$. Other cases are similar, and we obtain
\begin{equation}
\sum_{R\in\mc{P}_k^+,R|_i\neq Q'|_i}\abs{c_R}^2\le2^{-k-1}\epsilon.
\end{equation}
Notice that $\{R\in\mc{P}_k^+\mid R\neq Q'\}=\bigcup_{i=1}^k\big\{R\in\mc{P}_k^+\big|~ R|_i\neq Q'|_i\big\}$, thus
\begin{equation}
\sum_{R\in\mc{P}_k^+,R\neq Q'}\abs{c_R}^2\le\sum_{i=1}^k\sum_{R\in\mc{P}_k^+,R|_i\neq Q'|_i}\abs{c_R}^2\le2^{-k-1}k\epsilon.
\end{equation}
Therefore, $\abs{c_{Q'}}^2\ge1-2^{-k-1}k\epsilon$.
We write $c_{Q'}=e^{i\theta}\abs{c_{Q'}}$, and take $Q=e^{i\theta}Q'\in\mc{C}_k^{(1)}$.
For $\epsilon\ge0$ sufficiently small we $0\le1-2^{-k-1}k\epsilon\le1$, so $\abs{c_{Q'}}\ge\sqrt{1-2^{-k-1}k\epsilon}\ge1-2^{-k-1}k\epsilon$, implying $\abs{c_{Q'}-e^{i\theta}}^2=(1-\abs{c_{Q'}})^2\le(2^{-k-1}k\epsilon)^2$. Since $M-Q=\sum_{R\in\mc{P}_k^+}(c_R-e^{i\theta}\delta_{R=Q'})R$, we conclude that
\begin{align}
\norm{M-Q}^2=&2^k\sum_{R\in\mc{P}_k^+}\abs{c_R-e^{i\theta}\delta_{R=Q'}}^2\\
=&2^k\sum_{R\in\mc{P}_k^+,R\neq Q'}\abs{c_R}^2+2^k\abs{c_{Q'}-e^{i\theta}}^2\\
\le&2^{-1}k\epsilon+2^{-k-2}k^2\epsilon^2.
\end{align}
\end{proof}

\begin{lem}\label{lemma:climb_approx_CliffordHierachy}
Suppose $M$ is a unitary matrix acting on $k$ qubits, and $j\ge1$. If for some  sufficiently small $\delta\ge0$, and for every $P\in\mc{P}_k$, there exists $A_P\in\mc{C}^{(j)}_k$ such that $\norm{MPM^\dagger-A_P}\le\delta$, then there exists $B\in\mc{C}^{(j+1)}_{k}$, such that
\begin{equation}
\norm{M-B}=O(\delta).
\end{equation}
\end{lem}
\begin{proof}
For arbitrary $P,Q\in\mc{P}_k$ we have
\begin{align}
\norm{A_PA_Q-MPQM^\dagger}&\le\norm{A_PA_Q-A_PMQM^\dagger}+\norm{A_PMQM^\dagger-MPM^\dagger MQM^\dagger}\\
&\le\bignorm{A_P}_\infty\bignorm{A_Q-MQM^\dagger}+\bignorm{A_P-MPM^\dagger}\bignorm{MQM^\dagger}_\infty\le2\delta.
\end{align}
If $[P,Q]=0$, then
\begin{equation}
\norm{A_PA_Q-A_QA_P}\le\norm{A_PA_Q-MPQM^\dagger}+\norm{MPQM^\dagger-MQPM^\dagger}+\norm{MQPM^\dagger-A_QA_P}\le4\delta.
\end{equation}
If $\{P,Q\}=0$, then
\begin{equation}
\bignorm{A_{P}A_{Q}+A_QA_P}\le\norm{A_PA_Q-MPQM^\dagger}+\norm{MPQM^\dagger+MQPM^\dagger}+\norm{A_QA_P-MQPM^\dagger}\le4\delta.
\end{equation}

For all $j\ge1$, $\mc{C}^{(j)}_k$ is finite up to global phases. Therefore, the sets
\begin{equation}
\big\{\norm{C_1C_2-C_2C_1}\mid C_1,C_2\in\mc{C}^{(j)}_k\big\}\text{ and }\big\{\norm{C_1C_2+C_2C_1}\mid C_1,C_2\in\mc{C}^{(j)}_k\big\}
\end{equation}
are finite subsets of $\mbb{R}$.
Therefore, when $\delta$ is sufficiently small,
\begin{align}
[P,Q]&=0\Rightarrow\norm{A_PA_Q-A_QA_P}=0\iff[A_P,A_Q]=0,\\
\{P,Q\}&=0\Rightarrow\norm{A_PA_Q+A_QA_P}=0\iff\{A_P,A_Q\}=0.
\end{align}

Denote $\mc{G}_k:=\{X_1,\cdots,X_k,Z_1,\cdots,Z_k\}$, which generates $\mc{P}_k$ up to phases.
For all $P\in\mc{G}_k$, $MPM^\dagger$ has eignevalues half $+1$ half $-1$.
Since $\mc{C}^{(j)}_k$ is a set of unitaries which is finite up to global phases, when $\delta$ is sufficiently small, $\norm{MPM^\dagger-A_P}\le\delta$ ensures that $A_P\in\mc{C}^{(j)}_k$ must equal to some matrix $A_P'\in\mc{C}^{(j)}_k$ which has eigenvalues half $+1$ half $-1$, up to a small global phase, otherwise $\norm{MPM^\dagger-A_P}$ cannot be arbitrarily small.
We write $A_P'=e^{-i\theta_P}A_P$, with small $\abs{\theta_P}$. $A_P'$ is unitary and has real eigenvalues, thus it is Hermitian.

Since the eigenvalues of $A_P$ are half $e^{i\theta_P}$ half $-e^{i\theta_P}$, by the Hoffman--Wielandt inequality~\cite{hoffman2003variation} we know $\norm{MPM^\dagger-A_P}^2\ge 2^k\abs{1-e^{i\theta_P}}^2$, thus we have $\abs{1-e^{i\theta_P}}\le2^{-k/2}\delta$.

Now we obtain $\{A_P'\mid P\in\mc{G}_k\}\subset\mc{C}^{(j)}_k$, a set of ``Pauli like" matrices, satisfying
\begin{enumerate}
\item For all $P,Q\in\mc{G}_k$, $[P,Q]=0\iff[A_P',A_Q']=0$ \text{, and } $\{P,Q\}=0\iff\{A_P',A_Q'\}=0$;
\item For all $P\in\mc{G}_k$, $A_P'$ is a Hermitian matrix satisfying $(A_P')^2=\mbb{I}$.
\end{enumerate}

By these relations, we now show that there exists a unitary $B$ such that $BPB^\dagger=A_P'$, for all $P\in\mc{G}_k$.

$\mc{G}_k$ is a minimal generating set for the matrix algebra $M_n(\mbb{C})$ by multiplication and linear combination, with relations: \textbf{I.} $P^2=\mbb{I},\forall P\in\mc{G}_k$; \textbf{II.} $[Z_i,Z_j]=0,[X_i,X_j]=0,\forall i,j\in[k]$; \textbf{III.} $[Z_i,X_j]=0,\forall i\neq j$; \textbf{IV.} $\{Z_i,X_i\}=0,\forall i\in[k]$.
Since $\{A_P'\mid P\in\mc{G}_k\}\subset\mc{C}^{(j)}_k\subset M_n(\mbb{C})$ is another set of generators with the same relations, we know it also generates $M_n(\mbb{C})$, and the map $\varphi:\mc{G}_k\rightarrow\{A_P'\mid P\in\mc{G}_k\}$ defined by $\varphi(P)=A_P'$ induces an automorphism $\varphi:M_n(\mbb{C})\rightarrow M_n(\mbb{C})$ of $M_{2^k}(\mathbb{C})$.

By the Skolem--Noether theorem (see e.g.~Ref.~\cite{lorenz2007algebra}), we know that every automorphism of the matrix algebra $M_n(\mbb{C})$ is inner (i.e., has the form $\varphi(x)=u^{-1}xu$). Therefore there exists an invertible matrix $\widetilde{B}\in M_n(\mbb{C})$ such that $\varphi(\cdot)=\widetilde{B}\cdot\widetilde{B}^{-1}$.

For all $P\in\mc{G}_k$, since $(A_P')^\dagger=A_P'$, we have
\begin{equation}
(\widetilde{B}^\dagger \widetilde{B})P(\widetilde{B}^\dagger \widetilde{B})^{-1}=\widetilde{B}^\dagger(\widetilde{B}P\widetilde{B}^{-1})(\widetilde{B}^{\dagger})^{-1}=\widetilde{B}^{\dagger}A_P'(\widetilde{B}^{\dagger})^{-1}=(\widetilde{B}^{-1}A_P'\widetilde{B})^{\dagger}=P^\dagger=P.
\end{equation}
Therefore, for all $P\in\mc{G}_k$ we have $(\widetilde{B}^\dagger \widetilde{B})P=P(\widetilde{B}^\dagger \widetilde{B})$, implying $\widetilde{B}^\dagger \widetilde{B}=b\mbb{I}$ for some $b>0$.
Define $B=b^{-1/2}\widetilde{B}$, we have $B^\dagger B=\mbb{I}$, and $\varphi(\cdot)=\widetilde{B}\cdot\widetilde{B}^{-1}=B\cdot B^{-1}=B\cdot B^{\dagger}$.
Therefore $BPB^\dagger=A_P'$ for all $P\in\mc{G}_k$.

Now we show that $B\in\mc{C}^{(j+1)}_k$.
For arbitrary $R\in\mc{P}_k$, write $R=e^{i\varphi}P_1\cdots P_s$ where $P_1,\cdots,P_s\in\mc{G}_k$ and $\theta\in\mbb{R}$.
We have
\begin{equation}
BRB^\dagger=e^{i\varphi}(BP_1B^\dagger)\cdots(BP_sB^\dagger)=e^{i\varphi}A_{P_1}'\cdots A_{P_s}'.
\end{equation}
There exists $A_R\in\mc{C}^{(j)}_k$ close to $MRM^\dagger$, thus
\begin{align}
\norm{BRB^\dagger-A_R}\le&\norm{BRB^\dagger-MRM^\dagger}+\norm{MRM^\dagger-A_R}\\
=&\norm{A_{P_1}'\cdots A_{P_s}'-A_{P_1}\cdots A_{P_s}}+\norm{MRM^\dagger-A_R}\\
\le&\sum_{i=1}^s\abs{1-e^{i\theta_{P_i}}}\norm{\mbb{I}}+\delta\le k2^{-k/2}\delta2^{k/2}+\delta=(k+1)\delta.
\end{align}
Since $BRB^\dagger=e^{i\varphi}A_{P_1}'\cdots A_{P_s}'\in\big(\mc{C}^{(j)}_k\big)^s$ and $A_R\in\mc{C}^{(j)}_k$, where $\big(\mc{C}^{(j)}_k\big)^s=\mc{C}^{(j)}_k\cdots\mc{C}^{(j)}_k$ contains $\mc{C}^{(j)}_k$ and is a finite set up to global phases, for sufficiently small $\delta$, $BRB^\dagger$ must equal to $A_R$ up to global phases. By $A_R\in\mc{C}^{(j)}_k$ we know $BRB^\dagger\in\mc{C}^{(j)}_k$.
Therefore we have $B\mc{P}_kB^\dagger\subset\mc{C}^{(j)}_k$, that is, $B\in\mc{C}^{(j+1)}_k$.

Now we upper bound the distance between $M$ and $B$. For any $P\in\mc{G}_k$,
\begin{align}
\bignorm{(M^\dagger B)P-P(M^\dagger B)}=\bignorm{BPB^\dagger-MPM^\dagger}=\bignorm{A_P'-MPM^\dagger}\\
\le\bignorm{A_P'-A_P}+\bignorm{A_P-MPM^\dagger}\le\delta+\delta=2\delta.
\end{align}
By Lemma~\ref{lemma:approx-(anti)commute_with_Pauli_then_Pauli}, we conclude that
\begin{equation}
\bignorm{B-M}=\bignorm{M^\dagger B-\mbb{I}}=O(\delta).
\end{equation}
\end{proof}

\begin{thm}[Robust Bravyi--König]\label{thm:robust_BK}
Suppose two families of $D$-dimensional ($D\ge2$) TSCs are defined on the same lattice $\Lambda^{(n)}$, both have $k$ logical qubits, with code projectors $\Pi_1^{(n)}=J_1^{(n)}J_1^{(n)\dagger}$ and $\Pi_2^{(n)}=J_2^{(n)}J_2^{(n)\dagger}$ respectively, where $n$ is the number of physical qubits and $J_1^{(n)},J_2^{(n)}$ are Clifford encoding isometries.
If a family of shallow circuit unitaries $\{U^{(n)}\}$ satisfies $\big\|U^{(n)}\Pi_1^{(n)}U^{(n)\dagger}-\Pi_2^{(n)}\big\|_2=o(1)$, then there exist $\{Q^{(n)}\}\subset\mc{C}_{k}^{(D)}$ such that the induced linear transform $J_2^{(n)\dagger} U^{(n)}J_1^{(n)}$ satisfies
$\big\|J_2^{(n)\dagger} U^{(n)}J_1^{(n)}-Q^{(n)}\big\|_2=o(1)$.
\end{thm}

\begin{proof}
We omit superscripts $(n)$ for brevity and implicitly consider such families in the proof.

As in Ref.~\cite{Bravyi2013Classification} (arXiv version, page 5), we partition the lattice $\Lambda$ into $D+1$ regions $\Lambda=\bigcup_{j=1}^{D+1}\Lambda_j$. Here $\Lambda_j$ corresponds to $(j-1)$-dimensional simplices in a triangulation of $\Lambda$. Each region $\Lambda_j$ is a disjoint union of chunks of size $O(1)$ separated by distance $\Omega(1)$.
Each region $\Lambda_j$ is correctable. 

 Also take $n$-qubit Pauli strings $P_1,P_2,\cdots,P_D$ such that $[P_1,\Pi_1]=0$ and $[P_i,\Pi_2]=0$ for $i=2,\cdots,D$. We can ensure that $\supp(P_j)$ does not overlap with $\Lambda_j$ together with its $\rho$-neighborhood for some constant $\rho$. We can take $P_1,P_2,\cdots,P_D$ such that $J_1^\dagger P_1J_1$ and $J_2^\dagger P_2J_2,\cdots,J_2^\dagger P_DJ_2$ be arbitrary elements in $\mc{P}_k$.

Define $K_1=UP_1U^\dagger$, and $K_j=P_j^\dagger K_{j-1}P_jK_{j-1}^\dagger$ for $j=2,\cdots,D$. Same as the proof in Ref.~\cite{Bravyi2013Classification}, we know $K_j$ acts trivially on $\bigcup_{i=1}^j\Lambda_i$. Therefore, $K_D$ is supported on the correctable region $\Lambda_{D+1}$.

Since $\bignorm{\Pi_1U^\dagger-U^\dagger\Pi_2}=\bignorm{U\Pi_1-\Pi_2U}=\bignorm{U\Pi_1U^\dagger-\Pi_2}=o(1)$, we have 
\begin{align}
\bignorm{[\Pi_2,K_1]}=&\norm{\Pi_2UP_1U^\dagger-UP_1U^\dagger\Pi_2}\\
\le&\norm{\Pi_2UP_1U^\dagger-U\Pi_1P_1U^\dagger}+\norm{UP_1\Pi_1U^\dagger-UP_1U^\dagger\Pi_2}\\
\le&\bignorm{\Pi_2U-U\Pi_1}\bignorm{P_1U^\dagger}_\infty+\bignorm{UP_1}_\infty\bignorm{\Pi_1U^\dagger-U^\dagger\Pi_2}=o(1).
\end{align}
For $j=2,\cdots,D$, there are $2^{j-1}$ $K_1$'s in $K_j$. By Lemma~\ref{lem:nearly_commute_sequence},
\begin{equation}
\bignorm{[K_j,\Pi_2]}\le2^{j-1}\bignorm{[K_1,\Pi_2]}=o(1).
\end{equation}

Since $K_D$ is a correctable error, there exists $c\in\mbb{C}$ with $\abs{c}\le1$ such that $\Pi_2 K_D\Pi_2=c\Pi_2$.
Now we show that $c\approx\pm1$ and $K_D\Pi_2\approx\pm\Pi_2$. We have
\begin{align}
\norm{K_D\Pi_2-c\Pi_2}=&\norm{K_D\Pi_2\Pi_2-\Pi_2 K_D\Pi_2}\le\norm{K_D\Pi_2-\Pi_2 K_D}\norm{\Pi_2}_\infty=o(1),\\
\norm{\Pi_2K_D-c\Pi_2}=&\norm{\Pi_2\Pi_2K_D-\Pi_2 K_D\Pi_2}\le\norm{\Pi_2}_\infty\norm{\Pi_2 K_D-K_D\Pi_2}=o(1).
\end{align}
As $P_D^2=e^{i\theta}\mbb{I}$, we have
\begin{align}
&\norm{\Pi_2-c^2\Pi_2}=\norm{(\Pi_2K_{D-1}P_DK_{D-1}^\dagger)(K_{D-1}P_DK_{D-1}^\dagger\Pi_2)-(cP_D\Pi_2)(cP_D\Pi_2)}\\
\le&\bignorm{\Pi_2K_{D-1}P_DK_{D-1}^\dagger-cP_D\Pi_2}\bignorm{K_{D-1}P_DK_{D-1}^\dagger\Pi_2}_\infty+\bignorm{cP_D\Pi_2}_\infty\bignorm{K_{D-1}P_DK_{D-1}^\dagger\Pi_2-cP_D\Pi_2}\\
\le&\bignorm{\Pi_2K_{D-1}P_DK_{D-1}^\dagger-cP_D\Pi_2}+\abs{c}\bignorm{K_{D-1}P_DK_{D-1}^\dagger\Pi_2-cP_D\Pi_2}\\
=&\norm{\Pi_2K_D-c\Pi_2}+\abs{c}\norm{K_D\Pi_2-c\Pi_2}=o(1).
\end{align}
Therefore, $\abs{1-c^2}\le\norm{\Pi_2}^{-1}o(1)=o(1)$, implying that either $\abs{c-1}=o(1)$ or $\abs{c+1}=o(1)$, and
\begin{equation}
\text{either }\norm{K_D\Pi_2-\Pi_2}=o(1)\text{ or }\norm{K_D\Pi_2+\Pi_2}=o(1).
\end{equation}
We write this as $\norm{K_D\Pi_2\pm\Pi_2}=o(1)$. By $\Pi_2=J_2J_2^\dagger$ we know that this is equivalent to
\begin{equation}
\bignorm{J_2^\dagger K_DJ_2\pm \mbb{I}}=o(1).
\end{equation}

Denote $\hat{U}=J_2^\dagger UJ_1$, $\hat{P}_1=J_1^\dagger P_1J_1$, and $\hat{P}_i=J_2^\dagger P_iJ_2$ for $i=2,\cdots,D$.
Note that $\hat{P}_1,\cdots,\hat{P}_D\in\mc{P}_k$, while $\hat{U}$ is a linear transform on $k$ qubits that is not a unitary in general.
Also define $k$-qubit linear transformations $\hat{K}_1=\hat{U}\hat{P}_1\hat{U}^\dagger$, and $\hat{K}_j=\hat{P}_j^\dagger\hat{K}_{j-1}\hat{P}_j\hat{K}_{j-1}^\dagger$ for $j=2,\cdots,D$. They are not unitaries in general.

Now we show that $\bignorm{\hat{K}_D-J_2^\dagger K_DJ_2}=o(1)$. We have
\begin{align}
\bignorm{\hat{K}_1-J_2^\dagger K_1J_2}=&\bignorm{J_2^\dagger UJ_1J_1^\dagger P_1J_1J_1^\dagger U^\dagger J_2-J_2^\dagger UP_1U^\dagger J_2}\\
=&\bignorm{J_2^\dagger (U\Pi_1)P_1(\Pi_1U^\dagger)J_2-J_2^\dagger(\Pi_2 U)P_1(U^\dagger \Pi_2)J_2}\\
\le&2\bignorm{U\Pi_1-\Pi_2U}=o(1)
\end{align}
Suppose for $i\ge1$ we have $\bignorm{\hat{K}_i-J_2^\dagger K_iJ_2}=o(1)$, then
\begin{align}
\bignorm{\hat{K}_{i+1}-J_2^\dagger K_{i+1}J_2}
=&\bignorm{J_2^\dagger P_{i+1}^\dagger J_2\hat{K}_iJ_2^\dagger P_{i+1} J_2\hat{K}_i^\dagger-J_2^\dagger P_{i+1}^\dagger K_iP_{i+1}K_i^\dagger J_2}\\
\le&\bignorm{J_2^\dagger P_{i+1}^\dagger J_2\hat{K}_iJ_2^\dagger P_{i+1} J_2\hat{K}_i^\dagger-J_2^\dagger P_{i+1}^\dagger J_2(J_2^\dagger K_iJ_2)J_2^\dagger P_{i+1} J_2(J_2^\dagger K_i^\dagger J_2)}\\
&+\bignorm{J_2^\dagger P_{i+1}^\dagger J_2(J_2^\dagger K_iJ_2)J_2^\dagger P_{i+1} J_2(J_2^\dagger K_i^\dagger J_2)-J_2^\dagger P_{i+1}^\dagger K_iP_{i+1}K_i^\dagger J_2}\\
\le&2\bignorm{\hat{K}_i-J_2^\dagger K_iJ_2}+\bignorm{J_2^\dagger P_{i+1}^\dagger K_iP_{i+1}(\Pi_2 K_i^\dagger) J_2-J_2^\dagger P_{i+1}^\dagger K_iP_{i+1}(K_i^\dagger\Pi_2) J_2}\\
\le&2\bignorm{\hat{K}_i-J_2^\dagger K_iJ_2}+\bignorm{\Pi_2 K_i^\dagger-K_i^\dagger\Pi_2}=o(1).
\end{align}
Therefore, we obtain $\bignorm{\hat{K}_D-J_2^\dagger K_DJ_2}=o(1)$ by induction, implying
\begin{equation}
\bignorm{\hat{K}_D\pm\mbb{I}}=o(1).
\end{equation}

We have
\begin{equation}
\bignorm{\hat{U}\hat{U}^\dagger-\mbb{I}}=\bignorm{J_2^\dagger (U\Pi_1) U^\dagger J_2-J_2^\dagger (\Pi_2U) U^\dagger J_2}\le\bignorm{U\Pi_1-\Pi_2U}=o(1),
\end{equation}
thus there exists a unitary matrix $\hat{U}'$ such that $\|\hat{U}-\hat{U}'\|=o(1)$. For $i=1,\cdots,D$, we define $\hat{K}_i'$ by replacing $\hat{U}$ (or $\hat{U}^\dagger$) in $\hat{K}_i$ by $\hat{U}'$ (or $\hat{U}'^\dagger$), thus $\hat{K}_i',i=1,\cdots,D$ are unitaries.
We have $\|\hat{K}_i-\hat{K}_i'\|=o(1)$ for all $i=1,\cdots,D$.

By $\bignorm{\hat{K}_D\pm\mbb{I}}=o(1)$ implies $\bignorm{\hat{K}_D'\pm\mbb{I}}=o(1)$, thus
\begin{equation}
\bignorm{\hat{P}_D^\dagger \hat{K}_{D-1}'\pm\hat{K}_{D-1}'\hat{P}_D^\dagger}=\bignorm{\hat{K}_D'\pm\mbb{I}}=o(1).
\end{equation}
Since $\hat{P}_D$ can be an arbitrary element in $\mc{P}_k$, by Lemma~\ref{lemma:approx-(anti)commute_with_Pauli_then_Pauli} we know there exists $Q_{1}\in\mc{C}_k^{(1)}$ such that
\begin{equation}
\bignorm{\hat{K}_{D-1}'-Q_{1}}=o(1).
\end{equation}
Now for $1\le j\le D-2$, suppose we have shown that there exists $Q_{j}\in\mc{C}_k^{(j)}$ such that $\bignorm{\hat{K}_{D-j}'-Q_j}=o(1)$, then
\begin{equation}
\bignorm{\hat{K}_{D-j-1}'\hat{P}_{D-j}\hat{K}_{D-j-1}'^\dagger-\hat{P}_{D-j}Q_j}=\bignorm{\hat{K}_{D-j}'-Q_j}=o(1).
\end{equation}
Since $\hat{P}_{D-j}$ can be an arbitrary element in $\mc{P}_k$, and $\hat{P}_{D-j}Q_j\in\mc{C}_k^{(1)}\mc{C}_k^{(j)}=\mc{C}_k^{(j)}$, by Lemma~\ref{lemma:climb_approx_CliffordHierachy} we know there exists $Q_{j+1}\in\mc{C}_k^{(j+1)}$ such that
\begin{equation}
\bignorm{\hat{K}_{D-j-1}'-Q_{j+1}}=o(1).
\end{equation}
In particular, there exists $Q_{D-1}\in\mc{C}_k^{(D-1)}$ such that 
\begin{equation}
\bignorm{\hat{U}'\hat{P}_1\hat{U}'^\dagger-Q_{D-1}}=\bignorm{\hat{K}_{1}'-Q_{D-1}}=o(1).
\end{equation}
Since $\hat{P}_{1}$ can be arbitrary element in $\mc{P}_k$, by Lemma~\ref{lemma:climb_approx_CliffordHierachy} we know that there exists $Q_{D}\in\mc{C}_k^{(D)}$ such that $\bignorm{\hat{U}'-Q_D}=o(1)$. Together with $\|\hat{U}-\hat{U}'\|=o(1)$ we obtain
\begin{equation}
\bignorm{J_2^\dagger UJ_1-Q_D}=\bignorm{\hat{U}-Q_D}=o(1).
\end{equation}
\end{proof}

\subsection{Example -- LRM family emerging from the toric code}
\label{app:t0}

In the following, we discuss in detail a concrete basic example of LRM family from the toric code, which may help consolidate the general proof intuition.

\begin{exmp}\label{thm:toric_code}
The logical nonstabilizer state $\ket{\overline{T0}}$ of the toric code has LRM.
\end{exmp}

As a fundamental feature, all code states in the toric code on a $L\times L$ lattice ($n=2L^2$ physical qubits, $2$ logical qubits) have long-range entanglement, requiring local circuits of depth $\Theta(L)$ to prepare from a product state~\cite{Topological2022Dennis,LiebRobinson2006Bravyi}.
Notice that while $\ket{\overline{T0}}$ has circuit complexity $\Theta(L)$, we cannot immediately rule out the possibility that it is mapped to a stabilizer state, thus losing all magic, by a shallow circuit.
The ``topological nonstabilizerness" of $\ket{\overline{T0}}$ can be visualized by its Pauli spectrum.
Recall that for any stabilizer state $\ket{\zeta}$ and Pauli string $P$ with $+1$ phase, we have $\bra{\zeta}P\ket{\zeta}=\pm1$ when $\pm P$ belongs to the stabilizer group of $\ket{\zeta}$, and $\bra{\zeta}P\ket{\zeta}=0$ otherwise.
Due to the global \emph{non-Pauli} stabilizer generator $\frac{1}{\sqrt{2}}(\overline{X}_1+\overline{Z}_1)$ for $\ket{\overline{T0}}$ (see Fig.~\ref{fig:logical_T}), we find $\bra{\overline{T0}}P\ket{\overline{T0}}=\pm\frac{1}{\sqrt{2}}$ for exponentially many $P$ with support wrapping around the torus. This suggests that shallow circuits on the torus cannot organize its Pauli spectrum to be $-1,0,1$ to match that of a stabilizer state.

\begin{figure}[t]
\centering
\includegraphics[width=0.45\textwidth]{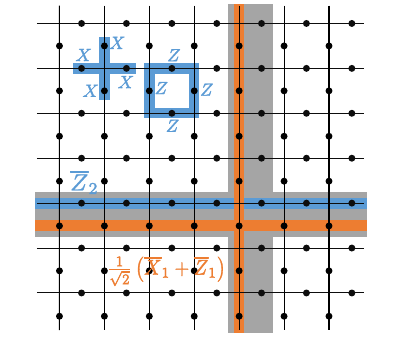}
\caption{{LRM from logical nonstabilizerness.} A set of $n$ stabilizer generators of the logical state $\ket{\overline{T0}}$ in toric code is shown, one of them is non-Pauli.
We define $\ket{T}:=\operatorname{cos}(\pi/8)\ket{0}+\operatorname{sin}(\pi/8)\ket{1}$, the $+1$ eigenstate of $\frac{1}{\sqrt{2}}(X+Z)$.
The Pauli generators (blue) consist of $n-2$ local ones and a global $\overline{Z}_2$, which is a tensor product of Pauli $Z$'s around the blue handle. The orange one is a global non-Pauli generator $\frac{1}{\sqrt{2}}(\overline{X}_1+\overline{Z}_1)$, which introduces global $\frac{1}{\sqrt{2}}$'s into the Pauli spectrum of $\ket{\overline{T0}}$, producing ``topological nonstabilizerness". Tracing out the entire region outside an arbitrary gray ``cross" will not decrease the amount of nonstabilizerness contained in $\ket{\overline{T0}}$.}
\label{fig:logical_T}
\end{figure}

\begin{proof}[Proof of the LRM-ness of $\ket{\overline{T0}}$ in toric code]
We prove by contradiction. 
Suppose on the contrary that $\ket{\overline{T0}}$ does not have LRM. Then, for an infinite family of lattice size $L$, there exists a shallow circuit $U$ such that $U\ket{\overline{T0}}=\ket{S}$, where $\ket{S}$ is a stabilizer state.
Since $U$ is shallow, there exists a constant $C$ (independent of $L$) that upper bounds the diameters of the $n-2$ deformed local generators $\{UA_vU^\dagger\}$, $\{UB_pU^\dagger\}$.
There are $3$ ``brother" states of $\ket{S}=U\ket{\overline{T0}}$, namely $U\overline{Y}_1\ket{\overline{T0}}$, $U\overline{Y}_2\ket{\overline{T0}}$, and $U\overline{Y}_1\overline{Y}_2\ket{\overline{T0}}$. These brother states and $\ket{S}$ are mutually orthogonal, and their reduced density matrices are identical when restricted to systems of diameter $\le C$.
At most $3$ such brother states of $\ket{S}$ exist since these states are all invariant under the action of $n-2$ deformed generators $\{UA_vU^\dagger\}$, $\{UB_pU^\dagger\}$.

Now we show that the deformed code space $U\Pi U^\dagger$ is also a TSC, where $\Pi$ denotes the projector onto the original toric code.
Note that the deformed generators $\{UA_vU^\dagger\}$, $\{UB_pU^\dagger\}$ may not be Pauli strings. We find the Pauli stabilizers of $U\Pi U^\dagger$ from the stabilizer group $G$ of $\ket{S}$.
We claim that there must exist $n-2$ independent Pauli strings in $G$ with diameters $\le C$, where ``independent" means they cannot multiply to identity.
Suppose, for example, that there exists and exists at most $n-3$ such local Pauli strings, then we can construct $7$ brother states for $\ket{S}$ as follows. 
We first find $3$ Pauli strings $Q_1,Q_2,Q_3$ in $G$, satisfying that $Q_1,Q_2,Q_3$ together with the $n-3$ local Pauli strings generate $G$. 
The support of $Q_1,Q_2,Q_3$ must have diameter $>C$ by assumption.
We can then examine the states stabilized by these $n-3$ local Pauli strings and $\pm Q_1,\pm Q_2,\pm Q_3$, get $7$ brother states for $\ket{S}$ and leading to a contradiction.
Therefore, we proved the claim that there are $n-2$ independent $C$-local Pauli strings that stabilize $\ket{S}$. 
Since $U$ is shallow, the deformed code states in $U\Pi U^\dagger$ are locally indistinguishable from $\ket{S}$ on any region of constant size, meaning that these $n-2$ local Pauli strings also fix other code states.
Additionally, the shallow circuit $U$ will degrade the distance of toric code, $L$, by at most a constant factor, so the deformed code family $U\Pi U^\dagger$ also has macroscopic distance, thereby proving that $U\Pi U^\dagger$ is a TSC.

Now we know that $U$ is a shallow circuit connecting the toric code and another TSC such that the logical magic state $\ket{\overline{T0}}$ is mapped to a stabilizer state $\ket{S}$, thus is a logical stabilizer state in the deformed TSC.
By the Bravyi–König theorem~\cite{Bravyi2013Classification}, we know $U$ is a logical Clifford gate, impossible to transform $\ket{\overline{T0}}$ into $\ket{S}$. Therefore, we conclude that $\ket{\overline{T0}}$ in toric code has LRM.
\end{proof}

We comment that LRM can be generated by only one non-Clifford gate: $\ket{\overline{T0}}$ can be obtained by first preparing the 2-qubit state $\ket{T0}$ with a non-Clifford gate $T$, and then perform the Clifford encoding circuit for toric code. 
We also notice an intriguing property of topological nonstabilizerness of TSCs: tracing out qubits from the union of several contractable regions on the torus (see Fig.~\ref{fig:logical_T}) does not reduce the amount of nonstabilizerness in $\ket{\overline{T0}}$, because the erasure of such regions is correctable by Clifford recovery operations.

\section{Application to quantum codes -- gate and symmetry testing}\label{app:logical_gate_testing}

As a concrete application of the connection we establish between magic complexity and fault-tolerant logical gates on quantum codes, we devise a simple logical gate/symmetry testing scheme, which is expected to be widely useful in QEC and many-body physics contexts. 
Specifically, we derive a concrete criterion for ruling out any transversal implementation of certain target logical gates (which also corresponds to onsite symmetries in physical contexts) that only involves simple calculations of Pauli expectation values of code states. Here we report further details for this result and provide two prominent use cases.

\subsection{General criterion}

Consider an $[\![n,k]\!]$ stabilizer code. If a ($k$-qubit) non-Clifford gate $U$ admits a transversal implementation, then for any state $\ket{\psi}\in\{U\ket{S}:\ket{S}\text{ a $k$-qubit stabilizer state}\}$, the corresponding logical state $\ket{\overline{\psi}}$ exhibits $\mathrm{SRM}_0$ (its magic can be erased by a layer of transversal unitaries).
By Lemma~\ref{lemma:pauli_support_invariance} below, for any region $R\subset[n]$, $\sum_{P:\supp(P)=R}|\bra{\overline{\psi}}P\ket{\overline{\psi}}|^2=\sum_{P:\supp(P)=R}|\bra{\overline{S}}P\ket{\overline{S}}|^2$, hence it is an integer. Consequently:

\begin{thm}[Transversal gate/symmetry testing criterion (formal)]\label{thm:transversal_testing_app}
Let $J:(\mathbb{C}^2)^{\otimes k}\to (\mathbb{C}^2)^{\otimes n}$ be a Clifford encoding isometry which defines an $[\![n,k]\!]$ stabilizer code.
Let $U$ be a $k$-qubit unitary. Suppose there exist a $k$-qubit stabilizer state $\ket{S}$ and a subset $R\subset[n]$ such that for $\ket{\psi}=U\ket{S}$ we have
\begin{equation}\label{eq:transversal_gate_testing_app}
\sum_{P\in\mc{P}_n^+:\,\supp(P)=R} 
\Tr\big(P\ketbra{\overline{\psi}}{\overline{\psi}}\big)^2
\notin \mathbb{Z},
\end{equation}
where $\ket{\overline{\psi}}:=J\ket{\psi}$ and $\mc{P}_n^+=\{\mbb{I},X,Y,Z\}^{\otimes n}$, then $\ket{\overline{\psi}}$ has $\mathrm{LRM}_0$, and there do not exist single-qubit unitaries $V_1,\cdots,V_n$ such that
\begin{equation}
(V_1\otimes\cdots\otimes V_n)J = e^{i\theta}JU,
\end{equation}
where $e^{i\theta}$ is a global phase.
\end{thm}

\begin{proof}
Suppose for the sake of contradiction that $(V_1\otimes\cdots\otimes V_n)J = e^{i\theta}JU$ for single-qubit unitaries $V_i$.
Then $\ket{\overline{\psi}}$ has $\mathrm{SRM}_0$, because
\begin{equation}
\ketbra{\overline{\psi}}{\overline{\psi}}=J U \ketbra{S}{S} U^\dagger J^\dagger
= (V_1\otimes\cdots\otimes V_n)J\ketbra{S}{S}J^\dagger(V_1\otimes\cdots\otimes V_n)^\dagger.
\end{equation}
By Lemma~\ref{lemma:pauli_support_invariance} below, we have
\begin{align}
\sum_{P\in\mc{P}_n^+:\,\supp(P)=R} 
\Tr(P\ketbra{\overline{\psi}}{\overline{\psi}})^2=&\sum_{P\in\mc{P}_n^+:\,\supp(P)=R}\Tr\Big(P(V_1\otimes\cdots\otimes V_n)J\ketbra{S}{S}J^\dagger(V_1\otimes\cdots\otimes V_n)^\dagger\Big)^2\\
=&\sum_{P\in\mc{P}_n^+:\,\supp(P)=R}\Tr(PJ\ketbra{S}{S}J^\dagger)^2,
\end{align}
which is an integer since $J\ket{S}$ is a stabilizer state (so $\bra{S}J^\dagger PJ\ket{S}\in\{-1,0,+1\}$). This contradicts \eqref{eq:transversal_gate_testing_app}.
\end{proof}

\begin{lem}\label{lemma:pauli_support_invariance}
For any $n$-qubit state $\rho$, any $n$-qubit unitary $U=U_1\otimes\cdots\otimes U_n$ which is a tensor product of single-qubit unitaries, and any subset $R\subset[n]$, we have
\begin{equation}
\sum_{P\in\mc{P}_n^+:\supp(P)=R}\Tr(P\rho)^2=\sum_{P\in\mc{P}_n^+:\supp(P)=R}\Tr(PU\rho U^\dagger)^2.
\end{equation}
\end{lem}

\begin{proof}
Let $\mc H$ be the real Hilbert space of $n$-qubit Hermitian operators (which has dimension $4^n$) equipped with the Hilbert--Schmidt inner product $\langle A,B\rangle \coloneqq \Tr(AB)$.
Let
\begin{equation}
\mc V_R \coloneqq\mathrm{span}_{\mathbb R}\{P\in\mc P_n^+:\supp(P)=R\} \subset \mc H.
\end{equation}
Since $\Tr(PQ)=2^n\delta_{P,Q}$ for $P,Q\in\mc P_n^+$, the set
\begin{equation}
\Big\{\widetilde P \equiv 2^{-n/2}P: P\in\mc P_n^+,\ \supp(P)=R\Big\}
\end{equation}
is an orthonormal basis of $\mc V_R$.

Let $\mathrm{Proj}_R:\mc H\to \mc V_R$ be the orthogonal projector onto $\mc{V}_R$.
Then for any $\rho\in\mc H$,
\begin{equation}\label{eq:PiR_norm}
\|\mathrm{Proj}_R(\rho)\|_{\mathrm{HS}}^2
=\sum_{P\in\mc P_n^+:\supp(P)=R}\langle \widetilde P,\rho\rangle^2
=\sum_{P\in\mc P_n^+:\supp(P)=R}2^{-n}\Tr(P\rho)^2,
\end{equation}
where $\|\cdot\|_{\mathrm{HS}}$ is the Hilbert--Schmidt norm.

Now consider the conjugation map $\operatorname{Ad}_U:\mc H\to\mc H$, $\operatorname{Ad}_U(A)\coloneqq UAU^\dagger$.
It is an isometry for $\langle\cdot,\cdot\rangle$:
\begin{equation}
\langle \operatorname{Ad}_U(A),\operatorname{Ad}_U(B)\rangle=\Tr(UAU^\dagger UBU^\dagger)=\Tr(AB)=\langle A,B\rangle.
\end{equation}
Moreover, $\operatorname{Ad}_U$ preserves $\mc V_R$. Indeed, if $P\in\mc P_n^+$ has $\supp(P)=R$, then
\begin{equation}
UPU^\dagger=\bigotimes_{i\in R} (U_i P_i U_i^\dagger) \otimes \bigotimes_{i\notin R} \mbb{I},
\end{equation}
where for each $i\in R$ the single-qubit operator $U_i P_i U_i^\dagger$ is Hermitian and traceless and hence lies in
$\mathrm{span}_{\mathbb R}\{X,Y,Z\}$; in particular, it has no $\mbb{I}$ component. Therefore, every term in the Pauli expansion of $UPU^\dagger$ still has support exactly $R$, so $\operatorname{Ad}_U(\mc V_R)\subseteq \mc V_R$. Since $\operatorname{Ad}_U$ is invertible, we have
$\operatorname{Ad}_U(\mc V_R)=\mc V_R$.

Because $\operatorname{Ad}_U$ is an isometry and $\mc V_R$ is invariant, the projector commutes with $\operatorname{Ad}_U$:
\begin{equation}
\mathrm{Proj}_R\circ \operatorname{Ad}_U = \operatorname{Ad}_U\circ \mathrm{Proj}_R.
\end{equation}
Therefore,
\begin{equation}
\|\mathrm{Proj}_R(U\rho U^\dagger)\|_{\mathrm{HS}}^2
=\|\operatorname{Ad}_U(\mathrm{Proj}_R(\rho))\|_{\mathrm{HS}}^2
=\|\mathrm{Proj}_R(\rho)\|_{\mathrm{HS}}^2.
\end{equation}
Then the claimed identity follows from applying \eqref{eq:PiR_norm} to $\rho$ and $U\rho U^\dagger$.
\end{proof}

\subsection{Pauli spectrum of logical states}

{To apply the criterion in Theorem~\ref{thm:transversal_testing_app}, the following lemma which characterizes the Pauli spectrum of logical states is useful.} 

\begin{lem}
\label{lemma:logical_state_spectrum}
Let $G\subset \mc{P}_n$ be the stabilizer group of an $[\![n,k]\!]$ stabilizer code.
Let $V$ be its encoding Clifford unitary, so that for every $k$-qubit density operator $\rho$, $\overline{\rho}\coloneqq V\big(\rho\otimes \ketbra{0^{n-k}}{0^{n-k}}\big)V^\dagger$
is a state supported on the codespace.
For every $Q\in\mc P_k^+$, define an encoded logical Pauli
$\overline{Q}\coloneqq V\big(Q\otimes \mbb I_2^{\otimes (n-k)}\big)V^\dagger$.
Then
\begin{equation}\label{eq:encoded_state_expansion_full}
\overline{\rho}=\frac{1}{2^n}\sum_{Q\in\mc P_k^+}\sum_{g\in G}\Tr(Q\rho)\,\overline{Q}g.
\end{equation}
\end{lem}

Note that the set $\{\overline{Q}g: Q\in\mc P_k^+,g\in G\}$ consists of $4^k2^{n-k}$ distinct elements, and up to phases, consists of all elements in $\mc{N}(G)$ (which is a group of size $4^{k+1}2^{n-k}$).
Elements in $\{\overline{Q}g: Q\in\mc P_k^+,g\in G\}$ have global phases $+1$ or $-1$ since $[\overline{Q},g]=0$. Let $\mathrm{sgn}(\overline{Q}g)\in\{+1,-1\}$ be the global phase. The Pauli spectrum of $\overline{\rho}$ can be read off directly from the decomposition
\begin{equation}
\overline{\rho}=\frac{1}{2^n}\sum_{Q\in\mc P_k^+}\sum_{g\in G}\Big[\mathrm{sgn}(\overline{Q}g)\Tr(Q\rho)\Big]\Big[\mathrm{sgn}(\overline{Q}g)\overline{Q}g\Big].
\end{equation}

\begin{proof}[Proof of Lemma~\ref{lemma:logical_state_spectrum}]
We can write
\begin{equation}
\ketbra{0^{n-k}}{0^{n-k}}=\frac{1}{2^{n-k}}\sum_{\mathbf b\in\{0,1\}^{n-k}} Z^{\mathbf b}
\end{equation}
with $Z^{\mathbf b}\coloneqq Z^{b_1}\otimes\cdots\otimes Z^{b_{n-k}}$.
Expanding $\rho$ in the Hermitian Pauli basis yields
\begin{equation}\label{eq:pauli_expansion_rho}
\rho=\frac{1}{2^k}\sum_{Q\in\mc P_k^+}\Tr(Q\rho)\,Q.
\end{equation}
Therefore,
\begin{align}
\overline{\rho}
&=V\big(\rho\otimes \ketbra{0^{n-k}}{0^{n-k}}\big)V^\dagger\\
&=\frac{1}{2^k}\sum_{Q\in\mc P_k^+}\Tr(Q\rho)\cdot \frac{1}{2^{n-k}}
\sum_{\mathbf b\in\{0,1\}^{n-k}} V\big(Q\otimes Z^{\mathbf b}\big)V^\dagger\\
&=\frac{1}{2^n}\sum_{Q\in\mc P_k^+}\Tr(Q\rho)\sum_{\mathbf b}
V\big((Q\otimes \mbb I)(\mbb I\otimes Z^{\mathbf b})\big)V^\dagger\\
&=\frac{1}{2^n}\sum_{Q\in\mc P_k^+}\Tr(Q\rho)\sum_{\mathbf b}
\underbrace{V(Q\otimes \mbb I)V^\dagger}_{\overline{Q}}
\underbrace{V(\mbb I\otimes Z^{\mathbf b})V^\dagger}_{\in G}.
\end{align}

Now for $i=1,\cdots,n-k$, define 
\begin{equation}
g_i\coloneqq V Z_{k+i}V^\dagger,
\end{equation}
which are mutually commuting and independent Pauli operators that generate $G$. Moreover,
\begin{equation}
V(\mbb I\otimes Z^{\mathbf b})V^\dagger=g_1^{b_1}\cdots g_{n-k}^{b_{n-k}},
\end{equation}
so as $\mathbf b$ ranges over $\{0,1\}^{n-k}$, $V(\mbb I\otimes Z^{\mathbf b})V^\dagger$ is a bijection onto $G$.
Therefore, we can write
\begin{equation}
\overline{\rho}
=\frac{1}{2^n}\sum_{Q\in\mc P_k^+}\sum_{g\in G}\Tr(Q\rho)\,\overline Q\,g.
\end{equation}

Note that, equivalently, with  the codespace projector given by $\Pi=\frac{1}{2^{n-k}}\sum_{g\in G} g$, we can write
\begin{equation}
\overline{\rho}
=\Big(\frac{1}{2^k}\sum_{Q\in\mc P_k^+}\Tr(Q\rho)\,\overline{Q}\Big)\Pi.
\end{equation}
\end{proof}

\subsection{No transversal $T$ on the gross code}

\begin{figure}[t]
\centering
\includegraphics[width=0.7\textwidth]{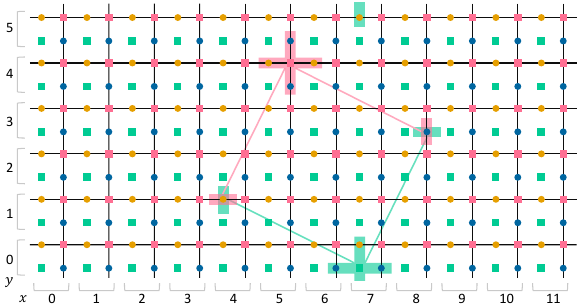}
\caption{The qubit layout and the stabilizer generators of the $[\![144,12,12]\!]$ gross code~\cite{Bravyi2024High,yoder2025tourgrossmodularquantum}. For illustration, the $X$-check $x^5y^4X(A, B)$ and the $Z$-check $x^7Z(B^\T,A^\T)$ are highlighted; they commute since their supports overlap on exactly two qubits, namely $(L,x^8y^3)$ and $(R,x^4y)$.}
\label{fig:gross_code}
\end{figure}

To exemplify the applicability of our method on a practically important finite code, we examine the $[\![144,12,12]\!]$ gross code, a representative example of the bivariate bicycle code family~\cite{Bravyi2024High,yoder2025tourgrossmodularquantum} which has received substantial interest recently for its desirable features for building fault-tolerant quantum architectures.
See Ref.~\cite{yoder2025tourgrossmodularquantum} for the detailed definitions and notations regarding this code that we use in the subsequent discussion.

This code is defined on a lattice of size $12\times6$ on the torus, with qubits placed on the edges. There are 72 qubits placed on the vertical lines (colored in blue and named as ``L qubits"), and 72 qubits placed on the horizontal lines (colored in orange and named as ``R qubits").
The 144 qubits are labeled by $L/R$ and monomials, e.g. $(L,x^8y^3)$ for the blue qubit with coordinate $(8,3)$, see Fig.~\ref{fig:gross_code}.

There are 72 $X$-type Pauli stabilizer generators on vertices (not independent, colored in pink, all weight 6) and 72 $Z$-type Pauli stabilizer generators on plaquettes (not independent, colored in green, all weight 6).
Let
\begin{equation}
A=1+y+x^3y^{-1},\quad B=1+x+x^{-1}y^{-3},
\end{equation}
the $X$-type and $Z$-type stabilizer generators are defined as:
\begin{align}
\gamma X(A,B)=&X(\gamma A,\gamma B)\text{ for all monomials }\gamma,\\
\gamma Z(B^{\T}, A^{\T})=&Z(\gamma B^{\T},\gamma A^{\T})\text{ for all monomials }\gamma.
\end{align}
Here, for example, $x^5y^4X(A,B)$ denotes the Pauli string that acts by $X$ on the $L$ qubits indexed by the monomials appearing in $x^5y^4A=x^5y^4+x^5y^5+x^8y^3$, also by $X$ on the $R$ qubits indexed by the monomials appearing in $x^5y^4B$, and by $\mbb{I}$ on all remaining qubits.
See Fig.~\ref{fig:gross_code} for $x^5y^4X(A,B)$ and $x^7Z(B^\T,A^\T)$.
Let $G$ be the stabilizer group generated by these stabilizers.

The logical Pauli group is generated by $\overline{X}_1,\cdots,\overline{X}_{12},\overline{Z}_1,\cdots,\overline{Z}_{12}$. Let
\begin{equation}
\overline{X}_1=X(p,q),\quad\overline{X}_7=X(r,s),\quad\overline{Z}_1=Z(\nu s^{\T},\nu r^{\T}),\quad\overline{Z}_7=Z(\mu q^{\T},\mu p^{\T}),
\end{equation}
with polynomials
\begin{align}
p&=x^4+x^5+x^6y+x^4y^2+x^5y^4+x^6y^5,\\
q&=x^3+x^4+x^3y+x^3y^2+x^4y^2+x^3y^5,\\
r&=1+x^8+xy+x^9y+x^3y^4+x^{11}y^4,\\
s&=x+x^9+x^4y^4+x^8y^4+y^5+x^8y^5,\\
\mu&=\nu=xy.
\end{align}
See Fig.~\ref{fig:gross_code_LRM0} for $\overline{X}_1$. For $i=1,\cdots,6$, define
\begin{align}
&\overline{X}_i=\alpha_i\overline{X}_1=X(\alpha_ip,\alpha_iq),
&&\overline{X}_{i+6}=\beta_i^{\T}\overline{X}_7=X(\beta_i^{\T}r,\beta_i^{\T}s),\\
&\overline{Z}_i=\beta_i\overline{Z}_1=Z(\beta_i\nu s^{\T},\beta_i\nu r^{\T}),
&&\overline{Z}_{i+6}=\alpha_i^{\T}\overline{Z}_7=Z(\alpha_i^{\T}\mu q^{\T},\alpha_i^{\T}\mu p^{\T}),
\end{align}
where
\begin{align}
\alpha&=(1,x^3y^5,x^{11}y^5,x^{10}y,x^5y^4,x^4y^2),\\
\beta&=(1,xy,x^4,x^5y^4,x^4y^3,x^3y^5).
\end{align}

We write $P_1\doteq P_2$ if Pauli strings $P_1$ and $P_2$ are equal up to a global phase. 

The full argument using Theorem~\ref{thm:transversal_testing_app} to rule out transversal $T$ on this gross code proceeds in the following three steps.

\textit{(i) No transversal implementation of the $T$ gate on the first logical qubit.}

Denote $\ket{T}=T\ket{+}=\frac{1}{\sqrt{2}}(\ket{0}+e^{i\pi/4}\ket{1})$, corresponding to the Bloch vector $(\frac{1}{\sqrt{2}},\frac{1}{\sqrt{2}},0)$.
Consider the 12-qubit state $\ket{\psi}\coloneqq(T\otimes\mbb{I}_2^{\otimes11})\ket{+^{12}}$. By Lemma~\ref{lemma:logical_state_spectrum}, the corresponding logical state can be written as
\begin{equation}\label{eq:Pauli_spectrum_no_T_on_first_gross}
\ketbra{\overline{\psi}}{\overline{\psi}}=\frac{1}{2^n}\sum_{Q\in\mc P_{12}^+}\sum_{g\in G}\bra{\psi}Q\ket{\psi}\,\overline{Q}g.
\end{equation}
By $\bra{\psi}X_1\ket{\psi}=\frac{1}{\sqrt{2}}$, we know
\begin{align}
\bra{\overline{\psi}}\overline{X}_1\ket{\overline{\psi}}=&\frac{1}{2^n}\sum_{Q\in\mc P_{12}^+}\sum_{g\in G}\bra{\psi}Q\ket{\psi}\Tr\big(\overline{X}_1\overline{Q}g\big)\\
=&\frac{1}{2^n}\bra{\psi}X_1\ket{\psi}\Tr\big(\overline{X}_1\overline{X}_1\mbb{I}\big)=\frac{1}{\sqrt{2}},
\end{align}
where the second equality holds since only $Q=X_1$ and $g=\mbb{I}$ can make $\Tr(\overline{X}_1\overline{Q}g)\neq0$.

By Eq.~\eqref{eq:Pauli_spectrum_no_T_on_first_gross}, we know
\begin{equation}\label{eq:not_commute_then0_no_T_on_first_gross}
P\notin\mc{N}(G)\Longrightarrow\bra{\overline{\psi}}P\ket{\overline{\psi}}=0.
\end{equation}
Now we show that
\begin{equation}\label{eq:commute_samesuppthensame_no_T_on_first_gross}
P\in\mc{N}(G)\text{ and }\supp(P)=\supp(\overline{X}_1)\Longrightarrow P\doteq\overline{X}_1.
\end{equation}

Any Pauli string can be written (up to phase) as $P\doteq X(a,b)Z(c,d)$. 
We first prove that
\begin{equation}
(c,d)=(0,0).
\end{equation}
Since $\supp(P)=\supp(X(a,b))\cup\supp(Z(c,d))$, we have $\supp(Z(c,d))\subset\supp(\overline{X}_1)$.
By $P\in\mc{N}(G)$, we know that $P$ commutes with all the $X$-checks; equivalently, $Z(c,d)$ commutes with all the $X$-checks.
As $X(a,b)$ commutes with all $X$-checks, this is
equivalent to $Z(c,d)$ commuting with every $X$-check, i.e.
\begin{equation}\label{eq:even_overlap_condition_no_T_on_first_gross}
\big|\supp(Z(c,d))\cap\supp(\gamma X(A,B))\big|\equiv 0 \pmod 2,
\quad\text{for all } \gamma.
\end{equation}

\begin{figure}[t]
\centering
\includegraphics[width=0.7\textwidth]{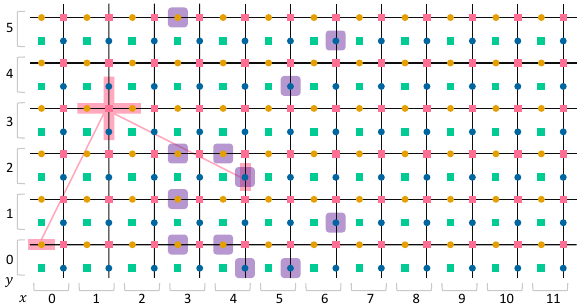}
\caption{The logical Pauli $\overline{X}_1$ and the $X$-check $xy^3X(A,B)$ overlap at only one qubit. The purple squares label the qubits that $\overline{X}_1$ acts on. The only Pauli string that commutes with the stabilizer group of the gross code and has the same support as $\overline{X}_1$ is $\overline{X}_1$ itself (up to a global phase).}
\label{fig:gross_code_LRM0}
\end{figure}

We have
\begin{equation}
\begin{aligned}
\supp(\overline{X}_1)=\{&(L,x^4),(L,x^4y^2),(L,x^5),(L,x^5y^4),(L,x^6y),(L,x^6y^5),\\
&(R,x^3),(R,x^3y),(R,x^3y^2),(R,x^3y^5),(R,x^4),(R,x^4y^2)\}.
\end{aligned}
\end{equation}
For each of the first ten qubits $q\in\supp(\overline X_1)-\{(R,x^4),(R,x^4y^2)\}$, there exists an $X$-check $\gamma_q X(A,B)$ such that
\begin{equation}\label{eq:single_overlap_checks_no_T_on_first_gross}
\supp(\gamma_q X(A,B))\cap \supp(\overline X_1)=\{q\}.
\end{equation}
Explicitly, one may take
\begin{equation}
\begin{array}{c|cccccccccc}
q & (L,x^4) & (L,x^4y^2) & (L,x^5) & (L,x^5y^4) & (L,x^6y) & (L,x^6y^5)
  & (R,x^3) & (R,x^3y) & (R,x^3y^2) & (R,x^3y^5)\\ \hline
\gamma_q & xy & xy^3 & x^5 & x^5y^4 & x^6y & x^6y^5 & x^2 & x^3y & x^2y^2 & x^3y^5
\end{array}.
\end{equation}
See Fig.~\ref{fig:gross_code_LRM0} for the $q=(L,x^4y^2)$ case.
Now if $Z(c,d)$ acted on any such $q$, then $\supp(Z(c,d))\cap\supp(\gamma_q X(A,B))$ would have odd cardinality (because the intersection would contain exactly $q$), contradicting \eqref{eq:even_overlap_condition_no_T_on_first_gross}.
Hence $\supp(Z(c,d))\subset \{(R,x^4),(R,x^4y^2)\}$.

The same ``single-overlap'' argument eliminates these two remaining positions.
There exist $X$-checks $\gamma_q X(A,B)$ such that
\begin{equation}
\supp(\gamma_q X(A,B))\cap\{(R,x^4),(R,x^4y^2)\}=\{q\},
\end{equation}
For example, we can take
\begin{equation}
\begin{array}{c|cc}
q & (R,x^4) & (R,x^4y^2)\\ \hline
\gamma_q & x^4 & x^4y^2
\end{array}.
\end{equation}
Therefore $Z(c,d)$ cannot act on either $(R,x^4)$ or $(R,x^4y^2)$ while still satisfying
\eqref{eq:even_overlap_condition_no_T_on_first_gross}. We conclude that $(c,d)=(0,0)$.

Now we know $P\doteq X(a,b)$, so by $\supp(P)=\supp(\overline{X}_1)$ we know $(a,b)=(p,q)$, that is, $P\doteq\overline{X}_1$.

Therefore, by \eqref{eq:not_commute_then0_no_T_on_first_gross} and \eqref{eq:commute_samesuppthensame_no_T_on_first_gross} the only Pauli string $P\in\mc P_n^+$ with $\supp(P)=\supp(\overline X_1)$ that can contribute nontrivially to $\bra{\overline\psi}P\ket{\overline\psi}$ is $P\doteq \overline X_1$.
Hence
\begin{equation}
\sum_{P\in\mc{P}_n^+:\,\supp(P)=\supp(\overline{X}_1)} \big|\bra{\overline{\psi}} P\ket{\overline{\psi}}\big|^2
=\big|\bra{\overline{\psi}} \overline{X}_1\ket{\overline{\psi}}\big|^2=\frac{1}{2} \notin \mathbb{Z}.
\end{equation}
Due to Theorem~\ref{thm:transversal_testing_app}, we conclude that there is no transversal implementation of $T\otimes\mbb{I}_2^{\otimes 11}$ for the gross code.

\textit{(ii) No transversal implementation of the $T$ gate on the 7-th logical qubit.}

The proof approach is the same as (i). We need to show
\begin{equation}
P\in\mc{N}(G)\text{ and }\supp(P)=\supp(\overline{X}_7)\Longrightarrow P\doteq\overline{X}_7.
\end{equation}
Suppose $P\doteq X(a,b)Z(c,d)$. We first show that $(c,d)=(0,0)$. Notice that
\begin{equation}
\begin{aligned}
\supp(\overline{X}_7)=\{&(L,1),(L,x^8),(L,xy),(L,x^9y),(L,x^3y^4),(L,x^{11}y^4)\\
&(R,x),(R,x^9),(R,x^4y^4),(R,x^8y^4),(R,y^5),(R,x^8y^5)\}.
\end{aligned}
\end{equation}
By considering $X$-checks $\gamma X(A,B)$ with $\gamma=x^2y^3,x^4y^4,x^7y^5,x^8y^4,x^{10}y^3,x^{11}y^5$, we know $\supp(Z(c,d))$ has no $R$-qubits.
Then consider $X$-checks $\gamma X(A,B)$ with $\gamma=1,x^8,xy,x^9y,x^3y^4,x^{11}y^4$, we know $(c,d)=0$. Hence $P\doteq\overline{X}_7$.
The remainder of the proof proceeds exactly as in part (i).

\textit{(iii) Finally, by translation invariance, none of the remaining $10$ logical qubits admits a transversal implementation of a logical $T$ gate.}

\subsection{No transversal $T$ on general-dimensional toric codes}

Then we consider toric codes in arbitrary dimensions and show that they do not support transversal $T$, providing a prominent example of the application of our method to asymptotic code families.

Similar to the proof of the gross code case, for each logical qubit $i$ there exists a choice of logical $\overline{X}_i$ such that the only Pauli $P\in\mc{P}_n^+$ commuting with $\overline{X}_i$ and satsifying $\supp(P)=\supp(\overline{X}_i)$ is $\overline{X}_i$ itself. Hence, by
\begin{equation}
\sum_{P\in\mc{P}_n^+:\,\supp(P)=\supp(\overline{X}_i)} \big|\bra{\overline{\psi}} P\ket{\overline{\psi}}\big|^2=1/2
\end{equation}
(where $\ket{\psi}=T_i\ket{+^k}$) and Theorem~\ref{thm:transversal_testing_app}, we conclude that transversal implementation of a $T$ gate on any logical qubit is impossible.

\section{Topological order and LRM phases}\label{app:TopologicalOrder}

\subsection{Qudit stabilizer formalism}\label{app:qudit_stab_formalism}

There are various ways to define the stabilizer formalism for qudits of general dimension. Here, we adopt the definition that matches the construction of TSCs in Ref.~\cite{Ellison2022Pauli}, which uses topological stabilizer codes on qudits of different, possibly non-prime, dimensions to realize topological orders.

For any integer $q\ge2$, define the generalized Pauli operators $X^{(q)},Z^{(q)}\in L(\mbb{C}^q)$ by
\begin{equation}
\begin{aligned}
& X^{(q)}\ket{j}=\ket{(j+1)~\operatorname{mod}~q}, \\
& Z^{(q)}\ket{j}=(e^{2\pi i/q})^j\ket{j},
\end{aligned}
\end{equation}
where $i=\sqrt{-1}$.
We have $Z^{(q)}X^{(q)}=e^{2\pi i/q}X^{(q)}Z^{(q)}$.
Define the \emph{Pauli group}~\cite{Hostens2005Stabilizer} on $\mbb{C}^q$ to be 
\begin{equation}
\mc{P}^{(q)}:=\langle e^{i\pi/q}\mbb{I},X^{(q)},Z^{(q)}\rangle.
\end{equation} 
When $q=2$, we get ordinary qubit Pauli operators $X,Z$, and the ordinary qubit Pauli group $\langle i\mbb{I},X,Z\rangle$.

For an $N$-particle system $\mc{H}=\otimes_{i=1}^N\mbb{C}^{q_i}$, where $q_i\in\mbb{Z}_{\ge2}$ for all $i=1,\cdots,N$, define the  \emph{multi-particle Pauli group} to be 
\begin{equation}
\mc{P}^{\mc{H}}:=\otimes_{i=1}^N\mc{P}^{(q_i)},
\end{equation}
which aligns with the tensor structure of $\mc{H}$. 
An element in $\mc{P}^{\mc{H}}$ is called a \emph{Pauli string} on $\mc{H}$.
Notice that $q_1,\cdots,q_N$ can be distinct and can be composite numbers.

We call a subgroup $G\subset\mc{P}^{\mc{H}}$ a \emph{stabilizer group} on $\mc{H}$, if:
\begin{enumerate}
\item $G$ is Abelian.
\item For any $e^{i\theta}\mbb{I}$ with $e^{i\theta}\neq1$, we have $e^{i\theta}\mbb{I}\notin G$.
\end{enumerate}

For every stabilizer group $G$ on $\mc{H}$, we define the corresponding \emph{stabilizer code} to be the subspace $\mc{C}_G$ of $\mc{H}$ that is the common $+1$ eigenspace of elements in $G$, that is,
\begin{equation}
\mc{C}_G:=\operatorname{span}_{\mbb{C}}\{\ket{\psi}\mid g\ket{\psi}=\ket{\psi}\text{ for all }g\in G\}.
\end{equation}
Then \emph{stabilizer states} are pure states corresponding to $1$-dimensional stabilizer codes. We say a pure state on $\mc{H}$ is a \emph{nonstabilizer (magic) state}, if it is not a stabilizer state.

\begin{lem}\label{lemma:qudit_stabilizerprojector}
For an $N$-particle system $\mc{H}=\otimes_{i=1}^N\mbb{C}^{q_i}$, and a stabilizer group $G$ on $\mc{H}$, the projector onto $\mc{C}_G$ is
\begin{equation}
\Pi_G:=\frac{1}{\abs{G}}\sum_{g\in G}g,
\end{equation}
and we have
\begin{equation}
\operatorname{dim}\mc{C}_G=\frac{\prod_{i=1}^Nq_i}{\abs{G}}.
\end{equation}
\end{lem}
\begin{proof}
Since all $g\in G$ are unitary and $G$ is a group, we know $g^\dagger=g^{-1}\in G$. Hence
\begin{equation}
\Pi_G^{\dagger}=\frac{1}{|G|}\sum_{g\in G}g^\dagger=\frac{1}{|G|}\sum_{g\in G}g^{-1}=\frac{1}{|G|}\sum_{g\in G}g=\Pi_G,
\end{equation}
so $\Pi_G$ is Hermitian. Since for all $g\in G$ we have $gG=G$, we know $g\Pi_G=\Pi_G$ and
\begin{equation}
\Pi_G\Pi_G=\frac{1}{|G|}\sum_{g\in G}g\Pi_G=\frac{1}{|G|}\sum_{g\in G}\Pi_G=\Pi_G.
\end{equation}
Therefore, all the eigenvalues of $\Pi_G$ are either $0$ or $1$.

($\operatorname{Im}\Pi_G\subset\mc{C}_G$): For any $\ket{\psi}$ satisfying $\Pi_G\ket{\psi}=\ket{\psi}$, and any $g\in G$, we have
\begin{equation}
g\ket{\psi}=g\Pi_G\ket{\psi}=\Pi_G\ket{\psi}=\ket{\psi}.
\end{equation}
($\mc{C}_G\subset\operatorname{Im}\Pi_G$): For any $\ket{\psi}\in\mc{C}_G$ we have
\begin{equation}
\Pi_G\ket{\psi}=\frac{1}{|G|}\sum_{g\in G}g\ket{\psi}=\frac{1}{|G|}\sum_{g\in G}\ket{\psi}=\ket{\psi}.
\end{equation}
Consequently, $\mc{C}_G=\operatorname{Im}\Pi_G$.

Since $\mbb{I}\in G$, and every element in $G$ except $\mbb{I}$ has trace 0, we have
\begin{equation}
\operatorname{dim}\mc{C}_G=\Tr(\Pi_G)=\frac{\Tr(\mbb{I})}{\abs{G}}=\frac{\prod_{i=1}^Nq_i}{\abs{G}}.
\end{equation}
\end{proof} 

\begin{lem}\label{lemma:qudit_PauliGroup_max_order}
Consider $\mc{H}:=\otimes_{i=1}^N\mbb{C}^{q_i}$.
For all $Q\in\mc{P}^{\mc{H}}$ we have $Q^{2L}=\mbb{I}$, where $L=\operatorname{lcm}(q_1,\cdots,q_N)$ is the least common multiple of $q_1,\cdots,q_N$.
\end{lem}
For example, in qubit case we have $(iX)^4=\mbb{I}$.
\begin{proof}
First notice that
\begin{equation}
\langle e^{i\pi/q_1}\mbb{I},\cdots,e^{i\pi/q_N}\mbb{I}\rangle=\langle e^{i\pi/L}\mbb{I}\rangle.
\end{equation}
Since $Z^{(q)}X^{(q)}=e^{2\pi i/q}X^{(q)}Z^{(q)}$, we know every element of $\mc P^{\mc H}$ has a unique decomposition
\begin{equation}
Q=\zeta\bigotimes_{i=1}^{N}(X^{(q_i)})^{a_i}(Z^{(q_i)})^{b_i},\;\text{ where }\;
\zeta=e^{i\pi k/L},\;
a_i,b_i\in\{0,\cdots,q_i-1\},\;
k\in\mbb{Z}.
\end{equation}
We have $\zeta^{2L}=e^{i2\pi k}=1$. Also for each tensor factor, we know
\begin{equation}
\big[(X^{(q_i)})^{a_i}(Z^{(q_i)})^{b_i}\big]^{2L}=(e^{2\pi i/q_i})^{-L(2L-1)a_ib_i}(X^{(q_i)})^{2La_i}(Z^{(q_i)})^{2Lb_i}=(e^{2\pi i\frac{L}{q_i}})^{-(2L-1)a_ib_i}\mbb{I}=\mbb{I}.
\end{equation}
Therefore $Q^{2L}=\mbb{I}$.
\end{proof}

\begin{lem}\label{lemma:qudit_stabilizercode_has_stabilizerstate}
Consider $\mc{H}:=\otimes_{i=1}^N\mbb{C}^{q_i}$. For every stabilizer group $G$ on $\mc{H}$, there exists a stabilizer group $G'$ on $\mc{H}$ such that $G\subset G'$ and $\abs{G'}=\operatorname{dim}(\mc{H})$.
\end{lem}

\begin{proof}
Using the symplectic representation of the Pauli group, we can show that if a stabilizer group $G$ satisfies $\abs{G}<\operatorname{dim}(\mc{H})$, then there exists a Pauli operator that commutes with all elements of $G$ but is not itself contained in $G$, up to a global phase. Adding this element to $G$, we can generate a strictly larger stabilizer group. By iterating this process, we eventually obtain a stabilizer group $G'\supset G$ such that $\abs{G'}=\operatorname{dim}(\mc{H})$.
\end{proof}

\subsection{TSC and local Hamiltonian realization of topological orders}\label{app:TSC_and_localHamiltonian}

\begin{defn}
Let $\mathcal M$ be a $D$-dimensional manifold with $D\ge 2$.
Let $m\ge1$ be an integer, and $q_1,\cdots,q_m\in\mbb{Z}_{\ge2}$.
An infinite family of subspaces $\{\mc{C}^{(n)}\}_n$ is said to be a \emph{topological stabilizer code (TSC)} on $\mc{M}$ with \emph{local configuration} $(q_1,\cdots,q_m)$ (which captures the local subsystem dimensions) given the following conditions:
\begin{enumerate}
\item For each $n$ there is a regular lattice $\Lambda^{(n)}$ embedded in $\mathcal M$ whose edge set is $E^{(n)}$ with $\lvert E^{(n)}\rvert = n$.
\item Each edge $e\in E^{(n)}$ carries the local Hilbert space $\mc{H}_e:=\otimes_{j=1}^{m}\mathbb C^{q_j}$.
\item $\mc{C}^{(n)}$ is a subspace of $\mc{H}^{(n)}:=\otimes_{e\in E^{(n)}} \mc{H}_e$. All the $\mc{C}^{(n)}$'s have the same dimension.
\item $\mc{C}^{(n)}$ is a stabilizer code, with stabilizer group $G^{(n)}\subset\mc{P}^{\mc{H}^{(n)}}$.
\item There exists a subset $S^{(n)}\subset G^{(n)}$, satisfying $\langle S^{(n)}\rangle=G^{(n)}$, and $\operatorname{Diam}(s)\le\ell$ for all $s\in S^{(n)}$, where $\ell$ is a constant independent of $n$.
\item $\{\mc{C}^{(n)}\}_n$ has macroscopic distance. That is, the distance of $\mc{C}^{(n)}$ tends to $\infty$ when $n\rightarrow\infty$.
\end{enumerate}
\end{defn}
Here we define the code distance by treating the collection of $m$ subsystems on a single edge as a unit.
Specifically, the code has distance $d$ iff for any set of $d-1$ edges, the reduced density matrices of all code states restricted to these edges are identical.

We explicitly write the local tensor structure rather than treating $\otimes_{i=1}^m\mbb{C}^{q_i}$ as $\mbb{C}^{\prod_{i=1}^mq_i}$, because the stabilizer formalism depends on the tensor structure of the Hilbert space.

\begin{defn}
Let $\mathcal M$ be a $D$-dimensional manifold with $D\ge 2$.
Let $m\ge1$ be an integer, and $q_1,\cdots,q_m\in\mbb{Z}_{\ge2}$.
A topological order on $\mathcal M$ is said to be \emph{realized} by a family of local Hamiltonians
$\{H^{(n)}\}_{n}$
with \emph{local configuration} $(q_1,\cdots,q_m)$ given the following conditions:
\begin{enumerate}
\item For each $n$ there is a regular lattice $\Lambda^{(n)}$ embedded in $\mathcal M$ whose edge set is $E^{(n)}$ with $\lvert E^{(n)}\rvert = n$.
\item Each edge $e\in E^{(n)}$ carries the local Hilbert space $\mc{H}_e:=\otimes_{j=1}^{m}\mathbb C^{q_j}$.
\item The Hamiltonian $H^{(n)} = \sum_{i\in I^{(n)}} H_i^{(n)}$ acts on $\mc{H}^{(n)}:=\otimes_{e\in E^{(n)}} \mc{H}_e$.
Every local term satisfies $\operatorname{Diam}\big(H_i^{(n)}\big)\le\ell$, where $\ell$ is a constant independent of $n$. The size of the index set satisfies $|I^{(n)}|=\Theta(n)$. $\|H_i^{(n)}\|_\infty\le1$. $\{H^{(n)}\}$ is gapped.
\item The ground space family of $\{ H^{(n)}\}_n$ realizes this topological order.
\end{enumerate}
If, in addition, the ground space family of $\{H^{(n)}\}_n$ is a TSC, we say that this topological order is \emph{realized by a TSC} with local configuration $(q_1,\cdots,q_m)$.
\end{defn}

\begin{exmp}
The code family defined in \cite[Eq.~(9)]{Ellison2022Pauli} is a TSC on tori with local configuration $q_1=4$ (here $m=1$).
It realizes the double semion phase on tori.
\end{exmp}

\begin{exmp}
Let $\mc{M}$ be an arbitrary 2D manifold.
The code family defined in \cite[Eq.~(131)]{Ellison2022Pauli} is a TSC on $\mc{M}$ with local configuration $(N_1^2,\cdots,N_M^2)$.
It realizes the Abelian twisted quantum double on $\mc{M}$ characterized by the group $\oplus_{i=1}^M\mbb{Z}_{N_i}$.
\end{exmp}

\subsection{Equivalent characterization of LRM phases}\label{app:SRM/LRMphase_TSC}

\begin{thm}\label{thm:SRM/LRMphase_TSC_app}
Given a local configuration, a topological order is an LRM phase $\iff$ it cannot be realized by any TSC with this local configuration.
\end{thm}

\begin{proof}
We prove the equivalent expression: Given a local configuration, a topological order is not an LRM phase $\iff$ it can be realized by a TSC with this local configuration.

($\Leftarrow$):

Fix a local configuration.
By Lemma~\ref{lemma:qudit_stabilizercode_has_stabilizerstate} we know every stabilizer code contains a stabilizer state.
Therefore, the given TSC family (with this local configuration) that realizes this topological order has a family of ground stabilizer states.
By definition, this topological order is not an LRM phase (with this local configuration).

($\Rightarrow$):

Fix a local configuration $(q_1,\cdots,q_m)$.
By definition, we know there exists a local Hamiltonian realization $\{\tilde{H}^{(n)}=\sum_{i\in I^{(n)}}\tilde{H}_i^{(n)}\}$ of this topological order with this local configuration, such that $\{\tilde{H}^{(n)}\}$ has a family of ground states $\{\ket{\phi^{(n)}}\}$ exhibiting SRM. There exists a family of shallow circuit $\{U^{(n)}\}$ and a family of stabilizer states $\{\ket{S^{(n)}}\}$, such that $U^{(n)}\ket{\phi^{(n)}}=\ket{S^{(n)}}$.

Denote $H_i^{(n)}=U^{(n)}\tilde{H}_i^{(n)}U^{(n)\dagger}$, and let $H^{(n)}=U^{(n)}\tilde{H}^{(n)}U^{(n)\dagger}$. $\{H^{(n)}\}$ is a family of local Hamiltonian, with $\ket{S^{(n)}}$ as its ground states. Since $\{U^{(n)}\}$ is shallow, $\{H^{(n)}\}$ also realizes this topological order.

Denote $C$ as a constant independent of $n$ that upper bounds the diameters of local terms $\{H_i^{(n)}\}$.
Let $\mc{A}_n$ be the set of states that are $C$-locally indistinguishable from $\ket{S^{(n)}}$:
\begin{equation}
\begin{aligned}
\mc{A}_n:=\Big\{\ket{\psi}\Big|\Tr_{\overline{J}}\ketbra{\psi}{\psi}=\Tr_{\overline{J}}\ketbra{S^{(n)}}{S^{(n)}},\forall J\subset[n]\text{ with }\operatorname{Diam}(J)\le C\Big\}.
\end{aligned}
\end{equation}

We show that $\mc{A}_n$ is the set of pure ground states of $H^{(n)}$. 
On one hand, for an arbitrary ground state $\ket{\phi}$ of $H^{(n)}$, by the TQO condition we know the reduced density matrices of $\ket{\phi}$ and $\ket{S^{(n)}}$ are the same locally, i.e., $\ket{\phi}\in\mc{A}_n$. On the other hand, for any $\ket{a}\in\mc{A}_n$, by Fact~\ref{fact:pure_locally_same_then_same_Evalue} we know $\bra{a}H_i^{(n)}\ket{a}=\bra{S^{(n)}}H_i^{(n)}\ket{S^{(n)}}$.
Therefore, $\bra{a}H^{(n)}\ket{a}=\bra{S^{(n)}}H^{(n)}\ket{S^{(n)}}=\lambda_{\min}(H^{(n)})$. Hence we have
\begin{equation}\label{eq:qudit_LRM_from_TO_A}
\mc{A}_n=\{\text{pure ground states of } H^{(n)}\}.
\end{equation}

Denote by $G_n$ the stabilizer group of $\ket{S^{(n)}}$. Consider
\begin{equation}
K_n:=\langle\{g\in G_n\mid\operatorname{Diam}(g)\le C\}\rangle,
\end{equation}
which is the group generated by elements in $G_n$ with diameters $\le C$.
Define
\begin{equation}\label{eq:qudit_LRM_from_TO_A'}
\mc{A}'_n:=\{\ket{\psi}\mid P\ket{\psi}=\ket{\psi},\text{for all }P\in K_n\}.
\end{equation}
Now we show that $\mc{A}_n'=\mc{A}_n$.

For all $\ket{a}\in\mc{A}_n$ and all $g\in G_n$ with $\operatorname{Diam}(g)\le C$, we have $\bra{a}g\ket{a}=\bra{S^{(n)}}g\ket{S^{(n)}}=1$, that is, $g\ket{a}=\ket{a}$. Therefore $P\ket{a}=\ket{a}$ for all $P\in K_n=\langle\{g\in G_n\mid\operatorname{Diam}(g)\le C\}\rangle$, implying $\mc{A}_n\subset\mc{A}_n'$.

Consider the quotient group $G_n/K_n$, which is a finite Abelian group.
By the  {fundamental theorem of finite Abelian groups} (see e.g.~\cite{gallian2021contemporary}), we know that there exists an isomorphism
\begin{equation}
\varphi:G_n/K_n\rightarrow\mbb{Z}_{t_1}\oplus\cdots\oplus\mbb{Z}_{t_k},
\end{equation}
where $t_1,\cdots,t_k\in\mbb{Z}_{\ge2}$. (Actually, we have $t_1=p_1^{n_1},\cdots,t_k=p_k^{n_k}$, where $p_1,\cdots,p_k$ are (not necessarily distinct) prime numbers, and $n_1,\cdots,n_k\in\mbb{Z}_{\ge1}$~\cite{gallian2021contemporary}.)
Consider $e_1,\cdots,e_k\in\mbb{Z}_{t_1}\oplus\cdots\oplus\mbb{Z}_{t_k}$, where $e_i$ takes value $1$ in the $i$'th summand and $0$ in the others, e.g., $e_1=(1,0,\cdots,0)$.
Note that $e_1,\cdots,e_k$ \emph{independently} generate $\mbb{Z}_{t_1}\oplus\cdots\oplus\mbb{Z}_{t_k}$, that is,
\begin{equation}
\langle e_1,\cdots,e_k\rangle=\mbb{Z}_{t_1}\oplus\cdots\oplus\mbb{Z}_{t_k}
\end{equation}
and
\begin{equation}
a_1e_1+\cdots+a_ke_k=(0,\cdots,0)\;\;\Rightarrow\;\;a_1,\cdots,a_k\equiv0\;\text{ mod }\; t_1,\cdots,t_k.
\end{equation}
Since $\varphi$ is an isomorphism, $\varphi^{-1}(e_1),\cdots,\varphi^{-1}(e_k)$ independently generate $G_n/K_n$.
Denote $\pi:G_n\rightarrow G_n/K_n$ be the natural homomorphism.
For $i=1,\cdots,k$, take Pauli strings
\begin{equation}
Q_i\in G_n,\text{ such that } \pi(Q_i)=\varphi^{-1}(e_i).
\end{equation}
Since 
$\operatorname{ord}(\pi(Q_i))|\operatorname{ord}(Q_i)$ (where $\operatorname{ord}(\cdot)$ denotes the order of a group element), and
\begin{equation}
\operatorname{ord}(\pi(Q_i))=\operatorname{ord}(\varphi^{-1}(e_i))=\operatorname{ord}(e_i)=t_i,
\end{equation}
we know $t_i|\operatorname{ord}(Q_i)$.

Denote $L:=\operatorname{lcm}(q_1,\cdots,q_m)$ be the least
common multiple of numbers in the local configuration.
For all $i=1,\cdots,k$, by Lemma~\ref{lemma:qudit_PauliGroup_max_order}, we know $\operatorname{ord}(Q_i)|2L$, and $\langle e^{i\frac{2\pi}{2L}}\mbb{I}\rangle$ is contained in the Pauli group. Together with $t_i|\operatorname{ord}(Q_i)$ we know $t_i|2L$, implying $\langle e^{i\frac{2\pi}{t_i}}\mbb{I}\rangle$ is contained in the Pauli group.
For all $\mathbf{b}\in\mbb{Z}_{t_1}\oplus\cdots\oplus\mbb{Z}_{t_k}$, define Abelian subgroups of the Pauli group
\begin{equation}
G_n^{\mathbf{b}}:=\big\langle K_n,e^{-i\frac{2\pi}{t_1}b_1}Q_1,\cdots,e^{-i\frac{2\pi}{t_k}b_k}Q_k\big\rangle,
\end{equation}
and let
\begin{equation}\label{eq:qudit_codestate=group_average}
\psi_{\mathbf{b}}:=\frac{1}{\abs{G_n^{\mathbf{b}}}}\sum_{P\in G_n^{\mathbf{b}}}P.
\end{equation}

Now we prove that:
\begin{enumerate}
\item For all $\mathbf{b}\in\mbb{Z}_{t_1}\oplus\cdots\oplus\mbb{Z}_{t_k}$, $\psi_{\mathbf{b}}$ is a rank-1 projector. In particular, $\psi_{\mathbf{0}}=\ketbra{S^{(n)}}{S^{(n)}}$.
\item For distinct $\mathbf{b}_1,\mathbf{b}_2\in\mbb{Z}_{t_1}\oplus\cdots\oplus\mbb{Z}_{t_k}$, we have $\bracket{\psi_{\mathbf{b}_1}}{\psi_{\mathbf{b}_2}}=0$.
\item For all $\mathbf{b}\in\mbb{Z}_{t_1}\oplus\cdots\oplus\mbb{Z}_{t_k}$, $\ket{\psi_{\mathbf{b}}}\in\mc{A}_n$. 
\end{enumerate}

We show that the groups $G_n^{\mathbf{b}}$ do not contain $e^{i\theta}\mbb{I}$ if $e^{i\theta}\neq1$. Suppose $e^{i\theta}\mbb{I}\in G_n^{\mathbf{b}}$, then there exists $P\in K_n$ and $x_1,\cdots,x_k\in\mbb{Z}$, such that $P\prod_{j=1}^k\big(e^{-i\frac{2\pi}{t_j}b_j}Q_j\big)^{x_j}=e^{i\theta}\mbb{I}$, equivalently
\begin{equation}
P\prod_{j=1}^kQ_j^{x_j}=e^{i\theta}\prod_{j=1}^ke^{i\frac{2\pi}{t_j}b_jx_j}\mbb{I}
\end{equation}
Since $P\prod_{j=1}^kQ_j^{x_j}\in G_n$ and $G_n$ is a stabilizer group, we know
\begin{equation}
e^{i\theta}\prod_{j=1}^ke^{i\frac{2\pi}{t_j}b_jx_j}=1\quad\text{and}\quad P\prod_{j=1}^kQ_j^{x_j}=\mbb{I}.
\end{equation}
Consider the map $\varphi\circ\pi$, where $\pi:G_n\rightarrow G_n/K_n$ is the natural homomorphism and $\varphi:G_n/K_n\rightarrow\mbb{Z}_{t_1}\oplus\cdots\oplus\mbb{Z}_{t_k}$ is the isomorphism, we have $\varphi\circ\pi(P)=\varphi\circ\pi(\mbb{I})=(0,\cdots,0)$ and $\varphi\circ\pi(Q_i)=e_i$. Taking $\varphi\circ\pi$ over two sides of $P\prod_{j=1}^kQ_j^{x_j}=\mbb{I}$ we get
\begin{equation}
x_1e_1+\cdots+x_ke_k=(0,\cdots,0),
\end{equation}
implying $t_j|x_j$ for $j=1,\cdots,k$. Hence
\begin{equation}
e^{i\theta}=\prod_{j=1}^ke^{-i2\pi b_j\frac{x_j}{t_j}}=1.
\end{equation}
Therefore, Abelian subgroups $G_n^{\mathbf{b}}$ of the Pauli group do not contain $e^{i\theta}\mbb{I}$ with $e^{i\theta}\neq1$, meaning that they are all stabilizer groups.

Now we calculate $\abs{G_n^{\mathbf{b}}}$.
Every element in $G_n^{\mathbf{b}}$ can be written as $P'\big(e^{-i\frac{2\pi}{t_1}b_1}Q_1\big)^{x_1'}\cdots\big(e^{-i\frac{2\pi}{t_k}b_k}Q_k\big)^{x_k'}$, where $P'\in K_n$ and $(x_1',\cdots,x_k')\in\mbb{Z}^k$. For $j=1,\cdots,k$ write $x_j'=x_j+c_jt_j$ with $x_j\in\{0,1,\cdots,t_j-1\}$ and $c_j\in\mbb{Z}$. Then
\begin{align}
P'\big(e^{-i\frac{2\pi}{t_1}b_1}Q_1\big)^{x_1'}\cdots\big(e^{-i\frac{2\pi}{t_k}b_k}Q_k\big)^{x_k'}=&P'\big(e^{-i\frac{2\pi}{t_1}b_1}Q_1\big)^{x_1+c_1t_1}\cdots\big(e^{-i\frac{2\pi}{t_k}b_k}Q_k\big)^{x_k+c_kt_k}\\
=&\big(P'Q_1^{t_1c_1}\cdots Q_k^{t_kc_k}\big)\big(e^{-i\frac{2\pi}{t_1}b_1}Q_1\big)^{x_1}\cdots\big(e^{-i\frac{2\pi}{t_k}b_k}Q_k\big)^{x_k}\\
=&P\big(e^{-i\frac{2\pi}{t_1}b_1}Q_1\big)^{x_1}\cdots\big(e^{-i\frac{2\pi}{t_k}b_k}Q_k\big)^{x_k},
\end{align}
where by $Q_j^{t_j}\in K_n$ we know $P:=P'Q_1^{t_1c_1}\cdots Q_k^{t_kc_k}\in K_n$.
Hence we have
\begin{equation}\label{eq:elements_in_G_n^b}
G_n^{\mathbf{b}}=\Big\{P\big(e^{-i\frac{2\pi}{t_1}b_1}Q_1\big)^{x_1}\cdots\big(e^{-i\frac{2\pi}{t_k}b_k}Q_k\big)^{x_k}\;\big|\; P\in K_n,(x_1,\cdots,x_k)\in\mbb{Z}_{t_1}\oplus\cdots\oplus\mbb{Z}_{t_k}\Big\}
\end{equation}
We claim that the elements in this set are distinct. Suppose 
\begin{equation}
P_1\big(e^{-i\frac{2\pi}{t_1}b_1}Q_1\big)^{x_1}\cdots\big(e^{-i\frac{2\pi}{t_k}b_k}Q_k\big)^{x_k}=P_2\big(e^{-i\frac{2\pi}{t_1}b_1}Q_1\big)^{y_1}\cdots\big(e^{-i\frac{2\pi}{t_k}b_k}Q_k\big)^{y_k}
\end{equation}
for some $P_1,P_2\in K_n$ and $(x_1,\cdots,x_k),(y_1,\cdots,y_k)\in\mbb{Z}_{t_1}\oplus\cdots\oplus\mbb{Z}_{t_k}$, then
\begin{equation}
P_1P_2^{-1}Q_1^{x_1-y_1}\cdots Q_k^{x_k-y_k}=e^{i\frac{2\pi}{t_1}b_1(x_1-y_1)}\cdots e^{i\frac{2\pi}{t_k}b_k(x_k-y_k)}\mbb{I}.
\end{equation}
Taking $\varphi\circ\pi$ on two sides, we know $(x_1,\cdots,x_k)=(y_1,\cdots,y_k)$ and thus $P_1=P_2$. Hence for any $\mathbf{b}\in\mbb{Z}_{t_1}\oplus\cdots\oplus\mbb{Z}_{t_k}$ we have
\begin{equation}
\abs{G_n^{\mathbf{b}}}=\abs{K_n}t_1\cdots t_k.
\end{equation}
Also note that
\begin{equation}
\Tr(\mbb{I})=\abs{G_n}=\abs{K_n}\cdot\abs{G_n/K_n}=\abs{K_n}t_1\cdots t_k.
\end{equation}
By Lemma~\ref{lemma:qudit_stabilizerprojector} and Eq.~\eqref{eq:qudit_codestate=group_average}, we know $\{\psi_{\mathbf{b}}\mid \mathbf{b}\in\mbb{Z}_{t_1}\oplus\cdots\oplus\mbb{Z}_{t_k}\}$ are projectors, all with 
\begin{equation}
\Tr(\psi_{\mathbf{b}})=\Tr(\mbb{I})/\abs{G_n^{\mathbf{b}}}=1.
\end{equation}
Therefore, $\psi_{\mathbf{b}}$'s are stabilizer states.
By $G_n^{\mathbf{0}}\subset G_n$ and $\abs{G_n^{\mathbf{0}}}=\abs{G_n}$ we know $G_n^{\mathbf{0}}=G_n$, implying $\psi_{\mathbf{0}}=\ketbra{S^{(n)}}{S^{(n)}}$.

For distinct $\mathbf{b}_1=(b_{1,i},\cdots,b_{1,k}),\mathbf{b}_2=(b_{2,i},\cdots,b_{2,k})\in\mbb{Z}_{t_1}\oplus\cdots\oplus\mbb{Z}_{t_k}$, suppose $b_{1,j}\neq b_{2,j}$ at some $j$. Then by
\begin{equation}
e^{-i\frac{2\pi}{t_{j}}b_{1,j}}Q_j\ket{\psi_{\mathbf{b}_1}}=\ket{\psi_{\mathbf{b}_1}}\quad\text{and}\quad e^{-i\frac{2\pi}{t_{j}}b_{2,j}}Q_j\ket{\psi_{\mathbf{b}_2}}=\ket{\psi_{\mathbf{b}_2}},
\end{equation}
we know $\ket{\psi_{\mathbf{b}_1}}$ and $\ket{\psi_{\mathbf{b}_2}}$ lies in different eigenspaces of $Q_j$, thus $\bracket{\psi_{\mathbf{b}_1}}{\psi_{\mathbf{b}_2}}=0$.

For arbitrary $\ket{\psi_{\mathbf{b}}}$ and an arbitrary subset $J\subset[n]$ with $\operatorname{Diam}(J)\le C$, we have
\begin{align}
\Tr_{\overline{J}}(\psi_{\mathbf{b}})=&\frac{1}{\abs{G_n^{\mathbf{b}}}}\sum_{R\in G_n^{\mathbf{b}}}\Tr_{\overline{J}}(R)\\
=&\frac{1}{\abs{G_n^{\mathbf{b}}}}\sum_{P\in K_n}\sum_{x_1\in\mbb{Z}_{t_1}}\cdots\sum_{x_k\in\mbb{Z}_{t_k}}\Tr_{\overline{J}}\left[P\big(e^{-i\frac{2\pi}{t_1}b_1}Q_1\big)^{x_1}\cdots\big(e^{-i\frac{2\pi}{t_k}b_k}Q_k\big)^{x_k}\right].
\end{align}
When $(x_1,\cdots,x_k)\neq(0,\cdots,0)$, we have $PQ_1^{x_1}\cdots Q_k^{x_k}\notin K_n$, implying 
\begin{equation}
\operatorname{Diam}\left(P\big(e^{-i\frac{2\pi}{t_1}b_1}Q_1\big)^{x_1}\cdots\big(e^{-i\frac{2\pi}{t_k}b_k}Q_k\big)^{x_k}\right)=\operatorname{Diam}\left(PQ_1^{x_1}\cdots Q_k^{x_k}\right)>C.
\end{equation}
Hence
\begin{equation}
\Tr_{\overline{J}}(\psi_{\mathbf{b}})=\frac{1}{\abs{G_n^{\mathbf{b}}}}\sum_{P\in K_n}\Tr_{\overline{J}}\left(P\right)=\frac{1}{t_1\cdots t_k}\frac{1}{\abs{K_n}}\sum_{P\in K_n}\Tr_{\overline{J}}\left(P\right),
\end{equation}
which is the same for all $\mathbf{b}$. Therefore, all $\ket{\psi_{\mathbf{b}}}$'s are $C$-locally indistinguishable with $\ket{\psi_{\mathbf{0}}}=\ket{S^{(n)}}$, implying
\begin{equation}
\{\ket{\psi_{\mathbf{b}}}\mid \mathbf{b}\in\mbb{Z}_{t_1}\oplus\cdots\oplus\mbb{Z}_{t_k}\}\subset\mc{A}_n.
\end{equation}

Therefore, $\dim (\operatorname{span}(\mc{A}_n))\ge t_1\cdots t_k$. Note that $\dim (\operatorname{span}(\mc{A}'_n))=\Tr(\mbb{I})/\abs{K_n}=t_1\cdots t_k$, and $\operatorname{span}(\mc{A}_n)\subset\operatorname{span}(\mc{A}'_n)$.
We deduce that $\operatorname{span}(\mc{A}_n)=\operatorname{span}(\mc{A}'_n)$. By Eq.~\eqref{eq:qudit_LRM_from_TO_A} and Eq.~\eqref{eq:qudit_LRM_from_TO_A'} we know $\mc{A}_n$ and $\mc{A}'_n$ are the sets of unit vectors in $\operatorname{span}(\mc{A}_n)$ and $\operatorname{span}(\mc{A}'_n)$ respectively, which implies
\begin{equation}
\mc{A}_n=\mc{A}'_n.
\end{equation}

To conclude, the ground space family of $\{H^{(n)}\}$ is a TSC, since
\begin{enumerate}
\item All ground spaces of $\{H^{(n)}\}$ have the same dimension.
\item The ground space of $H^{(n)}$ is a stabilizer code, with stabilizer group $K_n$.
\item The diameters of all elements in the subset $\{g\in G_n\mid\operatorname{Diam}(g)\le C\}\subset K_n$ are upper bounded by $C$, and this subset generates $K_n$.
\item It exhibits a macroscopic distance.
\end{enumerate}
Therefore, this topological order can be realized by a TSC (with local configuration $(q_1,\cdots,q_m)$).
\end{proof}

\subsection{General and explicit conditions for strong LRM phases}\label{app:strongLRMphase}

\begin{thm}\label{thm:sufficientcondition_strongLRM_app}
Given a local configuration. Suppose $\mc{T}$ is a topological order. If for any family of local Hamiltonians (with this local configuration) with ground space projectors $\{\Pi^{(n)}_H\}$ that realizes this topological order, and any TSC (with this local configuration) with code space projectors $\{\Pi^{(n)}_C\}$, we have
\begin{equation}
\bignorm{\Pi^{(n)}_H-\Pi^{(n)}_C}_1=\Omega(1),
\end{equation} 
then $\mc{T}$ is a strong LRM phase.
\end{thm}

\begin{proof}[Proof sketch]
Suppose a family of ground states $\{\ket{\phi_n}\}$ of a topological order described by Hamiltonians $\{H^{(n)}\}$ satisfies $\big\|U_n\ketbra{\phi_n}{\phi_n}U_n^\dagger-\ketbra{S_n}{S_n}\big\|_1=o(1/n)$ for shallow circuits $\{U_n\}$ and stabilizer states $\{\ket{S_n}\}$.
Consider the ground space of $U_n H^{(n)} U_n^\dagger$ and the stabilizer code defined by the local Pauli strings stabilizing $\ket{S_n}$. 
One can show that these two subspaces all have macroscopic distance.
The $o(1/n)$ trace distance between $U_n\ket{\phi_n}$ and $\ket{S_n}$ implies that all states in the respective subspaces have vanishingly small excitation energy under the Hamiltonian of the other subspace.
Due to the spectral gap of $\{H^{(n)}\}$, this implies that the two subspaces must have the same dimension, and hence the distance between their projectors is $o(1)$.
\end{proof}

\begin{proof}[Proof of Theorem~\ref{thm:sufficientcondition_strongLRM_app}]
Fix a local configuration $(q_1,\cdots,q_m)\in\mbb{Z}_{\ge2}^m$.
Suppose on the contrary that such a topological order is not a strong LRM phase.
As a result, there exists a family of local Hamiltonians $\{H^{(n)}=\sum_{i\in I^{(n)}}H_i^{(n)}\}_{n\in T}$ with this configuration that realizes this topological order, and there exists a family of ground states $\{\ket{\phi_n}\}_{n\in T}$, a family of shallow circuits $\{U_n\}_{n\in T}$, and a family of stabilizer states $\{\ket{S_n}\}_{n\in T}$, such that for $n\in T$,
\begin{equation}
\big\|U_n\ketbra{\phi_n}{\phi_n}U_n^\dagger-\ketbra{S_n}{S_n}\big\|_1=o(1/n).
\end{equation}
Denote $C$ as a constant independent of $n$ that upper bounds the diameters of all deformed local terms $UH_i^{(n)}U^\dagger$.
Define $\mc{B}_n$ as the set of states that are $C$-locally indistinguishable from $U_n\ket{\phi_n}$,
\begin{equation}
\mc{B}_n:=\Big\{\ket{\psi}\Big|\Tr_{\overline{J}}\ketbra{\psi}{\psi}=\Tr_{\overline{J}}(U_n\ketbra{\phi_n}{\phi_n}U_n^\dagger),\forall J\subset[n]\text{ with }\operatorname{Diam}(J)\le C\Big\}.
\end{equation}
For an arbitrary ground state $\ket{\phi}$ of $U_nH^{(n)}U_n^\dagger$, $U_n^\dagger\ket{\phi}$ is a ground state of $H^{(n)}$, and thus locally indistinguishable from $\ket{\phi_n}$. Since $U_n$ is shallow, we know $\ket{\psi}$ is locally indistinguishable from $U_n\ket{\phi_n}$, implying $\ket{\psi}\in\mc{B}_n$ when $n\in T$ is sufficiently large.
On the other hand, since all $\ket{\psi}\in\mc{B}_n$ is $C$-locally indistinguishable from $U_n\ket{\phi_n}$ and the diameters of $U_nH_i^{(n)}U_n^\dagger$'s are upper bounded by $C$, by Fact~\ref{fact:pure_locally_same_then_same_Evalue} we know $\bra{\psi}U_nH_i^{(n)}U_n^\dagger\ket{\psi}=\bra{\phi_n}H_i^{(n)}\ket{\phi_n}$. Therefore $\bra{\psi}U_nH^{(n)}U_n^\dagger\ket{\psi}=\bra{\phi_n}H^{(n)}\ket{\phi_n}=\lambda_{\min}(H^{(n)})$. Hence we have
\begin{equation}
\mc{B}_n=\big\{\text{pure ground states of }U_nH^{(n)}U_n^\dagger\big\}.
\end{equation}
Define $\mc{A}_n$ as the set of states that are $C$-locally indistinguishable from $\ket{S_n}$:
\begin{equation}
\mc{A}_n:=\Big\{\ket{\psi}\Big|\Tr_{\overline{J}}\ketbra{\psi}{\psi}=\Tr_{\overline{J}}\ketbra{S_n}{S_n},\forall J\subset[n]\text{ with }\operatorname{Diam}(J)\le C\Big\}.
\end{equation}
Denote by $G_n$ the stabilizer group of $\ket{S^{(n)}}$. Consider $K_n:=\langle\{g\in G_n\mid\operatorname{Diam}(g)\le C\}\rangle$, and define
\begin{equation}
\mc{A}'_n:=\{\ket{\psi}\mid P\ket{\psi}=\ket{\psi},\text{for all }P\in K_n\}.
\end{equation}

Following the same procedure as in the proof of Theorem~\ref{thm:SRM/LRMphase_TSC_app}, we know $\mc{A}_n\subset\mc{A}_n'$, and there exists a set of $t_1\cdots t_k=\Tr(\mbb{I})/\abs{K_n}$ orthonormal states
\begin{equation}
\{\ket{\psi_{\mathbf{b}}}\mid \mathbf{b}\in\mbb{Z}_{t_1}\oplus\cdots\oplus\mbb{Z}_{t_k}\}\subset\mc{A}_n.
\end{equation}
Together with $\operatorname{dim}(\operatorname{span}(\mc{A}_n'))=\Tr(\mbb{I})/\abs{K_n}$ we know
\begin{equation}
\operatorname{dim}(\operatorname{span}(\mc{A}_n))=\operatorname{dim}(\operatorname{span}(\mc{A}_n'))=\Tr(\mbb{I})/\abs{K_n}.
\end{equation}

Denote by $\Pi^{(n)}$ the ground space projector of $H^{(n)}$, and denote by $\widetilde{\Pi}^{(n)}$ the projector onto the stabilizer code $\operatorname{span}(\mc{A}_n')$.
Intuitively, $U_n\ket{\phi_n}\approx\ket{S_n}$ implies $U_n\Pi^{(n)}U_n^\dagger\approx\widetilde{\Pi}^{(n)}$.

By Lemma~\ref{lemma:qudit_stabilizerprojector}, the projector onto the $+1$ eigenspace of a Pauli string $P$ is given by
\begin{equation}
\operatorname{Proj}(P,1):=\frac{1}{\operatorname{ord}(P)}\sum_{r=0}^{\operatorname{ord}(P)-1}P^{r}.
\end{equation}
Define Hamiltonian 
\begin{equation}
\widetilde{H}^{(n)}:=\sum_{g\in G_n,\,\operatorname{Diam}(g)\le C}\big(\mbb{I}-\operatorname{Proj}(g,1)\big).
\end{equation}
Since all the summands commute, we know $\widetilde{H}^{(n)}$ has integer eigenvalues. Its ground state energy is 0, and its ground space projector is $\widetilde{\Pi}^{(n)}$.
The number of summands satisfies $\#\{g\in G_n\mid\operatorname{Diam}(g)\le C\}=O(n)$.

Let $D_{\mathrm{gs}}$ be the ground space degeneracy of this topological order. We have $\operatorname{dim}(\operatorname{span}(\mc{B}_n))=D_{\mathrm{gs}}$ for all $n$.
To show $U_n\Pi^{(n)}U_n^\dagger\approx\widetilde{\Pi}^{(n)}$, we first show that
\begin{equation}\label{eq:dimension_same_sufficientconditionfor_strongLRM}
\Tr(\mbb{I})/\abs{K_n}=\operatorname{dim}(\operatorname{span}(\mc{B}_n))=D_{\mathrm{gs}},
\end{equation}
when $n\in T$ is sufficiently large.
For all $\mathbf{b}\in\mbb{Z}_{t_1}\oplus\cdots\oplus\mbb{Z}_{t_k}$, since $\ket{\psi_{\mathbf{b}}}\in\mc{A}_n$ and $\operatorname{Diam}(U_nH_i^{(n)}U_n^\dagger)\le C$, we have
\begin{equation}
\bra{\psi_{\mathbf{b}}}U_nH^{(n)}U_n^\dagger\ket{\psi_{\mathbf{b}}}=\sum_{i\in I^{(n)}}\bra{\psi_{\mathbf{b}}}U_nH^{(n)}_iU_n^\dagger\ket{\psi_{\mathbf{b}}}=\sum_{i\in I^{(n)}}\bra{S_n}U_nH^{(n)}_iU_n^\dagger\ket{S_n}=\bra{S_n}U_nH^{(n)}U_n^\dagger\ket{S_n},
\end{equation}
so
\begin{align}
\abs{\bra{\psi_{\mathbf{b}}}U_nH^{(n)}U_n^\dagger\ket{\psi_{\mathbf{b}}}-\bra{\phi_n} H^{(n)}\ket{\phi_n}}=&\abs{\bra{S_n}U_nH^{(n)}U_n^\dagger\ket{S_n}-\bra{\phi_n}H^{(n)}\ket{\phi_n}}\\
=&\abs{\Tr\big[H^{(n)}\big(U_n^\dagger\ketbra{S_n}{S_n}U_n-\ketbra{\phi_n}{\phi_n}\big)\big]}\\
\le&\sum_{i\in I^{(n)}}\abs{\Tr\big[H_i^{(n)}\big(U_n^\dagger\ketbra{S_n}{S_n}U_n-\ketbra{\phi_n}{\phi_n}\big)\big]}\\
\le&\sum_{i\in I^{(n)}}\big\|H_i^{(n)}\big\|_\infty\big\|\ketbra{S_n}{S_n}-U_n\ketbra{\phi_n}{\phi_n}U_n^\dagger\big\|_1\\
\le&|I^{(n)}|o(1/n)=o(1).
\end{align}
Therefore, for arbitrary $\mathbf{b}\in\mbb{Z}_{t_1}\oplus\cdots\oplus\mbb{Z}_{t_k}$ we have
\begin{equation}
\bra{\psi_\mathbf{b}}U_nH^{(n)}U_n^\dagger\ket{\psi_\mathbf{b}}\le\bra{\phi_n}H^{(n)}\ket{\phi_n}+o(1)=\lambda_{\min}(U_nH^{(n)}U_n^\dagger)+o(1).
\end{equation}
Since $\{U_nH^{(n)}U_n^\dagger\}$ is topological ordered and thus gapped, by Lemma~\ref{lemma:small_energy_ground_dim} we know $t_1\cdots t_k\le\dim(\operatorname{span}(\mc{B}_n))$ when $n\in T$ is sufficiently large.
For arbitrary $\ket{b}\in\mc{B}_n$, since $\operatorname{Diam}(\operatorname{Proj}(g,1))\le C$ for all $g\in G_n$ with $\operatorname{Diam}(g)\le C$, we know
\begin{align}
\bra{b}\widetilde{H}^{(n)}\ket{b}=&\sum_{g\in G_n,\,\operatorname{Diam}(g)\le C}\big(1-\bra{b}\operatorname{Proj}(g,1)\ket{b}\big)\\
=&\sum_{g\in G_n,\,\operatorname{Diam}(g)\le C}\Big(1-\bra{\phi_n}U_n^\dagger\operatorname{Proj}(g,1)U_n\ket{\phi_n}\Big)=\bra{\phi_n}U_n^\dagger\widetilde{H}^{(n)}U_n\ket{\phi_n},
\end{align}
so
\begin{align}
\abs{\bra{b}\widetilde{H}^{(n)}\ket{b}-\bra{S_n}\widetilde{H}^{(n)}\ket{S_n}}=&\abs{\bra{\phi_n}U_n^\dagger\widetilde{H}^{(n)}U_n\ket{\phi_n}-\bra{S_n}\widetilde{H}^{(n)}\ket{S_n}}\\
=&\abs{\Tr\big[\widetilde{H}^{(n)}\big(U_n\ketbra{\phi_n}{\phi_n}U_n^\dagger-\ketbra{S_n}{S_n}\big)\big]}\\
\le&\sum_{g\in G_n,\,\operatorname{Diam}(g)\le C}\abs{\Tr\Big[\big(\mbb{I}-\operatorname{Proj}(g,1)\big)\big(U_n\ketbra{\phi_n}{\phi_n}U_n^\dagger-\ketbra{S_n}{S_n}\big)\Big]}\\
\le&\sum_{g\in G_n,\,\operatorname{Diam}(g)\le C}\big\|U_n\ketbra{\phi_n}{\phi_n}U_n^\dagger-\ketbra{S_n}{S_n}\big\|_1\\
=&O(n)o(1/n)=o(1).
\end{align}
Therefore, for arbitrary $\ket{b}\in\mc{B}_n$ we have 
\begin{equation}\label{eq:b_energy_vanishingsmall_strongLRMphase}
\bra{b}\widetilde{H}^{(n)}\ket{b}\le\bra{S_n}\widetilde{H}^{(n)}\ket{S_n}+o(1)=o(1).
\end{equation}
Since $\{\widetilde{H}^{(n)}\}$ is gapped, and we already know its ground space dimension $t_1\cdots t_k$ is upper bounded by $\operatorname{dim}(\operatorname{span}(\mc{B}_n))$ which is a constant independent of $n$, by Lemma~\ref{lemma:small_energy_ground_dim} we obtain $\dim(\operatorname{span}(\mc{B}_n))\le t_1\cdots t_k$ when $n\in T$ is sufficiently large.
So we have proved Eq.~\eqref{eq:dimension_same_sufficientconditionfor_strongLRM}.

Since $\widetilde{H}^{(n)}\ge\mbb{I}-\widetilde{\Pi}^{(n)}$, for arbitrary $\ket{b}\in\mc{B}_n$ we have
$\bra{b}\widetilde{H}^{(n)}\ket{b}\ge\bra{b}(\mbb{I}-\widetilde{\Pi}^{(n)})\ket{b}=1-\bra{b}\widetilde{\Pi}^{(n)}\ket{b}$.
By \eqref{eq:b_energy_vanishingsmall_strongLRMphase} we know
\begin{equation}
\bra{b}\widetilde{\Pi}^{(n)}\ket{b}\ge1-\bra{b}\widetilde{H}^{(n)}\ket{b}=1-o(1).
\end{equation}
We take $\{\ket{b_1},\cdots,\ket{b_{D_{\mathrm{gs}}}}\}$ be a set of orthogonal basis of $\operatorname{span}(\mc{B}_n)$, satisfying $\sum_{i=1}^{D_{\mathrm{gs}}}\ketbra{b_i}{b_i}=U_n\Pi^{(n)}U_n^\dagger$. Then
\begin{equation}
\Tr\big(U_n\Pi^{(n)}U_n^\dagger\widetilde{\Pi}^{(n)}\big)=\sum_{i=1}^{D_{\mathrm{gs}}}\bra{b_i}\widetilde{\Pi}^{(n)}\ket{b_i}\ge(1-o(1))D_{\mathrm{gs}}.
\end{equation}
Therefore we have
\begin{align}
\big\|U_n\Pi^{(n)}U_n^{\dagger}-\widetilde{\Pi}^{(n)}\big\|_2^2=&\Tr\left(\big(U_n\Pi^{(n)}U_n^{\dagger}-\widetilde{\Pi}^{(n)}\big)^2\right)\\
=&\Tr\big(U_n\Pi^{(n)}U_n^{\dagger}\big)+\Tr\big(\widetilde{\Pi}^{(n)}\big)-2\Tr\big(U_n\Pi^{(n)}U_n^\dagger\widetilde{\Pi}^{(n)}\big)\\
\le&D_{\mathrm{gs}}+D_{\mathrm{gs}}-2(1-o(1))D_{\mathrm{gs}}=o(1)D_{\mathrm{gs}}=o(1),
\end{align}
implying $\big\|U_n\Pi^{(n)}U_n^{\dagger}-\widetilde{\Pi}^{(n)}\big\|_2=o(1)$. Due to the inequality $\norm{A}_1\le\sqrt{\operatorname{rank}(A)}\norm{A}_2$, we know
\begin{equation}
\big\|U_n\Pi^{(n)}U_n^{\dagger}-\widetilde{\Pi}^{(n)}\big\|_1=o(1).
\end{equation}

Denote by $d_n$ ($\widetilde{d}_n$) the distance of the code $\Pi^{(n)}$ ($\widetilde{\Pi}^{(n)}$) (which means the reduced density matrices of all states in $\Pi^{(n)}$ ($\widetilde{\Pi}^{(n)}$) on arbitrary $d_n-1$ ($\widetilde{d}_n-1$) edges are the same). Let $\eta$ be a constant that upper bounds the light-cone sizes of $\{U_n\}$.
The deformed ground space $U_n\Pi^{(n)}U_n^\dagger$ has distance $\ge\lfloor\frac{d_n-1}{\eta}\rfloor+1$.
Now we show that when $n\in T$ is sufficiently large, $\widetilde{\Pi}^{(n)}$ is a code with distance
\begin{equation}
\widetilde{d}_n\ge\lfloor\frac{d_n-1}{\eta}\rfloor+1,
\end{equation}
which tends to infinity as $n\rightarrow\infty$. We note that this implies that all states in $\mc{A}_n'$ are $C$-locally indistinguishable, thus   $\mc{A}_n=\mc{A}_n'$.

Suppose on the contrary that there exists a subsequence $T'\subset T$, such that $\widetilde{d}_n\le\lfloor\frac{d_n-1}{\eta}\rfloor$ for all $n\in T'$.
Then there exists a family of Pauli strings $\{E_n\}_{n\in T'}$, where $E_n$ is a Pauli string that has nontrivial support on $\widetilde{d}_n$ edges, such that $E_n$ is a logical error for $\widetilde{\Pi}^{(n)}$ (That is, $E_n$ commutes with all elements in $K_n$, while $E_n\notin\{e^{i\theta}h\mid h\in K_n,\theta\in\mbb{R}\}$). This implies that $e^{i\theta}\mbb{I}\notin\{E_nh\mid h\in K_n\}$ for any $\theta\in\mbb{R}$.
We have
\begin{equation}
\widetilde{\Pi}^{(n)}E_n\widetilde{\Pi}^{(n)}=E_n\widetilde{\Pi}^{(n)} \text{, and } \Tr(E_n\widetilde{\Pi}^{(n)})=\frac{1}{\abs{K_n}}\sum_{h\in K_n}\Tr(E_nh)=0.
\end{equation}
Since $U_n\Pi^{(n)}U_n^\dagger$ has distance $\ge\lfloor\frac{d_n-1}{\eta}\rfloor+1$, we know there exist a constant $c_n\in\mbb{C}$ with $\abs{c_n}\le1$ such that
\begin{equation}
(U_n\Pi^{(n)}U_n^\dagger)E_n(U_n\Pi^{(n)}U_n^\dagger)=c_nU_n\Pi^{(n)}U_n^\dagger.
\end{equation}
Following the same calculation in the proof of Theorem~\ref{thm:LRM_from_TSC_app}, we obtain $\bignorm{E_n\widetilde{\Pi}^{(n)}-c_nU_n\Pi^{(n)}U_n^\dagger}_2=o(1)$, implying
\begin{equation}
\bignorm{c_nE_n\widetilde{\Pi}^{(n)}-\widetilde{\Pi}^{(n)}}_2=o(1).
\end{equation}
However, this is impossible since $\Tr(cE_n\widetilde{\Pi}^{(n)})=0$ and $\Tr(\widetilde{\Pi}^{(n)})=D_{\mathrm{gs}}>0$. Therefore, we conclude that $\widetilde{d}_n\ge\lfloor\frac{d_n-1}{\eta}\rfloor+1$ when $n\in T$ is sufficiently large.

When $n\in T$ is sufficiently large, the stabilizer code family $\{\widetilde{\Pi}^{(n)}\}_{n\in T}$ has the following properties:
\begin{enumerate}
\item The code space dimension $D_{\mathrm{gs}}$ is fixed.
\item It has stabilizer group $K_n$. The diameters of all elements in the subset $\{g\in G_n\mid\operatorname{Diam}(g)\le C\}\subset K_n$ are upper bounded by $C$, and this subset generates $K_n$.
\item It has macroscopic distance.
\end{enumerate}
Therefore, $\{\widetilde{\Pi}^{(n)}\}_{n\in T}$ is a TSC (with local configuration $(q_1,\cdots,q_m)$).

To conclude, this topological order has a local Hamiltonian realization (with local configuration $(q_1,\cdots,q_m)$) with ground space projectors $\{U_n\Pi^{(n)}U_n^{\dagger}\}$, and there exists a TSC (with local configuration $(q_1,\cdots,q_m)$) with code space projectors $\widetilde{\Pi}^{(n)}$, such that
$\big\|U_n\Pi^{(n)}U_n^{\dagger}-\widetilde{\Pi}^{(n)}\big\|_1=o(1)$.
This leads to a contradiction. In conclusion, this topological order is a strong LRM phase.

\end{proof}

\begin{cor}
Given local configuration $(q_1,\cdots,q_m)$, if the ground space degeneracy of a topological order contains a prime factor that does not divide $\prod_{i=1}^mq_i$, then it is a strong LRM phase.
\end{cor}

\begin{proof}
Assume, for the sake of contradiction, that there exists a family of local Hamiltonian $\{H^{(n)}\}_{n \in T}$ local configuration $(q_1,\cdots,q_m)$  that realizes this topological order, with ground states $\{\ket{\phi_n}\}_{n \in T}$, along with a family of shallow circuits $\{U_n\}_{n \in T}$ and a family of stabilizer states $\{\ket{S_n}\}_{n \in T}$, such that for all $n \in T$, $\left\| U_n \ketbra{\phi_n}{\phi_n} U_n^\dagger - \ketbra{S_n}{S_n} \right\|_1 = o(1/n)$.

Let $D_{\mathrm{gs}}$ denote the ground space degeneracy of this topological order. According to the proof of Theorem~\ref{thm:sufficientcondition_strongLRM_app}, there exists a topological stabilizer code (TSC) with local configuration $(q_1,\cdots,q_m)$ whose code space dimension $D_{\mathrm{code}}$ equals $D_{\mathrm{gs}}$.

However, the dimension of a TSC with local configuration $(q_1,\cdots,q_m)$ must divide $(\prod_{i=1}^m q_i)^{n_0}$, where $n_0$ is the number of edges corresponding to the smallest code in the family. Since $D_{\mathrm{gs}}$ contains a prime factor that does not divide $\prod_{i=1}^m q_i$, it follows that $D_{\mathrm{code}} \neq D_{\mathrm{gs}}$, which leads to a contradiction.
\end{proof}

\subsection{LRM phase examples}\label{app:number_theory_lemmas}

The following lemma is needed for the doubled Fibonacci example in the main text:
\begin{lem}
Let $\tau=\frac{1+\sqrt{5}}{2}$. Then for every integer $n\ge 0$,
\begin{equation}
\tau^{n}+\tau^{-n}=
\begin{cases}
L_n, & \text{if } n \text{ is even},\\[2mm]
\sqrt{5}F_n, & \text{if } n \text{ is odd},
\end{cases}
\end{equation}
where $F_n$ and $L_n$ are the Fibonacci and Lucas numbers defined by
$F_0=0, F_1=1, F_{n+1}=F_n+F_{n-1}$
and
$L_0=2, L_1=1, L_{n+1}=L_n+L_{n-1}$.

As a corollary, for all integers $n\ge0$, $(\tau^{n}+\tau^{-n})^2$ is an integer.
\end{lem}
\begin{proof}
Let $\sigma=\frac{1-\sqrt{5}}{2}$.
We have $\sigma=-\tau^{-1}$. 
By Binet's formulae~\cite{vajda2008fibonacci}:
\begin{equation}
F_n=\frac{\tau^n-\sigma^n}{\sqrt{5}},
\quad
L_n=\tau^n+\sigma^n,
\end{equation}
we obtain
\begin{equation}
\tau^n+\tau^{-n}
=\tau^n+(-\sigma)^n
=\tau^n+(-1)^n \sigma^n=
\begin{cases}
L_n, & \text{if } n \text{ is even},\\[2mm]
\sqrt{5}F_n, & \text{if } n \text{ is odd}.
\end{cases}
\end{equation}
Its square equals $L_{n}^2$ or $5F_{n}^2$, both of which are integers.
\end{proof}

The following lemma is needed for the $S_3$ quantum double example in the main text:
\begin{lem}
For $g\ge2$, there exists a prime factor of $\mathrm{GSD}(g)=2\cdot6^{2g-2}+4\cdot3^{2g-2}+2\cdot2^{2g-2}$ that does not divide 6.
\end{lem}
\begin{proof}
Let $m=2g-2$, so $\mathrm{GSD}(g)=2^{m+1}3^m+4\cdot3^{m}+2^{m+1}$. Therefore we have
\begin{equation}
\mathrm{GSD}(g)\equiv 0+0+2^{m+1}\equiv (-1)^{m+1}\not\equiv 0 \pmod{3},
\end{equation}
implying $3\nmid \mathrm{GSD}(g)$.
Notice that
\begin{equation}
Q:=4^{-1}\mathrm{GSD}(g)=2^{m-1}(3^{m}+1)+3^{m}
\end{equation}
is odd, because $2^{m-1}(3^{m}+1)$ is even and $3^{m}$ is odd.

For $m\ge 2$, we have $Q\ge 29>1$, so $Q$ has an odd prime factor $p$. 
By $3\nmid \mathrm{GSD}(g)$ we know $3\nmid Q$, so $p\neq 3$, implying $p\nmid 6$.
Therefore, $\mathrm{GSD}(g)$ has a prime factor $p$ that does not divide $6$.
\end{proof}

\section{Diagnosing LRM with correlation}\label{app:LRM_from_correlation}

\subsection{Two-qubit reduced density matrices of SRM states}\label{app:SRM_2qubits_cor_by_EPRs}
For a quantum circuit $U$ acting on $n$ qubits and a subset $A\subset[n]:=\{1,2,\cdots,n\}$, we define the backward light-cone $\mc{L}(A)\subset[n]$ as the set of qubits that can influence the output of $U$ at $A$.

\begin{thm}\label{thm:SRM_2qubits_cor_by_EPRs_app}
For arbitrary connectivity, given a family of SRM states specified by $\{\ket{\psi_n}=U_n\ket{S_n}\}$ where $\{U_n\}$ are shallow circuits and $\{\ket{S_n}\}$ are stabilizer states, the following holds:
For all subsystems $A,B\subset[n]$ of size $O(1)$ satisfying $\mc{L}(A)\cap\mc{L}(B)=\emptyset$, the bipartite reduced state $\Tr_{\overline{A\cup B}}\ketbra{\psi_n}{\psi_n}$ can be obtained by local operations on $O(1)$ EPR pairs $\ket{\Phi^+}=\frac{1}{\sqrt{2}}(\ket{00}+\ket{11})$ shared by the two parties, where $\overline{A\cup B}:=[n]-(A\cup B)$.   
\end{thm}
To prove this theorem, note that (see Fig.~\ref{fig:SRM_lightcone})
\begin{equation}
\Tr_{\overline{A\cup B}}\ketbra{\psi_n}{\psi_n}=\mc{E}_A\otimes\mc{E}_B\big(\Tr_{\overline{\mc{L}(A)\cup\mc{L}(B)}}\ketbra{S_n}{S_n}\big),
\end{equation}
where $\mc{E}_A(\sigma):=\Tr_{\mc{L}(A)-A}(U_A\sigma U_A^\dagger)$ is a completely-positive trace preserving (CPTP) map from $\mc{L}(A)$ to $A$, with $U_A$ being the circuit within the light-cone of $A$ (colored orange in Fig.~\ref{fig:SRM_lightcone}), and $\mc{E}_B$ is defined similarly.
$\Tr_{\overline{\mc{L}(A)\cup\mc{L}(B)}}\ketbra{S_n}{S_n}$ is a reduced state of a stabilizer state, and thus is a bipartite stabilizer mixed state~\cite{Aaronson2004Improved} on $M=\abs{\mc{L}(A)}+\abs{\mc{L}(B)}=O(1)$ qubits. 
It can only create rigid correlations between $\mc{L}(A)$ and $\mc{L}(B)$, similar to those of EPR pairs, thereby restricting the possible states that can be created by performing local operations on it.
By Lemma~\ref{lemma:structure_of_STABmixed} in Appendix~\ref{app:Bipartite_stabilizer_mixed_states}, $\Tr_{\overline{\mc{L}(A)\cup\mc{L}(B)}}\ketbra{S_n}{S_n}$ can be generated by local operations from $O(1)$ EPR pairs, and so can $\Tr_{\overline{A\cup B}}\ketbra{\psi_n}{\psi_n}$.

\subsection{Bipartite stabilizer mixed states}\label{app:Bipartite_stabilizer_mixed_states}

Given the stabilizer generator set $\{g_1, \cdots, g_r\}$ for an $n$-qubit stabilizer code, where $r \le n$ and $g_1, \cdots, g_r$ are independent, mutually commuting Pauli strings such that $\langle g_1, \cdots, g_r \rangle$ does not contain $-\mbb{I}$, the corresponding \emph{stabilizer mixed state}~\cite{Aaronson2004Improved} is defined as $\frac{1}{2^r}\prod_{i=1}^r(\mbb{I} + g_i)$, which is the uniform distribution over all states within the code space. Stabilizer mixed states are in one-to-one correspondence with stabilizer codes.

\begin{lem}\label{lemma:structure_of_STABmixed}
A bipartite stabilizer mixed state $\rho$ on $N$ qubits can be obtained by local operations on at most $N$ EPR pairs.
\end{lem}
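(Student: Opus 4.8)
The plan is to reduce the statement to the classification of \emph{pure} bipartite stabilizer states by purification, and then to count. Write the state as $\rho=\frac{1}{2^r}\prod_{i=1}^r(\mbb{I}+g_i)$ on the $N$ qubits, which are split between parties $A$ and $B$ with $|A|=m_A$, $|B|=m_B$, $m_A+m_B=N$; this is the maximally mixed state of the stabilizer code generated by the independent commuting Paulis $g_1,\dots,g_r$, a code of dimension $2^{N-r}$. I would fix a set of logical operators $\bar X_1,\dots,\bar X_{N-r},\bar Z_1,\dots,\bar Z_{N-r}$ for this code, adjoin $N-r$ fresh ancilla qubits \emph{to party $A$'s side}, and form the stabilizer state $\ket{\psi}$ on $2N-r$ qubits whose stabilizer group is generated by $g_1,\dots,g_r$ together with $\bar X_j\otimes X_j^{\mathrm{anc}}$ and $\bar Z_j\otimes Z_j^{\mathrm{anc}}$ for $j=1,\dots,N-r$. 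One checks these generators pairwise commute and do not generate $-\mbb{I}$, and that $\Tr_{\mathrm{anc}}\ketbra{\psi}{\psi}=\rho$. Because the ancillas lie on $A$'s side, this partial trace is a local CPTP map on $A$, so it suffices to prepare the \emph{pure} bipartite stabilizer state $\ket{\psi}$ — with $A$ holding $m_A+N-r$ qubits and $B$ holding $m_B\le N$ qubits — from at most $m_B$ EPR pairs by local operations.

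Next I would invoke the standard local-Clifford normal form for bipartite stabilizer states \cite{Audenaert2005mixed}: there are local Clifford unitaries $V$ on $A$ and $W$ on $B$ and an integer $e$ with $\ket{\psi}=(V\otimes W)\big(\ket{\Phi^+}^{\otimes e}\otimes\ket{0}^{\otimes(m_A+N-r-e)}_A\otimes\ket{0}^{\otimes(m_B-e)}_B\big)$, and in fact $e=S(\Tr_A\rho)$ is the entanglement entropy across the cut. Since $\Tr_A\rho$ is itself a stabilizer mixed state on $m_B$ qubits \cite{Aaronson2004Improved}, we get $e\le m_B\le N$. Then, starting from $e$ EPR pairs $\ket{\Phi^+}^{\otimes e}$, party $A$ appends $m_A+N-r-e$ fresh $\ket{0}$ qubits and applies $V$, party $B$ appends $m_B-e$ fresh $\ket{0}$ qubits and applies $W$ — a pair of local operations yielding $\ket{\psi}$ — and finally party $A$ traces out its ancilla register, yielding $\rho$. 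The number of EPR pairs consumed is $e\le N$, as claimed.

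The substantive input is the normal-form theorem in the second step; the rest is bookkeeping about which maps count as local. The two points to be careful about are: verifying that the adjoined generators of $\ket{\psi}$ genuinely commute (the anticommutation pattern among the logicals $\bar X_j,\bar Z_j$ is exactly cancelled by the matching pattern among the ancilla Paulis $X_j^{\mathrm{anc}},Z_j^{\mathrm{anc}}$), so that $\ket{\psi}$ is a legitimate stabilizer purification of $\rho$; and observing that the partial trace over the ancillas is a genuinely local operation precisely because those ancillas were placed on $A$'s side. An alternative route that avoids purification is to apply directly the Audenaert--Plenio normal form for bipartite \emph{mixed} stabilizer states \cite{Audenaert2005mixed} — a tensor product of EPR pairs, classically correlated two-qubit pairs, and purely local pure or maximally mixed qubits — and note that each primitive is preparable by local operations from at most one EPR pair per two of its qubits (zero EPR pairs for the purely local primitives), once again yielding a total of at most $N$.
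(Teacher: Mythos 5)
Your main argument is correct, and it takes a genuinely different route from the paper. You purify $\rho$ into a pure stabilizer state $\ket{\psi}$ by placing $N-r$ ancillas on $A$'s side (your commutation and $-\mbb{I}$ checks go through, and the subgroup of the purified stabilizer group supported on the original $N$ qubits is exactly $\langle g_1,\dots,g_r\rangle$, so the partial trace indeed returns $\rho$), and then invoke the normal form for \emph{pure} bipartite stabilizer states, so that the whole preparation is a pair of local isometries (append $\ket{0}$'s, apply local Cliffords) followed by a local partial trace acting on $e=S(\Tr_A\rho)\le N_B\le N$ EPR pairs — manifestly a map of the form $\mc{E}_A\otimes\mc{E}_B$ with no communication, which is the operational class the lemma is used with. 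The paper instead applies the Audenaert--Plenio normal form for \emph{mixed} bipartite stabilizer states directly, writing $\rho$ as locally unitary equivalent to $\ketbra{\Phi^+}{\Phi^+}^{\otimes p}\otimes\rho'$ with $\rho'$ a separable, classically correlated stabilizer state, and then prepares $\rho'$ by consuming $N-r$ further EPR pairs as shared randomness (both parties measure their halves and condition locally), for a total of $p+N-r\le N$. Your purification route buys a cleaner protocol (no shared-randomness step, only unitaries plus a partial trace) and an arguably sharper count $e\le N_B$; the paper's route keeps the classical-versus-quantum correlation structure of $\rho$ explicit, which matches the ``rigid EPR-like correlations'' intuition it uses around Theorem~\ref{thm:SRM_2qubits_cor_by_EPRs}. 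One small caution on your closing aside: the mixed-state normal form as used in the paper only guarantees a separable $X$-type correlated part, not necessarily a tensor product of two-qubit classically correlated pairs, and preparing any such cross-cut classical correlation without communication still costs shared randomness (one measured EPR pair per bit), which is precisely the accounting the paper does; but since this is only your alternative sketch, it does not affect the validity of your main proof.
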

\begin{proof}
By Theorem 1 in Ref.~\cite{Audenaert2005mixed}, every bipartite stabilizer mixed state $\rho_{AB}$ on $N=N_A+N_B$ qubits with $r$ independent generators is locally unitary equivalent to $\ketbra{\Phi^+}{\Phi^+}^{\otimes p}\otimes\rho'$, where $\ket{\Phi^+}=\frac{1}{\sqrt{2}}(\ket{00}+\ket{11})$ is EPR pair shared by $A$ and $B$, and
\begin{equation}
\rho'=\frac{1}{2^{N-2p}}\prod_{l=2p+1}^r(\mbb{I}+h_l)
\end{equation}
is a bipartite separable stabilizer mixed state acting on $(N_A-p)+(N_B-p)$ qubits, with $h_l$ being a tensor product of single-qubit $\mbb{I}$ and $X$, acting on $N-2p$ qubits.
The number of EPR pairs $p$ is upper bounded by $p\le\min(\lfloor r/2\rfloor,N_A,N_B)$.

The separable state $\rho'$ is diagonal in the $\{\ket{+},\ket{-}\}$ basis and is the equal mixture of $2^{N-r}$ such states. It can be created by local operations on $N-r$ EPR pairs by measuring the EPR pairs in the computational basis and preparing the corresponding bipartite pure state in the state ensemble of $\rho'$, depending on the measurement results.
Therefore, we see that every such stabilizer mixed state $\rho$ on $N$ qubits can be obtained by at most $p+N-r\le p+N-2p\le N$ EPR pairs.
\end{proof}

\subsection{Generate correlations by EPR pairs}\label{app:correlations_EPR}

The following Proposition holds for any single Pauli $P\in\{X,Y,Z\}$. We take $P=Z$ without loss of generality.
\begin{prop}\label{prop:correlations_EPR}
For integer $K\ge1$, $K$ EPR pairs can generate a two-qubit state $\rho$ by local CPTP maps, with $\langle Z\otimes Z\rangle_\rho=b$ and $\langle Z\otimes \mbb{I}\rangle_\rho=\langle \mbb{I}\otimes Z\rangle_\rho=c$, if and only if $(b,c)$ lies in the region given by
\begin{empheq}[left=\empheqlbrace]{align}
\label{eq:bc_region1}
&1+2c+b\ge0,\\
\label{eq:bc_region2}
&1-2c+b\ge0,\\
\label{eq:bc_region3}
&\frac{1+2c+b}{4}2^K\le l+\left(\frac{1+c}{2}2^K-l\right)^2,\text{ for }l=0,1,\cdots,2^K-1.
\end{empheq}
\end{prop}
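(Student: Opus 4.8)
The plan is to translate the question into a finite-dimensional spectral optimization over pairs of positive contractions, which can then be solved exactly. Up to relabelling basis states, $\ket{\Phi^+}^{\otimes K}$ is the $d$-dimensional maximally entangled state $\ket{\Phi_d}=\tfrac{1}{\sqrt d}\sum_{j=0}^{d-1}\ket{j}\ket{j}$ with $d=2^K$, and a pair of local CPTP maps produces $\rho=(\mc E_A\otimes\mc E_B)(\ketbra{\Phi_d}{\Phi_d})$ with $\mc E_A,\mc E_B:\mc B(\mbb C^d)\to\mc B(\mbb C^2)$. The quantities $b,c$ (together with $\Tr\rho=1$) depend on $\mc E_A$ only through the Hermitian operator $M_A:=\mc E_A^\dagger(Z)$ (and likewise $M_B:=\mc E_B^\dagger(Z)$): from $\Tr[(M_A\otimes M_B)\ketbra{\Phi_d}{\Phi_d}]=\tfrac1d\Tr[M_AM_B^{\mathsf T}]$ one gets $c=\tfrac1d\Tr M_A=\tfrac1d\Tr M_B$ and $b=\tfrac1d\Tr[M_AM_B^{\mathsf T}]$. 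Since $\mc E_A^\dagger$ is positive and unital, $-\mbb I\le M_A\le\mbb I$; conversely every such $M_A$ is realized by the measure-and-prepare channel of the POVM $\{\tfrac{\mbb I+M_A}{2},\tfrac{\mbb I-M_A}{2}\}$. Substituting $A=\tfrac{\mbb I+M_A}{2}$, $B=\tfrac{\mbb I+M_B^{\mathsf T}}{2}$ (transposition preserves Hermiticity and spectrum), the claim becomes: $(b,c)$ is attainable iff there exist $0\le A,B\le\mbb I$ on $\mbb C^d$ with $\Tr A=\Tr B=\tau:=\tfrac{(1+c)d}{2}$ and $\Tr[AB]=t:=\tfrac{(1+2c+b)d}{4}$. (This forces $\tau\in[0,d]$, i.e.\ $c\in[-1,1]$, which holds automatically for expectation values of a genuine two-qubit state.)

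\emph{The ``only if'' direction.} For any such $A,B$, $\Tr[AB]\ge0$ and $\Tr[(\mbb I-A)(\mbb I-B)]=d-2\tau+\Tr[AB]\ge0$; rewriting in terms of $(b,c)$ these are precisely \eqref{eq:bc_region1} and \eqref{eq:bc_region2}. For the upper bound I would invoke the von Neumann trace inequality $\Tr[AB]\le\sum_i\alpha_i\beta_i$ with $\alpha_1\ge\cdots\ge\alpha_d$, $\beta_1\ge\cdots\ge\beta_d$ the eigenvalues of $A,B$, then $\sum_i\alpha_i\beta_i\le\tfrac12\big(\sum_i\alpha_i^2+\sum_i\beta_i^2\big)$, and finally note that $\gamma\mapsto\sum_i\gamma_i^2$ is convex on $\{\gamma\in[0,1]^d:\sum_i\gamma_i=\tau\}$, hence maximized at the vertex $(1,\dots,1,\tau-\lfloor\tau\rfloor,0,\dots,0)$, giving $\sum_i\gamma_i^2\le\lfloor\tau\rfloor+(\tau-\lfloor\tau\rfloor)^2$. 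A comparison of consecutive integers shows $\lfloor\tau\rfloor+(\tau-\lfloor\tau\rfloor)^2=\min_{0\le l\le d-1}\big(l+(\tau-l)^2\big)$, so $\Tr[AB]\le\min_l\big(l+(\tau-l)^2\big)$, which is exactly \eqref{eq:bc_region3} for all $l$.

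\emph{The ``if'' direction.} Conversely, given \eqref{eq:bc_region1}--\eqref{eq:bc_region3} we have $t\in\big[\max(0,2\tau-d),\ \min_l(l+(\tau-l)^2)\big]$, and I claim this interval is exactly the set of values of $\Tr[AB]$ over valid $A,B$ with $\Tr A=\Tr B=\tau$. Fixing the common spectrum $\Lambda=(1^{\lfloor\tau\rfloor},\tau-\lfloor\tau\rfloor,0^{d-1-\lfloor\tau\rfloor})$ and letting $A=U\,\diag(\Lambda)\,U^\dagger$, $B=\diag(\Lambda)$, the map $U\mapsto\Tr[AB]=\langle\Lambda,U\Lambda U^\dagger\rangle$ on the connected group $\mathrm U(d)$ has connected (hence interval) image, whose endpoints are the ``aligned'' value $\sum_i\Lambda_i^2=\|\Lambda\|_2^2=\min_l(l+(\tau-l)^2)$ and the ``anti-aligned'' value $\sum_i\Lambda_i\Lambda_{d+1-i}$; using that $d=2^K$ is even, the latter equals $\max(0,2\tau-d)$ (the $1$'s pair with $0$'s when $\lfloor\tau\rfloor\le d/2$, and the forced overlap is $2\tau-d$ otherwise, evenness being what prevents the lone fractional mode from pairing with itself). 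Hence every $t$ in the interval is realized, and unwinding the substitution yields the desired $\rho$.

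The step I expect to be the main obstacle is the upper bound: reducing the non-commutative optimization of $\Tr[AB]$ to the commutative one via von Neumann, and then seeing that the extremal configuration co-aligns $A$ and $B$ and pushes their common spectrum to the most extremal vertex of the trace-constrained cube $[0,1]^d$, producing the $\lfloor\tau\rfloor$ saturated modes plus one fractional mode, i.e.\ the family of $2^K$ parabolas. The remaining care is bookkeeping: the boundary and degenerate cases, and tracking that $d=2^K$ being even is what makes \eqref{eq:bc_region1}--\eqref{eq:bc_region2} tight rather than merely valid lower bounds.
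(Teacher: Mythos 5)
Your proposal is correct and takes essentially the same route as the paper: both reduce the question to the existence of positive contractions $E,F$ (your $A,B$) on $\mbb{C}^{2^K}$ with $\Tr E=\Tr F=\tfrac{1+c}{2}2^K$ and $\Tr(E^{\mathrm{T}}F)=\tfrac{1+2c+b}{4}2^K$, bound this overlap by the extremal spectrum ($\lfloor\tau\rfloor$ ones plus one fractional eigenvalue, yielding the $2^K$ parabolas), and then realize every intermediate value before converting back to local measure-and-prepare maps on the EPR pairs. The only differences are cosmetic: you set up the reduction in the Heisenberg picture and use von Neumann's trace inequality plus a vertex/convexity argument and a unitary-orbit connectedness argument, whereas the paper derives the same constraints from the induced classical correlation and POVMs, bounds the overlap via Cauchy--Schwarz in Hilbert--Schmidt norm, and interpolates with explicit diagonal choices of $E,F$.
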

See Fig.~\ref{fig:bc_correlation} for a demonstration of the region specified by constraints \eqref{eq:bc_region1}--\eqref{eq:bc_region3}.
\begin{proof}
($\Rightarrow$):

Suppose $\rho=\mc{E}_A\otimes\mc{E}_B\big(\ketbra{\Phi^+}{\Phi^+}^{\otimes K}\big)$ for some local CPTP maps $\mc{E}_A$ and $\mc{E}_B$.
Measuring $\rho$ in the computational basis, we obtain $\ket{00}$ with probability $(1+2c+b)/4$, $\ket{11}$ with probability $(1-2c+b)/4$, and $\ket{01}$ or $\ket{10}$ with probability $(1-b)/4$. Therefore, we get a classical correlation~\cite{Jain2013Efficient}
\begin{equation}\label{eq:correlation}
\frac{1}{4}
\begin{bmatrix}
1+2c+b & 1-b \\
1-b & 1-2c+b \\
\end{bmatrix}.
\end{equation}
For \eqref{eq:correlation} to be a valid correlation, each element should be non-negative.
By measuring local POVMs $\{\mc{E}_A^\dagger(\ketbra{0}{0}),\mc{E}_A^\dagger(\ketbra{1}{1})\}$ and $\{\mc{E}_B^\dagger(\ketbra{0}{0}),\mc{E}_B^\dagger(\ketbra{1}{1})\}$ on $\ketbra{\Phi^+}{\Phi^+}^{\otimes K}$, we can get the correlation \eqref{eq:correlation}.
All $2\times2$ classical correlations $M$ that can be obtained by $K$ EPR pairs, can be obtained by measuring local POVMs $\{E,\mbb{I}_{2^K}-E\}$ and $\{F,\mbb{I}_{2^K}-F\}$ acting on $K$ qubits on $\ketbra{\Phi^+}{\Phi^+}^{\otimes K}$, and can be written as 
\begin{align}
\label{eq:correlation2}
M=&\begin{bmatrix}
\Tr\left(E\otimes F\ketbra{\Phi^+}{\Phi^+}^{\otimes K}\right) & \Tr\left(E\otimes (\mbb{I}_{2^K}-F)\ketbra{\Phi^+}{\Phi^+}^{\otimes K}\right) \\
\Tr\left((\mbb{I}_{2^K}-E)\otimes F\ketbra{\Phi^+}{\Phi^+}^{\otimes K}\right) & \Tr\left((\mbb{I}_{2^K}-E)\otimes (\mbb{I}_{2^K}-F)\ketbra{\Phi^+}{\Phi^+}^{\otimes K}\right)
\end{bmatrix}\\
\label{eq:correlation2'}
=&2^{-K}\begin{bmatrix}
\Tr\left(E^{\T} F\right) & \Tr\left(E\right)-\Tr\left(E^{\T} F\right) \\
\Tr\left(F\right)-\Tr\left(E^{\T} F\right) & 2^K-\Tr\left(E\right)-\Tr\left(F\right)+\Tr\left(E^{\T} F\right)
\end{bmatrix}.
\end{align}
Letting \eqref{eq:correlation2'} $=$ \eqref{eq:correlation}, we get
\begin{empheq}[left=\empheqlbrace]{align}
\label{eq:EF_constraints1}
&\Tr(E)=\frac{1+c}{2}2^K,\\
\label{eq:EF_constraints2}
&\Tr(F)=\frac{1+c}{2}2^K,\\
\label{eq:EF_constraints3}
&\Tr(E^{\T}F)=\frac{1+2c+b}{4}2^K,\\
\label{eq:EF_constraints4}
&0\le E,F\le \mbb{I}_{2^K}.
\end{empheq}
Note that $\Tr(E^{\T}F)=\Tr((E^*)^\dagger F)\le\twonorm{E^*}\twonorm{F}=\twonorm{E}\twonorm{F}$,
where $E^*$ is the complex conjugate of $E$. For $l=0,1,\cdots,2^{K}-1$, when $l\le\frac{1+c}{2}2^K\le l+1$,
we have $\twonorm{E},\twonorm{F}\le\sqrt{l+(\frac{1+c}{2}2^K-l)^2}$. Therefore, we obtain
\begin{equation}
\Tr(E^{\T}F)\le l+\left(\frac{1+c}{2}2^K-l\right)^2,
\end{equation}
giving the constraint $\frac{1+2c+b}{4}2^K\le l+(\frac{1+c}{2}2^K-l)^2$.
These $2^K$ constraints also imply $1-b\ge0$.

($\Leftarrow$): 

Suppose $(b,c)$ lies in the region given by \eqref{eq:bc_region1}--\eqref{eq:bc_region3}. We claim that we can generate the correlation \eqref{eq:correlation} by measuring some POVMs $\{E,\mbb{I}_{2^K}-E\}$ and $\{F,\mbb{I}_{2^K}-F\}$ on $\ketbra{\Phi^+}{\Phi^+}^{\otimes K}$, that is, we can find $E$ and $F$ satisfying the constraints in \eqref{eq:EF_constraints1}--\eqref{eq:EF_constraints4}.

For $l=0,1,\cdots,2^{K-1}-1$, when $c$ satisfies $l\le\frac{1+c}{2}2^K\le l+1$, we have
\begin{equation}
0\le\frac{1+2c+b}{4}2^K\le l+\left(\frac{1+c}{2}2^K-l\right)^2.
\end{equation}
Since $\Tr(E)=\Tr(F)=\frac{1+c}{2}2^K\le 2^{K-1}$, we can take $E^*$ and $F$ such their support are disjoint and get $\Tr(E^{\T}F)=0$. By taking $E=F=\operatorname{diag}(1,\cdots,1,\frac{1+c}{2}2^K-l,0,\cdots,0)$ with $l$ many 1's, we get $\Tr(E^{\T}F)=l+(\frac{1+c}{2}2^K-l)^2$. Moreover, $\Tr(E^{\T}F)$ takes arbitrary values in the region $[0,l+(\frac{1+c}{2}2^K-l)^2]$ by appropriately choosing $0\le E,F\le\mbb{I}_{2^K}$.
Therefore, we get $E$ and $F$ satisfying the constraints in \eqref{eq:EF_constraints1}--\eqref{eq:EF_constraints4}.

For $l=2^{K-1},1,\cdots,2^{K}-1$, when $c$ satisfies $l\le\frac{1+c}{2}2^K\le l+1$, we have 
\begin{equation}
c2^K\le\frac{1+2c+b}{4}2^K\le l+\left(\frac{1+c}{2}2^K-l\right)^2.
\end{equation}
By taking $E=\operatorname{diag}(1,\cdots,1,c,\cdots,c)$ and $F=\operatorname{diag}(c,\cdots,c,1,\cdots,1)$ with $2^{K-1}$ 1's and $2^{K-1}$ $c$'s, we get $\Tr(E)=\Tr(F)=\frac{1+c}{2}2^K$ and $\Tr(E^{\T}F)=c2^K$. 
By taking $E=F=\operatorname{diag}(1,\cdots,1,\frac{1+c}{2}2^K-l,0,\cdots,0)$ with $l$ 1's, we get $\Tr(E^{\T}F)=l+(\frac{1+c}{2}2^K-l)^2$. Moreover, $\Tr(E^{\T}F)$ takes arbitrary values in the region $[c2^K,l+(\frac{1+c}{2}2^K-l)^2]$ by appropriately choosing $0\le E,F\le\mbb{I}_{2^K}$.
Therefore, we get $E$ and $F$ satisfying the constraints in \eqref{eq:EF_constraints1}--\eqref{eq:EF_constraints4}.

Given $(b,c)$, measuring the corresponding POVMs $\{E,\mbb{I}_{2^K}-E\}$ and $\{F,\mbb{I}_{2^K}-F\}$ on $\ketbra{\Phi^+}{\Phi^+}^{\otimes K}$ yields the correlation \eqref{eq:correlation}, then prepare the diagonal state
\begin{equation}
\rho:=\frac{1+2c+b}{4}\ketbra{00}{00}+\frac{1-b}{4}\ketbra{01}{01}+\frac{1-b}{4}\ketbra{10}{10}+\frac{1-2c+b}{4}\ketbra{11}{11}
\end{equation}
conditioned on the measurement results. We have $\langle Z\otimes Z\rangle_\rho=b$ and $\langle Z\otimes \mbb{I}\rangle_\rho=\langle \mbb{I}\otimes Z\rangle_\rho=c$, and $\rho$ is prepared by local CPTP maps on $K$ EPR pairs.
\end{proof}

\subsection{Details for $\ket{\mathrm{C}^{n-1}Z}$}\label{app:CCCZ}
The reduced density matrix of $\ket{\mathrm{C}^{n-1}Z}$~\cite{wei2024noise} on two qubits $\{k,l\}\subset[n]$ is
\begin{equation}
\begin{aligned}
\Tr_{\overline{\{k,l\}}}\ketbra{\mathrm{C}^{n-1}Z}{\mathrm{C}^{n-1}Z}=\big(1-\frac{1}{2^{n-2}}\big)\ketbra{++}{++}+\frac{1}{2^{n-2}}\ketbra{\mathrm{C}Z}{\mathrm{C}Z}.
\end{aligned}
\end{equation}
We have $\bra{\mathrm{C}Z}X\otimes X\ket{\mathrm{C}Z}=0$, thus 
\begin{equation}
\langle X_kX_l\rangle=\Tr\left(X\otimes X\Tr_{\overline{\{k,l\}}}\ketbra{\mathrm{C}^{n-1}Z}{\mathrm{C}^{n-1}Z}\right)=1-\frac{1}{2^{n-2}}.
\end{equation}
On one qubit we have
\begin{equation}
\begin{aligned}
\Tr_{\overline{\{k\}}}\ketbra{\mathrm{C}^{n-1}Z}{\mathrm{C}^{n-1}Z}=\big(1-\frac{1}{2^{n-1}}\big)\ketbra{+}{+}+\frac{1}{2^{n-1}}\ketbra{-}{-},
\end{aligned}
\end{equation}
thus 
\begin{equation}
\langle X_k\rangle=1-\frac{1}{2^{n-1}}-\frac{1}{2^{n-1}}=1-\frac{1}{2^{n-2}}.
\end{equation}

\begin{prop}
$\ket{\mathrm{C}^{n-1}Z}$ state family exhibits LRM.
\end{prop}

\begin{proof}
Suppose on the contrary that there exists $\{n_q\}_{q=1}^\infty\subset\mbb{Z}_{>0}$ such that $\{\ket{\mathrm{C}^{n_q-1}Z}=U_{n_q}\ket{S_{n_q}}\}$ has SRM.
Denote by $M_q=\{j\in[n_q]:\mc{L}(j)\cap \mc{L}(1)\neq\emptyset\}$ the set of qubits whose light-cone with regard to $U_{n_q}$ intersects with that of qubit $1$. We have $\abs{M_q}=O(1)$.
By Theorem~\ref{thm:SRM_2qubits_cor_by_EPRs_app}, for all qubits $k\notin M_q$, the reduced state 
$\Tr_{\overline{\{1,k\}}}\ketbra{\mathrm{C}^{n_q-1}Z}{\mathrm{C}^{n_q-1}Z}$ can be generated by local CPTP maps on $O(1)$ EPR pairs. However, we know that when $n_q$ is sufficiently large, $\Tr_{\overline{\{1,k\}}}\ketbra{\mathrm{C}^{n_q-1}Z}{\mathrm{C}^{n_q-1}Z}$ cannot be generated by local CPTP maps on $O(1)$ EPR pairs, for all $k\neq1$, leading to a contradiction.
Therefore, $\{\ket{\mathrm{C}^{n-1}Z}\}$ has LRM.
\end{proof}

\section{NLTM as a stronger necessary condition for quantum PCP}\label{app:NLTM_qPCP}

Here we provide the rigorous argument that the validity of the NLTM conjecture is necessary for the quantum PCP conjecture to hold (under the standard complexity-theoretic belief that $\mathsf{NP}\neq\mathsf{QMA}$). That is, as remarked in the main text, NLTM is a strictly stronger version of NLTS that moves significantly closer to quantum PCP, since it takes into account more general types of classical witnesses. This statement was also mentioned in a talk~\cite{Nirkhe_2023}.

\begin{prop}\label{prop:NLTM_qPCP-app}
Suppose $\mathsf{NP}\neq\mathsf{QMA}$ and that the quantum PCP conjecture holds, then the NLTM conjecture holds.
\end{prop}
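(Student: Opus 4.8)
The plan is to argue by contradiction, adapting to short-range magic the standard argument that the NLTS theorem \cite{Anshu2023NLTS} is necessary for quantum PCP. Recall the quantum PCP conjecture in its Hamiltonian form: there is a constant $\epsilon_0>0$ such that the following promise problem is $\mathrm{QMA}$-hard---given an $\mc{O}(1)$-local Hamiltonian $H=\sum_{i=1}^{m}H_i$ on $n$ qubits with $m=\Theta(n)$, $\|H_i\|_\infty\le1$, and a threshold $a$, decide whether $\lambda_{\min}(H)\le a$ or $\lambda_{\min}(H)\ge a+\epsilon_0 n$. Assume the quantum PCP conjecture holds but NLTM fails. The negation of NLTM says that for \emph{every} $\epsilon>0$ and every such Hamiltonian family there is a family of states $\{\ket{\psi_n}\}$ with $\bra{\psi_n}H^{(n)}\ket{\psi_n}<\lambda_{\min}(H^{(n)})+\epsilon n$ that does not have LRM, hence---by the remark following the definition of LRM---is SRM along a subsequence of system sizes. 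I would apply this with $\epsilon=\epsilon_0/2$ to the $\mathrm{QMA}$-hard Hamiltonian family to place the qPCP promise problem in $\mathrm{NP}$; since $\mathrm{NP}\subseteq\mathrm{QMA}$ trivially, this forces $\mathrm{NP}=\mathrm{QMA}$, contradicting the hypothesis.

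The $\mathrm{NP}$ verifier would work as follows. On a YES instance ($\lambda_{\min}(H)\le a$) the negation of NLTM supplies a state $\ket{\psi}$ with $\bra{\psi}H\ket{\psi}<\lambda_{\min}(H)+(\epsilon_0/2)n\le a+(\epsilon_0/2)n<a+\epsilon_0 n$ that is SRM, so $\ket{\psi}=U\ket{S}$ for a constant-depth local circuit $U$ and a stabilizer state $\ket{S}$; a gate list for $U$ together with the stabilizer tableau of $\ket{S}$ is a classical string of length $\mathrm{poly}(n)$, and this is the witness (one may equally allow mixed SRM witnesses $\rho=\Tr_{\mathrm{anc}}(U\ketbra{S}{S}U^\dagger)$). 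Given the witness, the verifier computes $\bra{S}U^\dagger H U\ket{S}=\sum_{i=1}^{m}\bra{S}U^\dagger H_i U\ket{S}$ exactly in $\mathrm{poly}(n)$ time: as observed in the main text, $U^\dagger H_i U$ is supported on the backward light cone of $\operatorname{supp}(H_i)$, which contains only $\mc{O}(1)$ qubits, so it expands into $\mc{O}(1)$ Pauli strings with $\mc{O}(1)$-time computable coefficients, and each $\bra{S}P\ket{S}\in\{0,\pm1\}$ is computable from the tableau in $\mathrm{poly}(n)$ time \cite{Aaronson2004Improved}; there are only $\Theta(n)$ terms. The verifier accepts iff the computed energy is below $a+\epsilon_0 n$. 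Soundness is immediate since on a NO instance \emph{every} state---and in particular any state named by a witness---has energy at least $a+\epsilon_0 n$; completeness is exactly the negation of NLTM applied as above. Hence the qPCP promise problem lies in $\mathrm{NP}$, and the contradiction follows.

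The step I expect to require the most care is bridging the \emph{family}-level statement of (the negation of) NLTM and the \emph{instance}-level requirement of an $\mathrm{NP}$ verifier: the negation only guarantees that the chosen low-energy states are SRM along some subsequence of system sizes, so one must combine it with a padding/uniformity argument---as in the analogous reductions surrounding NLTS---to keep the problem $\mathrm{QMA}$-hard after restricting to those sizes, and to ensure the SRM-erasing circuit has depth, hence the witness has length and the verification takes time, uniformly bounded over the relevant instances. Two lesser points also deserve attention: either bound the ancilla count of a mixed SRM witness by $\mathrm{poly}(n)$ or simply use pure SRM witnesses, which already subsume both the short-range-entangled witnesses behind NLTS and the stabilizer-state witnesses of \cite{coble2023local}; and note that this argument delivers the version of NLTM for general $\mc{O}(1)$-local Hamiltonians with $\Theta(n)$ norm-bounded terms---the commuting formulation in the conjecture being a natural strengthening motivated by topologically ordered realizations---so the implication $\text{qPCP}\Rightarrow\text{NLTM}$ is to be read in that sense. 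Everything else is routine bookkeeping given the classical-description and classical-evaluation properties of SRM states recorded in the main text.
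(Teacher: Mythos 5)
Your proof follows essentially the same route as the paper's: assume NLTM fails, apply its negation to the $\mathrm{QMA}$-hard qPCP Hamiltonian family to obtain low-energy SRM witnesses for YES instances, verify their energy classically from the constant-depth circuit plus stabilizer tableau description (via the $\mc{O}(1)$-support, Pauli-decomposition argument of \cite{Aaronson2004Improved}), and conclude $\mathrm{QMA}\subseteq\mathrm{NP}$, contradicting $\mathrm{NP}\neq\mathrm{QMA}$. The two caveats you flag --- that the negation of LRM only yields SRM along a subsequence of sizes, and that the stated NLTM conjecture restricts to commuting Hamiltonians while qPCP supplies general local ones --- are in fact passed over silently in the paper's own proof, so your treatment is, if anything, more careful.
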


Our proof below builds on the proof for the analogous statement on NLTS in Ref.~\cite{Nirkhe2022Lower}. Various necessary preliminary definitions can be found therein.  

\begin{proof}
Assume on the contrary that NLTM is false, while $\mathsf{NP}\neq\mathsf{QMA}$ and the quantum PCP conjecture holds.

Take an arbitrary $\mathsf{QMA}$ decision problem $L$. Given the validity of quantum PCP, there is a polynomial-time reduction that maps an input instance $x$ of $L$ to a local Hamiltonian $H^{(n)}=\sum_{i=1}^{m^{(n)}}H_i^{(n)}$ acting on $n$ qubits ($H^{(n)}$ depends on $x$, $n$ relies on the size of $x$), with $m^{(n)}=\Theta(n)$ and $\|H_i^{(n)}\|_\infty\le1$.
The promise for $H^{(n)}$ is that there are two energy thresholds $a^{(n)}$ and $b^{(n)}$ satisfying $b^{(n)}-a^{(n)}\ge\gamma n$ for some constant $\gamma>0$ such that
\begin{enumerate}
\item YES instance: If $x\in L$, then $\lambda_{\min}(H^{(n)}) \le a^{(n)}$;\\
\item NO instance: If $x\notin L$, then $\lambda_{\min}(H^{(n)}) \ge b^{(n)}$.
\end{enumerate}

We now show that a consequence of the invalidity of NLTM is that $L\in\mathsf{NP}$.

Consider a family of YES instances $\{x\}$ and the corresponding $\{H^{(n)}\}$.
By the assumption that  NLTM is false, for any constant $\gamma>0$, there exists a family of states $\{\ket{\phi_n}\}$ with 
\begin{equation}
\bra{\phi_n}H^{(n)}\ket{\phi_n}\le\lambda_{\min}(H^{(n)})+\gamma n\le a^{(n)}+\gamma n=b^{(n)}
\end{equation}
that exhibits SRM, meaning that there exist shallow circuit family $\{U_n\}$ and stabilizer states $\{\ket{S_n}\}$ such that $\ket{\phi_n}=U_n\ket{S_n}$.
The classical description of $U_n$ and $\ket{S_n}$ can serve as an $\mathsf{NP}$ witness for a YES instance $H^{(n)}$.
To see this, first notice that $\bra{\phi_n}H^{(n)}\ket{\phi_n}<b^{(n)}$ proves $\lambda_{\min}(H^{(n)})\le a^{(n)}$ due to the promise.
Also, $\bra{\phi_n}H^{(n)}\ket{\phi_n}$ can be computed efficiently classically by the classical description of $U_n$ and $\ket{S_n}$, because
\begin{equation}
\bra{\phi_n}H^{(n)}\ket{\phi_n}=\sum_{i=1}^{m^{(n)}}\bra{S_n}U_n^\dagger H_i^{(n)}U_n\ket{S_n},
\end{equation}
where $\big|\supp\big(U_n^\dagger H_i^{(n)}U_n\big)\big|=O(1)$ implies $\bra{S_n}U_n^\dagger H_i^{(n)}U_n\ket{S_n}$ can be computed efficiently by e.g.~decomposing $U_n^\dagger H_i^{(n)}U_n$ into the sum of $O(1)$ Pauli strings~\cite{Aaronson2004Improved}.

For a NO instance $x\notin L$ (and the corresponding Hamiltonian $H^{(n)}$), the promise implies $\lambda_{\min}(H^{(n)})\ge b^{(n)}$, hence $\bra{\psi}H^{(n)}\ket{\psi}\ge b^{(n)}$ for every $n$-qubit state $\ket{\psi}$. Therefore no witness can produce an energy estimate below $b^{(n)}$, and the verifier rejects.

Consequently, we conclude that $L\in\mathsf{NP}$. Since $L\in\mathsf{QMA}$ is taken arbitrarily, we must have $\mathsf{QMA}\subset\mathsf{NP}$, contradicting the  $\mathsf{NP}\neq\mathsf{QMA}$ assumption. Therefore, NLTM is a necessary condition for quantum PCP to hold unless $\mathsf{NP}=\mathsf{QMA}$. 
\end{proof}

\section{Remark on the relation between LRM and nonlocal magic}\label{app:LRM_and_nonlocalMagic}

Given a partition of the entire system into several subsystems, we say a nonstabilizer state has \emph{entirely nonlocal magic} if its local reduced density matrices on subsystems have no magic. Here we point out:

\begin{prop}
LRM is a strictly stronger condition compared to entirely nonlocal magic.
\end{prop}
\begin{proof}
Suppose $\{\ket{\psi_n}\}$ has LRM, then for an arbitrary disjoint local partition $[n]=R_1\cup\cdots\cup R_{t_n}$ with $\abs{R_i}=O(1)$, we can use a shallow circuit $U^{(n)}=U_1\otimes\cdots\otimes U_{t_n}$ to transform the reduced density matrix on each $R_i$ into a matrix that is diagonal in the computational basis, thereby eliminating the magic on each $R_i$.
Meanwhile, $U^{(n)}\ket{\psi_n}$ remains magical, because $U^{(n)}$ cannot remove all the magic of $\ket{\psi_n}$ when $n$ is sufficiently large.

On the other hand, some families of states with SRM can have entirely nonlocal magic. For example, the logical $\ket{T}^{\otimes k}$ state in 3D color codes can be obtained by performing a single layer of non-Clifford gates on logical $\ket{+}^{\otimes k}$, a stabilizer state, thus exhibiting SRM. However, its magic is completely nonlocal because regions smaller than the code distance have no magic.
\end{proof}


\begin{thebibliography}{91}%
	\makeatletter
	\providecommand \@ifxundefined [1]{%
		\@ifx{#1\undefined}
	}%
	\providecommand \@ifnum [1]{%
		\ifnum #1\expandafter \@firstoftwo
		\else \expandafter \@secondoftwo
		\fi
	}%
	\providecommand \@ifx [1]{%
		\ifx #1\expandafter \@firstoftwo
		\else \expandafter \@secondoftwo
		\fi
	}%
	\providecommand \natexlab [1]{#1}%
	\providecommand \enquote  [1]{``#1''}%
	\providecommand \bibnamefont  [1]{#1}%
	\providecommand \bibfnamefont [1]{#1}%
	\providecommand \citenamefont [1]{#1}%
	\providecommand \href@noop [0]{\@secondoftwo}%
	\providecommand \href [0]{\begingroup \@sanitize@url \@href}%
	\providecommand \@href[1]{\@@startlink{#1}\@@href}%
	\providecommand \@@href[1]{\endgroup#1\@@endlink}%
	\providecommand \@sanitize@url [0]{\catcode `\\12\catcode `\$12\catcode
		`\&12\catcode `\#12\catcode `\^12\catcode `\_12\catcode `\%12\relax}%
	\providecommand \@@startlink[1]{}%
	\providecommand \@@endlink[0]{}%
	\providecommand \url  [0]{\begingroup\@sanitize@url \@url }%
	\providecommand \@url [1]{\endgroup\@href {#1}{\urlprefix }}%
	\providecommand \urlprefix  [0]{URL }%
	\providecommand \Eprint [0]{\href }%
	\providecommand \doibase [0]{http://dx.doi.org/}%
	\providecommand \selectlanguage [0]{\@gobble}%
	\providecommand \bibinfo  [0]{\@secondoftwo}%
	\providecommand \bibfield  [0]{\@secondoftwo}%
	\providecommand \translation [1]{[#1]}%
	\providecommand \BibitemOpen [0]{}%
	\providecommand \bibitemStop [0]{}%
	\providecommand \bibitemNoStop [0]{.\EOS\space}%
	\providecommand \EOS [0]{\spacefactor3000\relax}%
	\providecommand \BibitemShut  [1]{\csname bibitem#1\endcsname}%
	\let\auto@bib@innerbib\@empty
	\bibitem [{\citenamefont {Gottesman}(1998)}]{gottesman1998heisenberg}%
	\BibitemOpen
	\bibfield  {author} {\bibinfo {author} {\bibfnamefont {Daniel}\ \bibnamefont
			{Gottesman}},\ }\bibfield  {title} {\enquote {\bibinfo {title} {The
				{H}eisenberg representation of quantum computers},}\ }\href
	{https://arxiv.org/abs/quant-ph/9807006} {\bibfield  {journal} {\bibinfo
			{journal} {quant-ph/9807006}\ } (\bibinfo {year} {1998})}\BibitemShut
	{NoStop}%
	\bibitem [{\citenamefont {Aaronson}\ and\ \citenamefont
		{Gottesman}(2004)}]{Aaronson2004Improved}%
	\BibitemOpen
	\bibfield  {author} {\bibinfo {author} {\bibfnamefont {Scott}\ \bibnamefont
			{Aaronson}}\ and\ \bibinfo {author} {\bibfnamefont {Daniel}\ \bibnamefont
			{Gottesman}},\ }\bibfield  {title} {\enquote {\bibinfo {title} {Improved
				simulation of stabilizer circuits},}\ }\href {\doibase
		10.1103/PhysRevA.70.052328} {\bibfield  {journal} {\bibinfo  {journal} {Phys.
				Rev. A}\ }\textbf {\bibinfo {volume} {70}},\ \bibinfo {pages} {052328}
		(\bibinfo {year} {2004})}\BibitemShut {NoStop}%
	\bibitem [{\citenamefont {Nielsen}\ and\ \citenamefont
		{Chuang}(2011)}]{NielsenChuang}%
	\BibitemOpen
	\bibfield  {author} {\bibinfo {author} {\bibfnamefont {Michael~A.}\
			\bibnamefont {Nielsen}}\ and\ \bibinfo {author} {\bibfnamefont {Isaac~L.}\
			\bibnamefont {Chuang}},\ }\href@noop {} {\emph {\bibinfo {title} {Quantum
				Computation and Quantum Information: 10th Anniversary Edition}}},\ \bibinfo
	{edition} {10th}\ ed.\ (\bibinfo  {publisher} {Cambridge University Press},\
	\bibinfo {address} {New York, NY, USA},\ \bibinfo {year} {2011})\BibitemShut
	{NoStop}%
	\bibitem [{\citenamefont {Bravyi}\ and\ \citenamefont
		{Kitaev}(2005)}]{BravyiKitaev}%
	\BibitemOpen
	\bibfield  {author} {\bibinfo {author} {\bibfnamefont {Sergey}\ \bibnamefont
			{Bravyi}}\ and\ \bibinfo {author} {\bibfnamefont {Alexei}\ \bibnamefont
			{Kitaev}},\ }\bibfield  {title} {\enquote {\bibinfo {title} {Universal
				quantum computation with ideal {C}lifford gates and noisy ancillas},}\ }\href
	{\doibase 10.1103/PhysRevA.71.022316} {\bibfield  {journal} {\bibinfo
			{journal} {Phys. Rev. A}\ }\textbf {\bibinfo {volume} {71}},\ \bibinfo
		{pages} {022316} (\bibinfo {year} {2005})}\BibitemShut {NoStop}%
	\bibitem [{\citenamefont {Veitch}\ \emph {et~al.}(2014)\citenamefont {Veitch},
		\citenamefont {Mousavian}, \citenamefont {Gottesman},\ and\ \citenamefont
		{Emerson}}]{Veitch2014resource}%
	\BibitemOpen
	\bibfield  {author} {\bibinfo {author} {\bibfnamefont {Victor}\ \bibnamefont
			{Veitch}}, \bibinfo {author} {\bibfnamefont {S~A~Hamed}\ \bibnamefont
			{Mousavian}}, \bibinfo {author} {\bibfnamefont {Daniel}\ \bibnamefont
			{Gottesman}}, \ and\ \bibinfo {author} {\bibfnamefont {Joseph}\ \bibnamefont
			{Emerson}},\ }\bibfield  {title} {\enquote {\bibinfo {title} {The resource
				theory of stabilizer quantum computation},}\ }\href {\doibase
		10.1088/1367-2630/16/1/013009} {\bibfield  {journal} {\bibinfo  {journal}
			{New Journal of Physics}\ }\textbf {\bibinfo {volume} {16}},\ \bibinfo
		{pages} {013009} (\bibinfo {year} {2014})}\BibitemShut {NoStop}%
	\bibitem [{\citenamefont {Howard}\ and\ \citenamefont
		{Campbell}(2017)}]{Howard2017application}%
	\BibitemOpen
	\bibfield  {author} {\bibinfo {author} {\bibfnamefont {Mark}\ \bibnamefont
			{Howard}}\ and\ \bibinfo {author} {\bibfnamefont {Earl}\ \bibnamefont
			{Campbell}},\ }\bibfield  {title} {\enquote {\bibinfo {title} {Application of
				a resource theory for magic states to fault-tolerant quantum computing},}\
	}\href {\doibase 10.1103/PhysRevLett.118.090501} {\bibfield  {journal}
		{\bibinfo  {journal} {Phys. Rev. Lett.}\ }\textbf {\bibinfo {volume} {118}},\
		\bibinfo {pages} {090501} (\bibinfo {year} {2017})}\BibitemShut {NoStop}%
	\bibitem [{\citenamefont {Bravyi}\ \emph {et~al.}(2016)\citenamefont {Bravyi},
		\citenamefont {Smith},\ and\ \citenamefont {Smolin}}]{PhysRevX.6.021043}%
	\BibitemOpen
	\bibfield  {author} {\bibinfo {author} {\bibfnamefont {Sergey}\ \bibnamefont
			{Bravyi}}, \bibinfo {author} {\bibfnamefont {Graeme}\ \bibnamefont {Smith}},
		\ and\ \bibinfo {author} {\bibfnamefont {John~A.}\ \bibnamefont {Smolin}},\
	}\bibfield  {title} {\enquote {\bibinfo {title} {Trading classical and
				quantum computational resources},}\ }\href {\doibase
		10.1103/PhysRevX.6.021043} {\bibfield  {journal} {\bibinfo  {journal} {Phys.
				Rev. X}\ }\textbf {\bibinfo {volume} {6}},\ \bibinfo {pages} {021043}
		(\bibinfo {year} {2016})}\BibitemShut {NoStop}%
	\bibitem [{\citenamefont {Bravyi}\ and\ \citenamefont
		{Gosset}(2016)}]{Bravyi2016Improved}%
	\BibitemOpen
	\bibfield  {author} {\bibinfo {author} {\bibfnamefont {Sergey}\ \bibnamefont
			{Bravyi}}\ and\ \bibinfo {author} {\bibfnamefont {David}\ \bibnamefont
			{Gosset}},\ }\bibfield  {title} {\enquote {\bibinfo {title} {Improved
				classical simulation of quantum circuits dominated by {C}lifford gates},}\
	}\href {\doibase 10.1103/PhysRevLett.116.250501} {\bibfield  {journal}
		{\bibinfo  {journal} {Phys. Rev. Lett.}\ }\textbf {\bibinfo {volume} {116}},\
		\bibinfo {pages} {250501} (\bibinfo {year} {2016})}\BibitemShut {NoStop}%
	\bibitem [{\citenamefont {Bravyi}\ \emph {et~al.}(2019)\citenamefont {Bravyi},
		\citenamefont {Browne}, \citenamefont {Calpin}, \citenamefont {Campbell},
		\citenamefont {Gosset},\ and\ \citenamefont
		{Howard}}]{Bravyi2019simulationofquantum}%
	\BibitemOpen
	\bibfield  {author} {\bibinfo {author} {\bibfnamefont {Sergey}\ \bibnamefont
			{Bravyi}}, \bibinfo {author} {\bibfnamefont {Dan}\ \bibnamefont {Browne}},
		\bibinfo {author} {\bibfnamefont {Padraic}\ \bibnamefont {Calpin}}, \bibinfo
		{author} {\bibfnamefont {Earl}\ \bibnamefont {Campbell}}, \bibinfo {author}
		{\bibfnamefont {David}\ \bibnamefont {Gosset}}, \ and\ \bibinfo {author}
		{\bibfnamefont {Mark}\ \bibnamefont {Howard}},\ }\bibfield  {title} {\enquote
		{\bibinfo {title} {Simulation of quantum circuits by low-rank stabilizer
				decompositions},}\ }\href {\doibase 10.22331/q-2019-09-02-181} {\bibfield
		{journal} {\bibinfo  {journal} {{Quantum}}\ }\textbf {\bibinfo {volume}
			{3}},\ \bibinfo {pages} {181} (\bibinfo {year} {2019})}\BibitemShut {NoStop}%
	\bibitem [{\citenamefont {Seddon}\ \emph {et~al.}(2021)\citenamefont {Seddon},
		\citenamefont {Regula}, \citenamefont {Pashayan}, \citenamefont {Ouyang},\
		and\ \citenamefont {Campbell}}]{PRXQuantum.2.010345}%
	\BibitemOpen
	\bibfield  {author} {\bibinfo {author} {\bibfnamefont {James~R.}\
			\bibnamefont {Seddon}}, \bibinfo {author} {\bibfnamefont {Bartosz}\
			\bibnamefont {Regula}}, \bibinfo {author} {\bibfnamefont {Hakop}\
			\bibnamefont {Pashayan}}, \bibinfo {author} {\bibfnamefont {Yingkai}\
			\bibnamefont {Ouyang}}, \ and\ \bibinfo {author} {\bibfnamefont {Earl~T.}\
			\bibnamefont {Campbell}},\ }\bibfield  {title} {\enquote {\bibinfo {title}
			{Quantifying quantum speedups: Improved classical simulation from tighter
				magic monotones},}\ }\href {\doibase 10.1103/PRXQuantum.2.010345} {\bibfield
		{journal} {\bibinfo  {journal} {PRX Quantum}\ }\textbf {\bibinfo {volume}
			{2}},\ \bibinfo {pages} {010345} (\bibinfo {year} {2021})}\BibitemShut
	{NoStop}%
	\bibitem [{\citenamefont {Bu}\ and\ \citenamefont
		{Koh}(2019)}]{PhysRevLett.123.170502}%
	\BibitemOpen
	\bibfield  {author} {\bibinfo {author} {\bibfnamefont {Kaifeng}\ \bibnamefont
			{Bu}}\ and\ \bibinfo {author} {\bibfnamefont {Dax~Enshan}\ \bibnamefont
			{Koh}},\ }\bibfield  {title} {\enquote {\bibinfo {title} {Efficient classical
				simulation of {C}lifford circuits with nonstabilizer input states},}\ }\href
	{\doibase 10.1103/PhysRevLett.123.170502} {\bibfield  {journal} {\bibinfo
			{journal} {Phys. Rev. Lett.}\ }\textbf {\bibinfo {volume} {123}},\ \bibinfo
		{pages} {170502} (\bibinfo {year} {2019})}\BibitemShut {NoStop}%
	\bibitem [{\citenamefont {Liu}\ and\ \citenamefont
		{Winter}(2022)}]{Liu2022manybody}%
	\BibitemOpen
	\bibfield  {author} {\bibinfo {author} {\bibfnamefont {Zi-Wen}\ \bibnamefont
			{Liu}}\ and\ \bibinfo {author} {\bibfnamefont {Andreas}\ \bibnamefont
			{Winter}},\ }\bibfield  {title} {\enquote {\bibinfo {title} {Many-body
				quantum magic},}\ }\href {\doibase 10.1103/PRXQuantum.3.020333} {\bibfield
		{journal} {\bibinfo  {journal} {PRX Quantum}\ }\textbf {\bibinfo {volume}
			{3}},\ \bibinfo {pages} {020333} (\bibinfo {year} {2022})}\BibitemShut
	{NoStop}%
	\bibitem [{\citenamefont {Leone}\ \emph {et~al.}(2022)\citenamefont {Leone},
		\citenamefont {Oliviero},\ and\ \citenamefont
		{Hamma}}]{Lorenzo2022stabilizer}%
	\BibitemOpen
	\bibfield  {author} {\bibinfo {author} {\bibfnamefont {Lorenzo}\ \bibnamefont
			{Leone}}, \bibinfo {author} {\bibfnamefont {Salvatore F.~E.}\ \bibnamefont
			{Oliviero}}, \ and\ \bibinfo {author} {\bibfnamefont {Alioscia}\ \bibnamefont
			{Hamma}},\ }\bibfield  {title} {\enquote {\bibinfo {title} {Stabilizer
				{R\'enyi} entropy},}\ }\href {\doibase 10.1103/PhysRevLett.128.050402}
	{\bibfield  {journal} {\bibinfo  {journal} {Phys. Rev. Lett.}\ }\textbf
		{\bibinfo {volume} {128}},\ \bibinfo {pages} {050402} (\bibinfo {year}
		{2022})}\BibitemShut {NoStop}%
	\bibitem [{\citenamefont {Haug}\ and\ \citenamefont
		{Piroli}(2023)}]{PhysRevB.107.035148}%
	\BibitemOpen
	\bibfield  {author} {\bibinfo {author} {\bibfnamefont {Tobias}\ \bibnamefont
			{Haug}}\ and\ \bibinfo {author} {\bibfnamefont {Lorenzo}\ \bibnamefont
			{Piroli}},\ }\bibfield  {title} {\enquote {\bibinfo {title} {Quantifying
				nonstabilizerness of matrix product states},}\ }\href {\doibase
		10.1103/PhysRevB.107.035148} {\bibfield  {journal} {\bibinfo  {journal}
			{Phys. Rev. B}\ }\textbf {\bibinfo {volume} {107}},\ \bibinfo {pages}
		{035148} (\bibinfo {year} {2023})}\BibitemShut {NoStop}%
	\bibitem [{\citenamefont {Lami}\ and\ \citenamefont
		{Collura}(2023)}]{Lami2023Nonstabilizerness}%
	\BibitemOpen
	\bibfield  {author} {\bibinfo {author} {\bibfnamefont {Guglielmo}\
			\bibnamefont {Lami}}\ and\ \bibinfo {author} {\bibfnamefont {Mario}\
			\bibnamefont {Collura}},\ }\bibfield  {title} {\enquote {\bibinfo {title}
			{Nonstabilizerness via perfect {Pauli} sampling of matrix product states},}\
	}\href {\doibase 10.1103/PhysRevLett.131.180401} {\bibfield  {journal}
		{\bibinfo  {journal} {Phys. Rev. Lett.}\ }\textbf {\bibinfo {volume} {131}},\
		\bibinfo {pages} {180401} (\bibinfo {year} {2023})}\BibitemShut {NoStop}%
	\bibitem [{\citenamefont {Tarabunga}\ \emph {et~al.}(2024)\citenamefont
		{Tarabunga}, \citenamefont {Tirrito}, \citenamefont {Ba\~nuls},\ and\
		\citenamefont {Dalmonte}}]{Tarabunga2024Nonstabilizerness}%
	\BibitemOpen
	\bibfield  {author} {\bibinfo {author} {\bibfnamefont {Poetri~Sonya}\
			\bibnamefont {Tarabunga}}, \bibinfo {author} {\bibfnamefont {Emanuele}\
			\bibnamefont {Tirrito}}, \bibinfo {author} {\bibfnamefont {Mari~Carmen}\
			\bibnamefont {Ba\~nuls}}, \ and\ \bibinfo {author} {\bibfnamefont {Marcello}\
			\bibnamefont {Dalmonte}},\ }\bibfield  {title} {\enquote {\bibinfo {title}
			{Nonstabilizerness via matrix product states in the {P}auli basis},}\ }\href
	{\doibase 10.1103/PhysRevLett.133.010601} {\bibfield  {journal} {\bibinfo
			{journal} {Phys. Rev. Lett.}\ }\textbf {\bibinfo {volume} {133}},\ \bibinfo
		{pages} {010601} (\bibinfo {year} {2024})}\BibitemShut {NoStop}%
	\bibitem [{\citenamefont {Frau}\ \emph {et~al.}(2024)\citenamefont {Frau},
		\citenamefont {Tarabunga}, \citenamefont {Collura}, \citenamefont
		{Dalmonte},\ and\ \citenamefont {Tirrito}}]{Frau2024Nonstabilizerness}%
	\BibitemOpen
	\bibfield  {author} {\bibinfo {author} {\bibfnamefont {M.}~\bibnamefont
			{Frau}}, \bibinfo {author} {\bibfnamefont {P.~S.}\ \bibnamefont {Tarabunga}},
		\bibinfo {author} {\bibfnamefont {M.}~\bibnamefont {Collura}}, \bibinfo
		{author} {\bibfnamefont {M.}~\bibnamefont {Dalmonte}}, \ and\ \bibinfo
		{author} {\bibfnamefont {E.}~\bibnamefont {Tirrito}},\ }\bibfield  {title}
	{\enquote {\bibinfo {title} {Nonstabilizerness versus entanglement in matrix
				product states},}\ }\href {\doibase 10.1103/PhysRevB.110.045101} {\bibfield
		{journal} {\bibinfo  {journal} {Phys. Rev. B}\ }\textbf {\bibinfo {volume}
			{110}},\ \bibinfo {pages} {045101} (\bibinfo {year} {2024})}\BibitemShut
	{NoStop}%
	\bibitem [{\citenamefont {Gu}\ \emph {et~al.}(2025)\citenamefont {Gu},
		\citenamefont {Oliviero},\ and\ \citenamefont
		{Leone}}]{gu2024magicinducedcomputationalseparationentanglement}%
	\BibitemOpen
	\bibfield  {author} {\bibinfo {author} {\bibfnamefont {Andi}\ \bibnamefont
			{Gu}}, \bibinfo {author} {\bibfnamefont {Salvatore~F.E.}\ \bibnamefont
			{Oliviero}}, \ and\ \bibinfo {author} {\bibfnamefont {Lorenzo}\ \bibnamefont
			{Leone}},\ }\bibfield  {title} {\enquote {\bibinfo {title} {Magic-induced
				computational separation in entanglement theory},}\ }\href {\doibase
		10.1103/PRXQuantum.6.020324} {\bibfield  {journal} {\bibinfo  {journal} {PRX
				Quantum}\ }\textbf {\bibinfo {volume} {6}},\ \bibinfo {pages} {020324}
		(\bibinfo {year} {2025})}\BibitemShut {NoStop}%
	\bibitem [{\citenamefont {Liu}\ and\ \citenamefont
		{Clark}(2025)}]{zejun2025Nonequilibrium}%
	\BibitemOpen
	\bibfield  {author} {\bibinfo {author} {\bibfnamefont {Zejun}\ \bibnamefont
			{Liu}}\ and\ \bibinfo {author} {\bibfnamefont {Bryan~K.}\ \bibnamefont
			{Clark}},\ }\bibfield  {title} {\enquote {\bibinfo {title} {Nonequilibrium
				quantum {M}onte {C}arlo algorithm for stabilizer {R\'enyi} entropy in spin
				systems},}\ }\href {\doibase 10.1103/PhysRevB.111.085144} {\bibfield
		{journal} {\bibinfo  {journal} {Phys. Rev. B}\ }\textbf {\bibinfo {volume}
			{111}},\ \bibinfo {pages} {085144} (\bibinfo {year} {2025})}\BibitemShut
	{NoStop}%
	\bibitem [{\citenamefont {Sarkar}\ \emph {et~al.}(2020)\citenamefont {Sarkar},
		\citenamefont {Mukhopadhyay},\ and\ \citenamefont
		{Bayat}}]{Sarkar2020Characterization}%
	\BibitemOpen
	\bibfield  {author} {\bibinfo {author} {\bibfnamefont {S}~\bibnamefont
			{Sarkar}}, \bibinfo {author} {\bibfnamefont {C}~\bibnamefont {Mukhopadhyay}},
		\ and\ \bibinfo {author} {\bibfnamefont {A}~\bibnamefont {Bayat}},\
	}\bibfield  {title} {\enquote {\bibinfo {title} {Characterization of an
				operational quantum resource in a critical many-body system},}\ }\href
	{\doibase 10.1088/1367-2630/aba919} {\bibfield  {journal} {\bibinfo
			{journal} {New Journal of Physics}\ }\textbf {\bibinfo {volume} {22}},\
		\bibinfo {pages} {083077} (\bibinfo {year} {2020})}\BibitemShut {NoStop}%
	\bibitem [{\citenamefont {White}\ \emph {et~al.}(2021)\citenamefont {White},
		\citenamefont {Cao},\ and\ \citenamefont {Swingle}}]{White2021Conformal}%
	\BibitemOpen
	\bibfield  {author} {\bibinfo {author} {\bibfnamefont {Christopher~David}\
			\bibnamefont {White}}, \bibinfo {author} {\bibfnamefont {ChunJun}\
			\bibnamefont {Cao}}, \ and\ \bibinfo {author} {\bibfnamefont {Brian}\
			\bibnamefont {Swingle}},\ }\bibfield  {title} {\enquote {\bibinfo {title}
			{Conformal field theories are magical},}\ }\href {\doibase
		10.1103/PhysRevB.103.075145} {\bibfield  {journal} {\bibinfo  {journal}
			{Phys. Rev. B}\ }\textbf {\bibinfo {volume} {103}},\ \bibinfo {pages}
		{075145} (\bibinfo {year} {2021})}\BibitemShut {NoStop}%
	\bibitem [{\citenamefont {Tarabunga}\ \emph {et~al.}(2023)\citenamefont
		{Tarabunga}, \citenamefont {Tirrito}, \citenamefont {Chanda},\ and\
		\citenamefont {Dalmonte}}]{Tarabunga2023ManyBody}%
	\BibitemOpen
	\bibfield  {author} {\bibinfo {author} {\bibfnamefont {Poetri~Sonya}\
			\bibnamefont {Tarabunga}}, \bibinfo {author} {\bibfnamefont {Emanuele}\
			\bibnamefont {Tirrito}}, \bibinfo {author} {\bibfnamefont {Titas}\
			\bibnamefont {Chanda}}, \ and\ \bibinfo {author} {\bibfnamefont {Marcello}\
			\bibnamefont {Dalmonte}},\ }\bibfield  {title} {\enquote {\bibinfo {title}
			{Many-body magic via {P}auli-{M}arkov chains---from criticality to gauge
				theories},}\ }\href {\doibase 10.1103/PRXQuantum.4.040317} {\bibfield
		{journal} {\bibinfo  {journal} {PRX Quantum}\ }\textbf {\bibinfo {volume}
			{4}},\ \bibinfo {pages} {040317} (\bibinfo {year} {2023})}\BibitemShut
	{NoStop}%
	\bibitem [{\citenamefont {Bejan}\ \emph {et~al.}(2024)\citenamefont {Bejan},
		\citenamefont {McLauchlan},\ and\ \citenamefont
		{B\'eri}}]{Mircea2024Dynamical}%
	\BibitemOpen
	\bibfield  {author} {\bibinfo {author} {\bibfnamefont {Mircea}\ \bibnamefont
			{Bejan}}, \bibinfo {author} {\bibfnamefont {Campbell}\ \bibnamefont
			{McLauchlan}}, \ and\ \bibinfo {author} {\bibfnamefont {Benjamin}\
			\bibnamefont {B\'eri}},\ }\bibfield  {title} {\enquote {\bibinfo {title}
			{Dynamical magic transitions in monitored {C}lifford+{$T$} circuits},}\
	}\href {\doibase 10.1103/PRXQuantum.5.030332} {\bibfield  {journal} {\bibinfo
			{journal} {PRX Quantum}\ }\textbf {\bibinfo {volume} {5}},\ \bibinfo {pages}
		{030332} (\bibinfo {year} {2024})}\BibitemShut {NoStop}%
	\bibitem [{\citenamefont {Chen}\ \emph {et~al.}(2010)\citenamefont {Chen},
		\citenamefont {Gu},\ and\ \citenamefont {Wen}}]{ChenGuWen10:lu}%
	\BibitemOpen
	\bibfield  {author} {\bibinfo {author} {\bibfnamefont {Xie}\ \bibnamefont
			{Chen}}, \bibinfo {author} {\bibfnamefont {Zheng-Cheng}\ \bibnamefont {Gu}},
		\ and\ \bibinfo {author} {\bibfnamefont {Xiao-Gang}\ \bibnamefont {Wen}},\
	}\bibfield  {title} {\enquote {\bibinfo {title} {Local unitary
				transformation, long-range quantum entanglement, wave function
				renormalization, and topological order},}\ }\href {\doibase
		10.1103/PhysRevB.82.155138} {\bibfield  {journal} {\bibinfo  {journal} {Phys.
				Rev. B}\ }\textbf {\bibinfo {volume} {82}},\ \bibinfo {pages} {155138}
		(\bibinfo {year} {2010})}\BibitemShut {NoStop}%
	\bibitem [{\citenamefont {Wen}(2013)}]{wen2013topological}%
	\BibitemOpen
	\bibfield  {author} {\bibinfo {author} {\bibfnamefont {Xiao-Gang}\
			\bibnamefont {Wen}},\ }\bibfield  {title} {\enquote {\bibinfo {title}
			{Topological order: From long-range entangled quantum matter to an
				unification of light and electrons},}\ }\href
	{https://www.hindawi.com/journals/isrn/2013/198710/} {\bibfield  {journal}
		{\bibinfo  {journal} {ISRN Condensed Matter Physics}\ }\textbf {\bibinfo
			{volume} {2013}},\ \bibinfo {pages} {198710} (\bibinfo {year}
		{2013})}\BibitemShut {NoStop}%
	\bibitem [{\citenamefont {Hastings}\ and\ \citenamefont
		{Wen}(2005)}]{HastingsWen:quasi}%
	\BibitemOpen
	\bibfield  {author} {\bibinfo {author} {\bibfnamefont {M.~B.}\ \bibnamefont
			{Hastings}}\ and\ \bibinfo {author} {\bibfnamefont {Xiao-Gang}\ \bibnamefont
			{Wen}},\ }\bibfield  {title} {\enquote {\bibinfo {title} {Quasiadiabatic
				continuation of quantum states: The stability of topological ground-state
				degeneracy and emergent gauge invariance},}\ }\href {\doibase
		10.1103/PhysRevB.72.045141} {\bibfield  {journal} {\bibinfo  {journal} {Phys.
				Rev. B}\ }\textbf {\bibinfo {volume} {72}},\ \bibinfo {pages} {045141}
		(\bibinfo {year} {2005})}\BibitemShut {NoStop}%
	\bibitem [{\citenamefont {Ellison}\ \emph {et~al.}(2021)\citenamefont
		{Ellison}, \citenamefont {Kato}, \citenamefont {Liu},\ and\ \citenamefont
		{Hsieh}}]{Ellison2021symmetryprotected}%
	\BibitemOpen
	\bibfield  {author} {\bibinfo {author} {\bibfnamefont {Tyler~D.}\
			\bibnamefont {Ellison}}, \bibinfo {author} {\bibfnamefont {Kohtaro}\
			\bibnamefont {Kato}}, \bibinfo {author} {\bibfnamefont {Zi-Wen}\ \bibnamefont
			{Liu}}, \ and\ \bibinfo {author} {\bibfnamefont {Timothy~H.}\ \bibnamefont
			{Hsieh}},\ }\bibfield  {title} {\enquote {\bibinfo {title}
			{Symmetry-protected sign problem and magic in quantum phases of matter},}\
	}\href {\doibase 10.22331/q-2021-12-28-612} {\bibfield  {journal} {\bibinfo
			{journal} {{Quantum}}\ }\textbf {\bibinfo {volume} {5}},\ \bibinfo {pages}
		{612} (\bibinfo {year} {2021})}\BibitemShut {NoStop}%
	\bibitem [{\citenamefont {Bravyi}\ \emph {et~al.}(2024)\citenamefont {Bravyi},
		\citenamefont {Cross}, \citenamefont {Gambetta}, \citenamefont {Maslov},
		\citenamefont {Rall},\ and\ \citenamefont {Yoder}}]{Bravyi2024High}%
	\BibitemOpen
	\bibfield  {author} {\bibinfo {author} {\bibfnamefont {Sergey}\ \bibnamefont
			{Bravyi}}, \bibinfo {author} {\bibfnamefont {Andrew~W.}\ \bibnamefont
			{Cross}}, \bibinfo {author} {\bibfnamefont {Jay~M.}\ \bibnamefont
			{Gambetta}}, \bibinfo {author} {\bibfnamefont {Dmitri}\ \bibnamefont
			{Maslov}}, \bibinfo {author} {\bibfnamefont {Patrick}\ \bibnamefont {Rall}},
		\ and\ \bibinfo {author} {\bibfnamefont {Theodore~J.}\ \bibnamefont
			{Yoder}},\ }\bibfield  {title} {\enquote {\bibinfo {title} {High-threshold
				and low-overhead fault-tolerant quantum memory},}\ }\href {\doibase
		10.1038/s41586-024-07107-7} {\bibfield  {journal} {\bibinfo  {journal}
			{Nature}\ }\textbf {\bibinfo {volume} {627}},\ \bibinfo {pages} {778--782}
		(\bibinfo {year} {2024})}\BibitemShut {NoStop}%
	\bibitem [{\citenamefont {Yoder}\ \emph {et~al.}(2025)\citenamefont {Yoder},
		\citenamefont {Schoute}, \citenamefont {Rall}, \citenamefont {Pritchett},
		\citenamefont {Gambetta}, \citenamefont {Cross}, \citenamefont {Carroll},\
		and\ \citenamefont {Beverland}}]{yoder2025tourgrossmodularquantum}%
	\BibitemOpen
	\bibfield  {author} {\bibinfo {author} {\bibfnamefont {Theodore~J.}\
			\bibnamefont {Yoder}}, \bibinfo {author} {\bibfnamefont {Eddie}\ \bibnamefont
			{Schoute}}, \bibinfo {author} {\bibfnamefont {Patrick}\ \bibnamefont {Rall}},
		\bibinfo {author} {\bibfnamefont {Emily}\ \bibnamefont {Pritchett}}, \bibinfo
		{author} {\bibfnamefont {Jay~M.}\ \bibnamefont {Gambetta}}, \bibinfo {author}
		{\bibfnamefont {Andrew~W.}\ \bibnamefont {Cross}}, \bibinfo {author}
		{\bibfnamefont {Malcolm}\ \bibnamefont {Carroll}}, \ and\ \bibinfo {author}
		{\bibfnamefont {Michael~E.}\ \bibnamefont {Beverland}},\ }\href@noop {}
	{\enquote {\bibinfo {title} {Tour de gross: A modular quantum computer based
				on bivariate bicycle codes},}\ } (\bibinfo {year} {2025}),\ \Eprint
	{http://arxiv.org/abs/2506.03094} {arXiv:2506.03094 [quant-ph]} \BibitemShut
	{NoStop}%
	\bibitem [{\citenamefont {Freedman}\ and\ \citenamefont
		{Hastings}()}]{freedman2013quantumsystemsnonkhyperfinitecomplexes}%
	\BibitemOpen
	\bibfield  {author} {\bibinfo {author} {\bibfnamefont {M.~H.}\ \bibnamefont
			{Freedman}}\ and\ \bibinfo {author} {\bibfnamefont {M.~B.}\ \bibnamefont
			{Hastings}},\ }\href@noop {} {\enquote {\bibinfo {title} {Quantum systems on
				non-$k$-hyperfinite complexes: A generalization of classical statistical
				mechanics on expander graphs},}\ }\Eprint {http://arxiv.org/abs/1301.1363}
	{arXiv:1301.1363 [quant-ph]} \BibitemShut {NoStop}%
	\bibitem [{\citenamefont {Anshu}\ \emph {et~al.}(2023)\citenamefont {Anshu},
		\citenamefont {Breuckmann},\ and\ \citenamefont {Nirkhe}}]{Anshu2023NLTS}%
	\BibitemOpen
	\bibfield  {author} {\bibinfo {author} {\bibfnamefont {Anurag}\ \bibnamefont
			{Anshu}}, \bibinfo {author} {\bibfnamefont {Nikolas~P.}\ \bibnamefont
			{Breuckmann}}, \ and\ \bibinfo {author} {\bibfnamefont {Chinmay}\
			\bibnamefont {Nirkhe}},\ }\bibfield  {title} {\enquote {\bibinfo {title}
			{{NLTS} hamiltonians from good quantum codes},}\ }in\ \href {\doibase
		10.1145/3564246.3585114} {\emph {\bibinfo {booktitle} {Proceedings of the
				55th Annual ACM Symposium on Theory of Computing}}},\ \bibinfo {series and
		number} {STOC 2023}\ (\bibinfo  {publisher} {Association for Computing
		Machinery},\ \bibinfo {address} {New York, NY, USA},\ \bibinfo {year}
	{2023})\ p.\ \bibinfo {pages} {1090–1096}\BibitemShut {NoStop}%
	\bibitem [{\citenamefont {Faist}\ \emph {et~al.}(2020)\citenamefont {Faist},
		\citenamefont {Nezami}, \citenamefont {Albert}, \citenamefont {Salton},
		\citenamefont {Pastawski}, \citenamefont {Hayden},\ and\ \citenamefont
		{Preskill}}]{faist20}%
	\BibitemOpen
	\bibfield  {author} {\bibinfo {author} {\bibfnamefont {Philippe}\
			\bibnamefont {Faist}}, \bibinfo {author} {\bibfnamefont {Sepehr}\
			\bibnamefont {Nezami}}, \bibinfo {author} {\bibfnamefont {Victor~V.}\
			\bibnamefont {Albert}}, \bibinfo {author} {\bibfnamefont {Grant}\
			\bibnamefont {Salton}}, \bibinfo {author} {\bibfnamefont {Fernando}\
			\bibnamefont {Pastawski}}, \bibinfo {author} {\bibfnamefont {Patrick}\
			\bibnamefont {Hayden}}, \ and\ \bibinfo {author} {\bibfnamefont {John}\
			\bibnamefont {Preskill}},\ }\bibfield  {title} {\enquote {\bibinfo {title}
			{Continuous symmetries and approximate quantum error correction},}\ }\href
	{\doibase 10.1103/PhysRevX.10.041018} {\bibfield  {journal} {\bibinfo
			{journal} {Phys. Rev. X}\ }\textbf {\bibinfo {volume} {10}},\ \bibinfo
		{pages} {041018} (\bibinfo {year} {2020})}\BibitemShut {NoStop}%
	\bibitem [{\citenamefont {Liu}\ and\ \citenamefont
		{Zhou}(2023)}]{liu2022approximate}%
	\BibitemOpen
	\bibfield  {author} {\bibinfo {author} {\bibfnamefont {Zi-Wen}\ \bibnamefont
			{Liu}}\ and\ \bibinfo {author} {\bibfnamefont {Sisi}\ \bibnamefont {Zhou}},\
	}\bibfield  {title} {\enquote {\bibinfo {title} {Approximate symmetries and
				quantum error correction},}\ }\href
	{http://dx.doi.org/10.1038/s41534-023-00788-4} {\bibfield  {journal}
		{\bibinfo  {journal} {npj Quantum Information}\ }\textbf {\bibinfo {volume}
			{9}},\ \bibinfo {pages} {119} (\bibinfo {year} {2023})}\BibitemShut {NoStop}%
	\bibitem [{\citenamefont {Yi}\ \emph {et~al.}(2024)\citenamefont {Yi},
		\citenamefont {Ye}, \citenamefont {Gottesman},\ and\ \citenamefont
		{Liu}}]{Yi_2024}%
	\BibitemOpen
	\bibfield  {author} {\bibinfo {author} {\bibfnamefont {Jinmin}\ \bibnamefont
			{Yi}}, \bibinfo {author} {\bibfnamefont {Weicheng}\ \bibnamefont {Ye}},
		\bibinfo {author} {\bibfnamefont {Daniel}\ \bibnamefont {Gottesman}}, \ and\
		\bibinfo {author} {\bibfnamefont {Zi-Wen}\ \bibnamefont {Liu}},\ }\bibfield
	{title} {\enquote {\bibinfo {title} {Complexity and order in approximate
				quantum error-correcting codes},}\ }\href {\doibase
		10.1038/s41567-024-02621-x} {\bibfield  {journal} {\bibinfo  {journal}
			{Nature Physics}\ }\textbf {\bibinfo {volume} {20}},\ \bibinfo {pages}
		{1798–1803} (\bibinfo {year} {2024})}\BibitemShut {NoStop}%
	\bibitem [{\citenamefont {Kitaev}(2003{\natexlab{a}})}]{KITAEV2003anyons}%
	\BibitemOpen
	\bibfield  {author} {\bibinfo {author} {\bibfnamefont {A.Yu.}\ \bibnamefont
			{Kitaev}},\ }\bibfield  {title} {\enquote {\bibinfo {title} {Fault-tolerant
				quantum computation by anyons},}\ }\href {\doibase
		https://doi.org/10.1016/S0003-4916(02)00018-0} {\bibfield  {journal}
		{\bibinfo  {journal} {Annals of Physics}\ }\textbf {\bibinfo {volume}
			{303}},\ \bibinfo {pages} {2--30} (\bibinfo {year}
		{2003}{\natexlab{a}})}\BibitemShut {NoStop}%
	\bibitem [{\citenamefont {Dennis}\ \emph {et~al.}(2002)\citenamefont {Dennis},
		\citenamefont {Kitaev}, \citenamefont {Landahl},\ and\ \citenamefont
		{Preskill}}]{Topological2022Dennis}%
	\BibitemOpen
	\bibfield  {author} {\bibinfo {author} {\bibfnamefont {Eric}\ \bibnamefont
			{Dennis}}, \bibinfo {author} {\bibfnamefont {Alexei}\ \bibnamefont {Kitaev}},
		\bibinfo {author} {\bibfnamefont {Andrew}\ \bibnamefont {Landahl}}, \ and\
		\bibinfo {author} {\bibfnamefont {John}\ \bibnamefont {Preskill}},\
	}\bibfield  {title} {\enquote {\bibinfo {title} {Topological quantum
				memory},}\ }\href {\doibase 10.1063/1.1499754} {\bibfield  {journal}
		{\bibinfo  {journal} {Journal of Mathematical Physics}\ }\textbf {\bibinfo
			{volume} {43}},\ \bibinfo {pages} {4452--4505} (\bibinfo {year}
		{2002})}\BibitemShut {NoStop}%
	\bibitem [{\citenamefont {Bombin}\ and\ \citenamefont
		{Martin-Delgado}(2006)}]{Bombin2006TopologicalQuantumDistillation}%
	\BibitemOpen
	\bibfield  {author} {\bibinfo {author} {\bibfnamefont {H.}~\bibnamefont
			{Bombin}}\ and\ \bibinfo {author} {\bibfnamefont {M.~A.}\ \bibnamefont
			{Martin-Delgado}},\ }\bibfield  {title} {\enquote {\bibinfo {title}
			{Topological quantum distillation},}\ }\href {\doibase
		10.1103/PhysRevLett.97.180501} {\bibfield  {journal} {\bibinfo  {journal}
			{Physical Review Letters}\ }\textbf {\bibinfo {volume} {97}},\ \bibinfo
		{pages} {180501} (\bibinfo {year} {2006})}\BibitemShut {NoStop}%
	\bibitem [{\citenamefont {Bravyi}\ and\ \citenamefont
		{K\"onig}(2013)}]{Bravyi2013Classification}%
	\BibitemOpen
	\bibfield  {author} {\bibinfo {author} {\bibfnamefont {Sergey}\ \bibnamefont
			{Bravyi}}\ and\ \bibinfo {author} {\bibfnamefont {Robert}\ \bibnamefont
			{K\"onig}},\ }\bibfield  {title} {\enquote {\bibinfo {title} {Classification
				of topologically protected gates for local stabilizer codes},}\ }\href
	{\doibase 10.1103/PhysRevLett.110.170503} {\bibfield  {journal} {\bibinfo
			{journal} {Phys. Rev. Lett.}\ }\textbf {\bibinfo {volume} {110}},\ \bibinfo
		{pages} {170503} (\bibinfo {year} {2013})}\BibitemShut {NoStop}%
	\bibitem [{\citenamefont {Fu}\ \emph {et~al.}(2025)\citenamefont {Fu},
		\citenamefont {Zheng}, \citenamefont {Li},\ and\ \citenamefont
		{Liu}}]{nogohgp}%
	\BibitemOpen
	\bibfield  {author} {\bibinfo {author} {\bibfnamefont {Xiaozhen}\
			\bibnamefont {Fu}}, \bibinfo {author} {\bibfnamefont {Han}\ \bibnamefont
			{Zheng}}, \bibinfo {author} {\bibfnamefont {Zimu}\ \bibnamefont {Li}}, \ and\
		\bibinfo {author} {\bibfnamefont {Zi-Wen}\ \bibnamefont {Liu}},\ }\href@noop
	{} {\enquote {\bibinfo {title} {No-go theorems for logical gates on product
				quantum codes},}\ } (\bibinfo {year} {2025}),\ \Eprint
	{http://arxiv.org/abs/2507.16797} {arXiv:2507.16797 [quant-ph]} \BibitemShut
	{NoStop}%
	\bibitem [{\citenamefont {Kubica}\ \emph {et~al.}(2015)\citenamefont {Kubica},
		\citenamefont {Yoshida},\ and\ \citenamefont
		{Pastawski}}]{Kubica2015Unfolding}%
	\BibitemOpen
	\bibfield  {author} {\bibinfo {author} {\bibfnamefont {Aleksander}\
			\bibnamefont {Kubica}}, \bibinfo {author} {\bibfnamefont {Beni}\ \bibnamefont
			{Yoshida}}, \ and\ \bibinfo {author} {\bibfnamefont {Fernando}\ \bibnamefont
			{Pastawski}},\ }\bibfield  {title} {\enquote {\bibinfo {title} {Unfolding the
				color code},}\ }\href {\doibase 10.1088/1367-2630/17/8/083026} {\bibfield
		{journal} {\bibinfo  {journal} {New Journal of Physics}\ }\textbf {\bibinfo
			{volume} {17}},\ \bibinfo {pages} {083026} (\bibinfo {year}
		{2015})}\BibitemShut {NoStop}%
	\bibitem [{\citenamefont
		{Kitaev}(2003{\natexlab{b}})}]{KITAEV2003FaultTolerant}%
	\BibitemOpen
	\bibfield  {author} {\bibinfo {author} {\bibfnamefont {A.Yu.}\ \bibnamefont
			{Kitaev}},\ }\bibfield  {title} {\enquote {\bibinfo {title} {Fault-tolerant
				quantum computation by anyons},}\ }\href {\doibase
		https://doi.org/10.1016/S0003-4916(02)00018-0} {\bibfield  {journal}
		{\bibinfo  {journal} {Annals of Physics}\ }\textbf {\bibinfo {volume}
			{303}},\ \bibinfo {pages} {2--30} (\bibinfo {year}
		{2003}{\natexlab{b}})}\BibitemShut {NoStop}%
	\bibitem [{\citenamefont {Levin}\ and\ \citenamefont
		{Wen}(2005)}]{Levin2005StringNet}%
	\BibitemOpen
	\bibfield  {author} {\bibinfo {author} {\bibfnamefont {Michael~A.}\
			\bibnamefont {Levin}}\ and\ \bibinfo {author} {\bibfnamefont {Xiao-Gang}\
			\bibnamefont {Wen}},\ }\bibfield  {title} {\enquote {\bibinfo {title}
			{String-net condensation: A physical mechanism for topological phases},}\
	}\href {\doibase 10.1103/PhysRevB.71.045110} {\bibfield  {journal} {\bibinfo
			{journal} {Phys. Rev. B}\ }\textbf {\bibinfo {volume} {71}},\ \bibinfo
		{pages} {045110} (\bibinfo {year} {2005})}\BibitemShut {NoStop}%
	\bibitem [{\citenamefont {Bravyi}\ \emph {et~al.}(2006)\citenamefont {Bravyi},
		\citenamefont {Hastings},\ and\ \citenamefont
		{Verstraete}}]{LiebRobinson2006Bravyi}%
	\BibitemOpen
	\bibfield  {author} {\bibinfo {author} {\bibfnamefont {S.}~\bibnamefont
			{Bravyi}}, \bibinfo {author} {\bibfnamefont {M.~B.}\ \bibnamefont
			{Hastings}}, \ and\ \bibinfo {author} {\bibfnamefont {F.}~\bibnamefont
			{Verstraete}},\ }\bibfield  {title} {\enquote {\bibinfo {title}
			{Lieb-{R}obinson bounds and the generation of correlations and topological
				quantum order},}\ }\href {\doibase 10.1103/PhysRevLett.97.050401} {\bibfield
		{journal} {\bibinfo  {journal} {Phys. Rev. Lett.}\ }\textbf {\bibinfo
			{volume} {97}},\ \bibinfo {pages} {050401} (\bibinfo {year}
		{2006})}\BibitemShut {NoStop}%
	\bibitem [{\citenamefont {Bravyi}\ \emph {et~al.}(2010)\citenamefont {Bravyi},
		\citenamefont {Hastings},\ and\ \citenamefont {Michalakis}}]{Bravyi2010TQO}%
	\BibitemOpen
	\bibfield  {author} {\bibinfo {author} {\bibfnamefont {Sergey}\ \bibnamefont
			{Bravyi}}, \bibinfo {author} {\bibfnamefont {Matthew~B.}\ \bibnamefont
			{Hastings}}, \ and\ \bibinfo {author} {\bibfnamefont {Spyridon}\ \bibnamefont
			{Michalakis}},\ }\bibfield  {title} {\enquote {\bibinfo {title} {Topological
				quantum order: Stability under local perturbations},}\ }\href {\doibase
		10.1063/1.3490195} {\bibfield  {journal} {\bibinfo  {journal} {Journal of
				Mathematical Physics}\ }\textbf {\bibinfo {volume} {51}},\ \bibinfo {pages}
		{093512} (\bibinfo {year} {2010})}\BibitemShut {NoStop}%
	\bibitem [{\citenamefont {Barkeshli}\ and\ \citenamefont
		{Wen}(2009)}]{Barkeshli2009Structure}%
	\BibitemOpen
	\bibfield  {author} {\bibinfo {author} {\bibfnamefont {Maissam}\ \bibnamefont
			{Barkeshli}}\ and\ \bibinfo {author} {\bibfnamefont {Xiao-Gang}\ \bibnamefont
			{Wen}},\ }\bibfield  {title} {\enquote {\bibinfo {title} {Structure of
				quasiparticles and their fusion algebra in fractional quantum hall states},}\
	}\href {\doibase 10.1103/PhysRevB.79.195132} {\bibfield  {journal} {\bibinfo
			{journal} {Phys. Rev. B}\ }\textbf {\bibinfo {volume} {79}},\ \bibinfo
		{pages} {195132} (\bibinfo {year} {2009})}\BibitemShut {NoStop}%
	\bibitem [{\citenamefont {Kong}\ and\ \citenamefont
		{Wen}(2020)}]{Kong2020Relation}%
	\BibitemOpen
	\bibfield  {author} {\bibinfo {author} {\bibfnamefont {Liang}\ \bibnamefont
			{Kong}}\ and\ \bibinfo {author} {\bibfnamefont {Xiao-Gang}\ \bibnamefont
			{Wen}},\ }\bibfield  {title} {\enquote {\bibinfo {title} {Relation between
				chiral central charge and ground-state degeneracy in $(2+1)$-dimensional
				topological orders},}\ }\href {\doibase 10.1103/PhysRevResearch.2.033344}
	{\bibfield  {journal} {\bibinfo  {journal} {Phys. Rev. Res.}\ }\textbf
		{\bibinfo {volume} {2}},\ \bibinfo {pages} {033344} (\bibinfo {year}
		{2020})}\BibitemShut {NoStop}%
	\bibitem [{\citenamefont {Koenig}\ \emph {et~al.}(2010)\citenamefont {Koenig},
		\citenamefont {Kuperberg},\ and\ \citenamefont
		{Reichardt}}]{KOENIG20102707TuraevViro}%
	\BibitemOpen
	\bibfield  {author} {\bibinfo {author} {\bibfnamefont {Robert}\ \bibnamefont
			{Koenig}}, \bibinfo {author} {\bibfnamefont {Greg}\ \bibnamefont
			{Kuperberg}}, \ and\ \bibinfo {author} {\bibfnamefont {Ben~W.}\ \bibnamefont
			{Reichardt}},\ }\bibfield  {title} {\enquote {\bibinfo {title} {Quantum
				computation with {Turaev–Viro} codes},}\ }\href {\doibase
		https://doi.org/10.1016/j.aop.2010.08.001} {\bibfield  {journal} {\bibinfo
			{journal} {Annals of Physics}\ }\textbf {\bibinfo {volume} {325}},\ \bibinfo
		{pages} {2707--2749} (\bibinfo {year} {2010})}\BibitemShut {NoStop}%
	\bibitem [{\citenamefont {Chen}\ \emph {et~al.}(2025)\citenamefont {Chen},
		\citenamefont {Ren}, \citenamefont {Fan},\ and\ \citenamefont
		{Jaffe}}]{Chen2025universal}%
	\BibitemOpen
	\bibfield  {author} {\bibinfo {author} {\bibfnamefont {Liyuan}\ \bibnamefont
			{Chen}}, \bibinfo {author} {\bibfnamefont {Yuanjie}\ \bibnamefont {Ren}},
		\bibinfo {author} {\bibfnamefont {Ruihua}\ \bibnamefont {Fan}}, \ and\
		\bibinfo {author} {\bibfnamefont {Arthur}\ \bibnamefont {Jaffe}},\ }\bibfield
	{title} {\enquote {\bibinfo {title} {A universal circuit set using the {S}3
				quantum double},}\ }\href {\doibase 10.1038/s41534-025-01063-4} {\bibfield
		{journal} {\bibinfo  {journal} {npj Quantum Information}\ }\textbf {\bibinfo
			{volume} {11}},\ \bibinfo {pages} {112} (\bibinfo {year} {2025})}\BibitemShut
	{NoStop}%
	\bibitem [{\citenamefont {Bombin}\ \emph {et~al.}(2012)\citenamefont {Bombin},
		\citenamefont {Duclos-Cianci},\ and\ \citenamefont
		{Poulin}}]{Bombin2012Universal}%
	\BibitemOpen
	\bibfield  {author} {\bibinfo {author} {\bibfnamefont {H}~\bibnamefont
			{Bombin}}, \bibinfo {author} {\bibfnamefont {Guillaume}\ \bibnamefont
			{Duclos-Cianci}}, \ and\ \bibinfo {author} {\bibfnamefont {David}\
			\bibnamefont {Poulin}},\ }\bibfield  {title} {\enquote {\bibinfo {title}
			{Universal topological phase of two-dimensional stabilizer codes},}\ }\href
	{\doibase 10.1088/1367-2630/14/7/073048} {\bibfield  {journal} {\bibinfo
			{journal} {New Journal of Physics}\ }\textbf {\bibinfo {volume} {14}},\
		\bibinfo {pages} {073048} (\bibinfo {year} {2012})}\BibitemShut {NoStop}%
	\bibitem [{\citenamefont {Bomb{\'i}n}(2014)}]{Bombín2014Structure}%
	\BibitemOpen
	\bibfield  {author} {\bibinfo {author} {\bibfnamefont {H{\'e}ctor}\
			\bibnamefont {Bomb{\'i}n}},\ }\bibfield  {title} {\enquote {\bibinfo {title}
			{Structure of 2{D} topological stabilizer codes},}\ }\href {\doibase
		10.1007/s00220-014-1893-4} {\bibfield  {journal} {\bibinfo  {journal}
			{Communications in Mathematical Physics}\ }\textbf {\bibinfo {volume}
			{327}},\ \bibinfo {pages} {387--432} (\bibinfo {year} {2014})}\BibitemShut
	{NoStop}%
	\bibitem [{\citenamefont {Haah}(2021)}]{Haah2021Classification}%
	\BibitemOpen
	\bibfield  {author} {\bibinfo {author} {\bibfnamefont {Jeongwan}\
			\bibnamefont {Haah}},\ }\bibfield  {title} {\enquote {\bibinfo {title}
			{Classification of translation invariant topological {Pauli} stabilizer codes
				for prime dimensional qudits on two-dimensional lattices},}\ }\href {\doibase
		10.1063/5.0021068} {\bibfield  {journal} {\bibinfo  {journal} {Journal of
				Mathematical Physics}\ }\textbf {\bibinfo {volume} {62}},\ \bibinfo {pages}
		{012201} (\bibinfo {year} {2021})}\BibitemShut {NoStop}%
	\bibitem [{\citenamefont {Ellison}\ \emph {et~al.}(2022)\citenamefont
		{Ellison}, \citenamefont {Chen}, \citenamefont {Dua}, \citenamefont
		{Shirley}, \citenamefont {Tantivasadakarn},\ and\ \citenamefont
		{Williamson}}]{Ellison2022Pauli}%
	\BibitemOpen
	\bibfield  {author} {\bibinfo {author} {\bibfnamefont {Tyler~D.}\
			\bibnamefont {Ellison}}, \bibinfo {author} {\bibfnamefont {Yu-An}\
			\bibnamefont {Chen}}, \bibinfo {author} {\bibfnamefont {Arpit}\ \bibnamefont
			{Dua}}, \bibinfo {author} {\bibfnamefont {Wilbur}\ \bibnamefont {Shirley}},
		\bibinfo {author} {\bibfnamefont {Nathanan}\ \bibnamefont {Tantivasadakarn}},
		\ and\ \bibinfo {author} {\bibfnamefont {Dominic~J.}\ \bibnamefont
			{Williamson}},\ }\bibfield  {title} {\enquote {\bibinfo {title} {Pauli
				stabilizer models of twisted quantum doubles},}\ }\href {\doibase
		10.1103/PRXQuantum.3.010353} {\bibfield  {journal} {\bibinfo  {journal} {PRX
				Quantum}\ }\textbf {\bibinfo {volume} {3}},\ \bibinfo {pages} {010353}
		(\bibinfo {year} {2022})}\BibitemShut {NoStop}%
	\bibitem [{\citenamefont {Potter}\ and\ \citenamefont
		{Vasseur}(2016)}]{Potter2016Symmetry}%
	\BibitemOpen
	\bibfield  {author} {\bibinfo {author} {\bibfnamefont {Andrew~C.}\
			\bibnamefont {Potter}}\ and\ \bibinfo {author} {\bibfnamefont {Romain}\
			\bibnamefont {Vasseur}},\ }\bibfield  {title} {\enquote {\bibinfo {title}
			{Symmetry constraints on many-body localization},}\ }\href {\doibase
		10.1103/PhysRevB.94.224206} {\bibfield  {journal} {\bibinfo  {journal} {Phys.
				Rev. B}\ }\textbf {\bibinfo {volume} {94}},\ \bibinfo {pages} {224206}
		(\bibinfo {year} {2016})}\BibitemShut {NoStop}%
	\bibitem [{\citenamefont {Walker}\ and\ \citenamefont
		{Wang}(2012)}]{WalkerWang2012}%
	\BibitemOpen
	\bibfield  {author} {\bibinfo {author} {\bibfnamefont {Kevin}\ \bibnamefont
			{Walker}}\ and\ \bibinfo {author} {\bibfnamefont {Zhenghan}\ \bibnamefont
			{Wang}},\ }\bibfield  {title} {\enquote {\bibinfo {title} {({3+1)-TQFTs} and
				topological insulators},}\ }\href@noop {} {\bibfield  {journal} {\bibinfo
			{journal} {Frontiers of Physics}\ }\textbf {\bibinfo {volume} {7}},\ \bibinfo
		{pages} {150--159} (\bibinfo {year} {2012})}\BibitemShut {NoStop}%
	\bibitem [{\citenamefont {Lan}\ \emph {et~al.}(2018)\citenamefont {Lan},
		\citenamefont {Kong},\ and\ \citenamefont {Wen}}]{Tian2018Classification}%
	\BibitemOpen
	\bibfield  {author} {\bibinfo {author} {\bibfnamefont {Tian}\ \bibnamefont
			{Lan}}, \bibinfo {author} {\bibfnamefont {Liang}\ \bibnamefont {Kong}}, \
		and\ \bibinfo {author} {\bibfnamefont {Xiao-Gang}\ \bibnamefont {Wen}},\
	}\bibfield  {title} {\enquote {\bibinfo {title} {Classification of
				$\mathbf{(}3+1\mathbf{)}\mathrm{D}$ bosonic topological orders: The case when
				pointlike excitations are all bosons},}\ }\href {\doibase
		10.1103/PhysRevX.8.021074} {\bibfield  {journal} {\bibinfo  {journal} {Phys.
				Rev. X}\ }\textbf {\bibinfo {volume} {8}},\ \bibinfo {pages} {021074}
		(\bibinfo {year} {2018})}\BibitemShut {NoStop}%
	\bibitem [{\citenamefont {Lan}\ and\ \citenamefont
		{Wen}(2019)}]{Tian2019Classification}%
	\BibitemOpen
	\bibfield  {author} {\bibinfo {author} {\bibfnamefont {Tian}\ \bibnamefont
			{Lan}}\ and\ \bibinfo {author} {\bibfnamefont {Xiao-Gang}\ \bibnamefont
			{Wen}},\ }\bibfield  {title} {\enquote {\bibinfo {title} {Classification of
				$3+1\mathrm{D}$ bosonic topological orders (ii): The case when some pointlike
				excitations are fermions},}\ }\href {\doibase 10.1103/PhysRevX.9.021005}
	{\bibfield  {journal} {\bibinfo  {journal} {Phys. Rev. X}\ }\textbf {\bibinfo
			{volume} {9}},\ \bibinfo {pages} {021005} (\bibinfo {year}
		{2019})}\BibitemShut {NoStop}%
	\bibitem [{\citenamefont {Rakovszky}\ and\ \citenamefont
		{Khemani}(2023)}]{rakovszky2023physicsgoodldpccodes}%
	\BibitemOpen
	\bibfield  {author} {\bibinfo {author} {\bibfnamefont {Tibor}\ \bibnamefont
			{Rakovszky}}\ and\ \bibinfo {author} {\bibfnamefont {Vedika}\ \bibnamefont
			{Khemani}},\ }\href@noop {} {\enquote {\bibinfo {title} {The physics of
				(good) {LDPC} codes i. gauging and dualities},}\ } (\bibinfo {year} {2023}),\
	\Eprint {http://arxiv.org/abs/2310.16032} {arXiv:2310.16032 [quant-ph]}
	\BibitemShut {NoStop}%
	\bibitem [{\citenamefont {Rakovszky}\ and\ \citenamefont
		{Khemani}(2024)}]{rakovszky2024physicsgoodldpccodes}%
	\BibitemOpen
	\bibfield  {author} {\bibinfo {author} {\bibfnamefont {Tibor}\ \bibnamefont
			{Rakovszky}}\ and\ \bibinfo {author} {\bibfnamefont {Vedika}\ \bibnamefont
			{Khemani}},\ }\href@noop {} {\enquote {\bibinfo {title} {The physics of
				(good) {LDPC} codes ii. product constructions},}\ } (\bibinfo {year}
	{2024}),\ \Eprint {http://arxiv.org/abs/2402.16831} {arXiv:2402.16831
		[quant-ph]} \BibitemShut {NoStop}%
	\bibitem [{\citenamefont {Roeck}\ \emph {et~al.}(2024)\citenamefont {Roeck},
		\citenamefont {Khemani}, \citenamefont {Li}, \citenamefont {O'Dea},\ and\
		\citenamefont {Rakovszky}}]{deroeck2024ldpcstabilizercodesgapped}%
	\BibitemOpen
	\bibfield  {author} {\bibinfo {author} {\bibfnamefont {Wojciech~De}\
			\bibnamefont {Roeck}}, \bibinfo {author} {\bibfnamefont {Vedika}\
			\bibnamefont {Khemani}}, \bibinfo {author} {\bibfnamefont {Yaodong}\
			\bibnamefont {Li}}, \bibinfo {author} {\bibfnamefont {Nicholas}\ \bibnamefont
			{O'Dea}}, \ and\ \bibinfo {author} {\bibfnamefont {Tibor}\ \bibnamefont
			{Rakovszky}},\ }\href@noop {} {\enquote {\bibinfo {title} {{LDPC} stabilizer
				codes as gapped quantum phases: stability under graph-local perturbations},}\
	} (\bibinfo {year} {2024}),\ \Eprint {http://arxiv.org/abs/2411.02384}
	{arXiv:2411.02384 [quant-ph]} \BibitemShut {NoStop}%
	\bibitem [{\citenamefont {Placke}\ \emph {et~al.}(2024)\citenamefont {Placke},
		\citenamefont {Rakovszky}, \citenamefont {Breuckmann},\ and\ \citenamefont
		{Khemani}}]{placke2024topologicalquantumspinglass}%
	\BibitemOpen
	\bibfield  {author} {\bibinfo {author} {\bibfnamefont {Benedikt}\
			\bibnamefont {Placke}}, \bibinfo {author} {\bibfnamefont {Tibor}\
			\bibnamefont {Rakovszky}}, \bibinfo {author} {\bibfnamefont {Nikolas~P.}\
			\bibnamefont {Breuckmann}}, \ and\ \bibinfo {author} {\bibfnamefont {Vedika}\
			\bibnamefont {Khemani}},\ }\href@noop {} {\enquote {\bibinfo {title}
			{Topological quantum spin glass order and its realization in {qLDPC}
				codes},}\ } (\bibinfo {year} {2024}),\ \Eprint
	{http://arxiv.org/abs/2412.13248} {arXiv:2412.13248 [quant-ph]} \BibitemShut
	{NoStop}%
	\bibitem [{\citenamefont {Aharonov}\ \emph {et~al.}()\citenamefont {Aharonov},
		\citenamefont {Arad},\ and\ \citenamefont
		{Vidick}}]{aharonov2013quantumpcpconjecture}%
	\BibitemOpen
	\bibfield  {author} {\bibinfo {author} {\bibfnamefont {Dorit}\ \bibnamefont
			{Aharonov}}, \bibinfo {author} {\bibfnamefont {Itai}\ \bibnamefont {Arad}}, \
		and\ \bibinfo {author} {\bibfnamefont {Thomas}\ \bibnamefont {Vidick}},\
	}\href@noop {} {\enquote {\bibinfo {title} {The quantum {PCP} conjecture},}\
	}\Eprint {http://arxiv.org/abs/1309.7495} {arXiv:1309.7495 [quant-ph]}
	\BibitemShut {NoStop}%
	\bibitem [{\citenamefont {Nirkhe}(2023)}]{Nirkhe_2023}%
	\BibitemOpen
	\bibfield  {author} {\bibinfo {author} {\bibfnamefont {Chinmay}\ \bibnamefont
			{Nirkhe}},\ }\href@noop {} {\enquote {\bibinfo {title} {Making the leap to
				quantum {PCPs}},}\ } (\bibinfo {year} {2023}),\ \bibinfo {note} {{talk at
			Quantum Summer Cluster Workshop}}\BibitemShut {NoStop}%
	\bibitem [{\citenamefont {Gharibian}\ and\ \citenamefont
		{Le~Gall}(2022)}]{gharibianlegall}%
	\BibitemOpen
	\bibfield  {author} {\bibinfo {author} {\bibfnamefont {Sevag}\ \bibnamefont
			{Gharibian}}\ and\ \bibinfo {author} {\bibfnamefont {Fran\c{c}ois}\
			\bibnamefont {Le~Gall}},\ }\bibfield  {title} {\enquote {\bibinfo {title}
			{Dequantizing the quantum singular value transformation: hardness and
				applications to quantum chemistry and the quantum pcp conjecture},}\ }in\
	\href {\doibase 10.1145/3519935.3519991} {\emph {\bibinfo {booktitle}
			{Proceedings of the 54th Annual ACM SIGACT Symposium on Theory of
				Computing}}},\ \bibinfo {series and number} {STOC 2022}\ (\bibinfo
	{publisher} {Association for Computing Machinery},\ \bibinfo {address} {New
		York, NY, USA},\ \bibinfo {year} {2022})\ p.\ \bibinfo {pages}
	{19–32}\BibitemShut {NoStop}%
	\bibitem [{\citenamefont {Coble}\ \emph {et~al.}(2023)\citenamefont {Coble},
		\citenamefont {Coudron}, \citenamefont {Nelson},\ and\ \citenamefont
		{Nezhadi}}]{coble2023hamiltonians}%
	\BibitemOpen
	\bibfield  {author} {\bibinfo {author} {\bibfnamefont {Nolan~J}\ \bibnamefont
			{Coble}}, \bibinfo {author} {\bibfnamefont {Matthew}\ \bibnamefont
			{Coudron}}, \bibinfo {author} {\bibfnamefont {Jon}\ \bibnamefont {Nelson}}, \
		and\ \bibinfo {author} {\bibfnamefont {Seyed~Sajjad}\ \bibnamefont
			{Nezhadi}},\ }\bibfield  {title} {\enquote {\bibinfo {title} {{Hamiltonians
					whose low-energy states require $\Omega(n)$ T gates}},}\ }\href
	{https://arxiv.org/abs/2310.01347} {\bibfield  {journal} {\bibinfo  {journal}
			{arXiv:2310.01347}\ } (\bibinfo {year} {2023})}\BibitemShut {NoStop}%
	\bibitem [{\citenamefont {Nash}\ \emph {et~al.}(2020)\citenamefont {Nash},
		\citenamefont {Gheorghiu},\ and\ \citenamefont
		{Mosca}}]{Nash2020optimizations}%
	\BibitemOpen
	\bibfield  {author} {\bibinfo {author} {\bibfnamefont {Beatrice}\
			\bibnamefont {Nash}}, \bibinfo {author} {\bibfnamefont {Vlad}\ \bibnamefont
			{Gheorghiu}}, \ and\ \bibinfo {author} {\bibfnamefont {Michele}\ \bibnamefont
			{Mosca}},\ }\bibfield  {title} {\enquote {\bibinfo {title} {Quantum circuit
				optimizations for {NISQ} architectures},}\ }\href {\doibase
		10.1088/2058-9565/ab79b1} {\bibfield  {journal} {\bibinfo  {journal} {Quantum
				Science and Technology}\ }\textbf {\bibinfo {volume} {5}},\ \bibinfo {pages}
		{025010} (\bibinfo {year} {2020})}\BibitemShut {NoStop}%
	\bibitem [{\citenamefont {Plesch}\ and\ \citenamefont
		{Brukner}(2011)}]{Plesch2011quantum}%
	\BibitemOpen
	\bibfield  {author} {\bibinfo {author} {\bibfnamefont {Martin}\ \bibnamefont
			{Plesch}}\ and\ \bibinfo {author} {\bibfnamefont {\ifmmode
				\check{C}\else~\v{C}\fi{}aslav}\ \bibnamefont {Brukner}},\ }\bibfield
	{title} {\enquote {\bibinfo {title} {Quantum-state preparation with universal
				gate decompositions},}\ }\href {\doibase 10.1103/PhysRevA.83.032302}
	{\bibfield  {journal} {\bibinfo  {journal} {Phys. Rev. A}\ }\textbf {\bibinfo
			{volume} {83}},\ \bibinfo {pages} {032302} (\bibinfo {year}
		{2011})}\BibitemShut {NoStop}%
	\bibitem [{\citenamefont {Sun}\ \emph {et~al.}(2023)\citenamefont {Sun},
		\citenamefont {Tian}, \citenamefont {Yang}, \citenamefont {Yuan},\ and\
		\citenamefont {Zhang}}]{sun2023asymptotically}%
	\BibitemOpen
	\bibfield  {author} {\bibinfo {author} {\bibfnamefont {Xiaoming}\
			\bibnamefont {Sun}}, \bibinfo {author} {\bibfnamefont {Guojing}\ \bibnamefont
			{Tian}}, \bibinfo {author} {\bibfnamefont {Shuai}\ \bibnamefont {Yang}},
		\bibinfo {author} {\bibfnamefont {Pei}\ \bibnamefont {Yuan}}, \ and\ \bibinfo
		{author} {\bibfnamefont {Shengyu}\ \bibnamefont {Zhang}},\ }\bibfield
	{title} {\enquote {\bibinfo {title} {Asymptotically optimal circuit depth for
				quantum state preparation and general unitary synthesis},}\ }\href {\doibase
		10.1109/TCAD.2023.3244885} {\bibfield  {journal} {\bibinfo  {journal} {IEEE
				Transactions on Computer-Aided Design of Integrated Circuits and Systems}\
		}\textbf {\bibinfo {volume} {42}},\ \bibinfo {pages} {3301--3314} (\bibinfo
		{year} {2023})}\BibitemShut {NoStop}%
	\bibitem [{\citenamefont {Chen}\ \emph {et~al.}(2024)\citenamefont {Chen},
		\citenamefont {Haah}, \citenamefont {Haferkamp}, \citenamefont {Liu},
		\citenamefont {Metger},\ and\ \citenamefont
		{Tan}}]{chen2024incompressibility}%
	\BibitemOpen
	\bibfield  {author} {\bibinfo {author} {\bibfnamefont {Chi-Fang}\
			\bibnamefont {Chen}}, \bibinfo {author} {\bibfnamefont {Jeongwan}\
			\bibnamefont {Haah}}, \bibinfo {author} {\bibfnamefont {Jonas}\ \bibnamefont
			{Haferkamp}}, \bibinfo {author} {\bibfnamefont {Yunchao}\ \bibnamefont
			{Liu}}, \bibinfo {author} {\bibfnamefont {Tony}\ \bibnamefont {Metger}}, \
		and\ \bibinfo {author} {\bibfnamefont {Xinyu}\ \bibnamefont {Tan}},\
	}\bibfield  {title} {\enquote {\bibinfo {title} {Incompressibility and
				spectral gaps of random circuits},}\ }\href
	{https://arxiv.org/abs/2406.07478} {\bibfield  {journal} {\bibinfo  {journal}
			{arXiv:2406.07478}\ } (\bibinfo {year} {2024})}\BibitemShut {NoStop}%
	\bibitem [{\citenamefont {Zhao}\ \emph {et~al.}(2024)\citenamefont {Zhao},
		\citenamefont {Lewis}, \citenamefont {Kannan}, \citenamefont {Quek},
		\citenamefont {Huang},\ and\ \citenamefont {Caro}}]{Zhao2024Learning}%
	\BibitemOpen
	\bibfield  {author} {\bibinfo {author} {\bibfnamefont {Haimeng}\ \bibnamefont
			{Zhao}}, \bibinfo {author} {\bibfnamefont {Laura}\ \bibnamefont {Lewis}},
		\bibinfo {author} {\bibfnamefont {Ishaan}\ \bibnamefont {Kannan}}, \bibinfo
		{author} {\bibfnamefont {Yihui}\ \bibnamefont {Quek}}, \bibinfo {author}
		{\bibfnamefont {Hsin-Yuan}\ \bibnamefont {Huang}}, \ and\ \bibinfo {author}
		{\bibfnamefont {Matthias~C.}\ \bibnamefont {Caro}},\ }\bibfield  {title}
	{\enquote {\bibinfo {title} {Learning quantum states and unitaries of bounded
				gate complexity},}\ }\href {\doibase 10.1103/PRXQuantum.5.040306} {\bibfield
		{journal} {\bibinfo  {journal} {PRX Quantum}\ }\textbf {\bibinfo {volume}
			{5}},\ \bibinfo {pages} {040306} (\bibinfo {year} {2024})}\BibitemShut
	{NoStop}%
	\bibitem [{\citenamefont {Kim}\ \emph {et~al.}(2022)\citenamefont {Kim},
		\citenamefont {Morvan}, \citenamefont {Nguyen}, \citenamefont {Naik},
		\citenamefont {J{\"u}nger}, \citenamefont {Chen}, \citenamefont {Kreikebaum},
		\citenamefont {Santiago},\ and\ \citenamefont {Siddiqi}}]{Kim2022iToffoli}%
	\BibitemOpen
	\bibfield  {author} {\bibinfo {author} {\bibfnamefont {Yosep}\ \bibnamefont
			{Kim}}, \bibinfo {author} {\bibfnamefont {Alexis}\ \bibnamefont {Morvan}},
		\bibinfo {author} {\bibfnamefont {Long~B.}\ \bibnamefont {Nguyen}}, \bibinfo
		{author} {\bibfnamefont {Ravi~K.}\ \bibnamefont {Naik}}, \bibinfo {author}
		{\bibfnamefont {Christian}\ \bibnamefont {J{\"u}nger}}, \bibinfo {author}
		{\bibfnamefont {Larry}\ \bibnamefont {Chen}}, \bibinfo {author}
		{\bibfnamefont {John~Mark}\ \bibnamefont {Kreikebaum}}, \bibinfo {author}
		{\bibfnamefont {David~I.}\ \bibnamefont {Santiago}}, \ and\ \bibinfo {author}
		{\bibfnamefont {Irfan}\ \bibnamefont {Siddiqi}},\ }\bibfield  {title}
	{\enquote {\bibinfo {title} {High-fidelity three-qubit {iToffoli} gate for
				fixed-frequency superconducting qubits},}\ }\href {\doibase
		10.1038/s41567-022-01590-3} {\bibfield  {journal} {\bibinfo  {journal}
			{Nature Physics}\ }\textbf {\bibinfo {volume} {18}},\ \bibinfo {pages}
		{783--788} (\bibinfo {year} {2022})}\BibitemShut {NoStop}%
	\bibitem [{\citenamefont {Nguyen}\ \emph {et~al.}(2024)\citenamefont {Nguyen},
		\citenamefont {Kim}, \citenamefont {Hashim}, \citenamefont {Goss},
		\citenamefont {Marinelli}, \citenamefont {Bhandari}, \citenamefont {Das},
		\citenamefont {Naik}, \citenamefont {Kreikebaum}, \citenamefont {Jordan},
		\citenamefont {Santiago},\ and\ \citenamefont {Siddiqi}}]{Nguyen2024Floquet}%
	\BibitemOpen
	\bibfield  {author} {\bibinfo {author} {\bibfnamefont {Long~B.}\ \bibnamefont
			{Nguyen}}, \bibinfo {author} {\bibfnamefont {Yosep}\ \bibnamefont {Kim}},
		\bibinfo {author} {\bibfnamefont {Akel}\ \bibnamefont {Hashim}}, \bibinfo
		{author} {\bibfnamefont {Noah}\ \bibnamefont {Goss}}, \bibinfo {author}
		{\bibfnamefont {Brian}\ \bibnamefont {Marinelli}}, \bibinfo {author}
		{\bibfnamefont {Bibek}\ \bibnamefont {Bhandari}}, \bibinfo {author}
		{\bibfnamefont {Debmalya}\ \bibnamefont {Das}}, \bibinfo {author}
		{\bibfnamefont {Ravi~K.}\ \bibnamefont {Naik}}, \bibinfo {author}
		{\bibfnamefont {John~Mark}\ \bibnamefont {Kreikebaum}}, \bibinfo {author}
		{\bibfnamefont {Andrew~N.}\ \bibnamefont {Jordan}}, \bibinfo {author}
		{\bibfnamefont {David~I.}\ \bibnamefont {Santiago}}, \ and\ \bibinfo {author}
		{\bibfnamefont {Irfan}\ \bibnamefont {Siddiqi}},\ }\bibfield  {title}
	{\enquote {\bibinfo {title} {Programmable {H}eisenberg interactions between
				{F}loquet qubits},}\ }\href {\doibase 10.1038/s41567-023-02326-7} {\bibfield
		{journal} {\bibinfo  {journal} {Nature Physics}\ }\textbf {\bibinfo {volume}
			{20}},\ \bibinfo {pages} {240--246} (\bibinfo {year} {2024})}\BibitemShut
	{NoStop}%
	\bibitem [{\citenamefont {Bluvstein}\ \emph {et~al.}(2024)\citenamefont
		{Bluvstein}, \citenamefont {Evered}, \citenamefont {Geim}, \citenamefont
		{Li}, \citenamefont {Zhou}, \citenamefont {Manovitz}, \citenamefont {Ebadi},
		\citenamefont {Cain}, \citenamefont {Kalinowski}, \citenamefont {Hangleiter},
		\citenamefont {Ataides}, \citenamefont {Maskara}, \citenamefont {Cong},
		\citenamefont {Gao}, \citenamefont {Rodriguez}, \citenamefont {Karolyshyn},
		\citenamefont {Semeghini}, \citenamefont {Gullans}, \citenamefont {Greiner},
		\citenamefont {Vuleti{\'{c}}},\ and\ \citenamefont
		{Lukin}}]{Bluvstein2024Logical}%
	\BibitemOpen
	\bibfield  {author} {\bibinfo {author} {\bibfnamefont {D.}~\bibnamefont
			{Bluvstein}}, \bibinfo {author} {\bibfnamefont {S.~J.}\ \bibnamefont
			{Evered}}, \bibinfo {author} {\bibfnamefont {A.~A.}\ \bibnamefont {Geim}},
		\bibinfo {author} {\bibfnamefont {S.~H.}\ \bibnamefont {Li}}, \bibinfo
		{author} {\bibfnamefont {H.}~\bibnamefont {Zhou}}, \bibinfo {author}
		{\bibfnamefont {T.}~\bibnamefont {Manovitz}}, \bibinfo {author}
		{\bibfnamefont {S.}~\bibnamefont {Ebadi}}, \bibinfo {author} {\bibfnamefont
			{M.}~\bibnamefont {Cain}}, \bibinfo {author} {\bibfnamefont {M.}~\bibnamefont
			{Kalinowski}}, \bibinfo {author} {\bibfnamefont {D.}~\bibnamefont
			{Hangleiter}}, \bibinfo {author} {\bibfnamefont {J.~P.~B.}\ \bibnamefont
			{Ataides}}, \bibinfo {author} {\bibfnamefont {N.}~\bibnamefont {Maskara}},
		\bibinfo {author} {\bibfnamefont {I.}~\bibnamefont {Cong}}, \bibinfo {author}
		{\bibfnamefont {X.}~\bibnamefont {Gao}}, \bibinfo {author} {\bibfnamefont
			{P.~S.}\ \bibnamefont {Rodriguez}}, \bibinfo {author} {\bibfnamefont
			{T.}~\bibnamefont {Karolyshyn}}, \bibinfo {author} {\bibfnamefont
			{G.}~\bibnamefont {Semeghini}}, \bibinfo {author} {\bibfnamefont {M.~J.}\
			\bibnamefont {Gullans}}, \bibinfo {author} {\bibfnamefont {M.}~\bibnamefont
			{Greiner}}, \bibinfo {author} {\bibfnamefont {V.}~\bibnamefont
			{Vuleti{\'{c}}}}, \ and\ \bibinfo {author} {\bibfnamefont {M.~D.}\
			\bibnamefont {Lukin}},\ }\bibfield  {title} {\enquote {\bibinfo {title}
			{Logical quantum processor based on reconfigurable atom arrays},}\ }\href
	{\doibase 10.1038/s41586-023-06927-3} {\bibfield  {journal} {\bibinfo
			{journal} {Nature}\ }\textbf {\bibinfo {volume} {626}},\ \bibinfo {pages}
		{58--65} (\bibinfo {year} {2024})}\BibitemShut {NoStop}%
	\bibitem [{\citenamefont {Wang}\ \emph {et~al.}(2024)\citenamefont {Wang},
		\citenamefont {Wang}, \citenamefont {Chen}, \citenamefont {Zhang},
		\citenamefont {Zhang}, \citenamefont {Hu}, \citenamefont {Chen},
		\citenamefont {Gu},\ and\ \citenamefont {Liu}}]{Wang_2024}%
	\BibitemOpen
	\bibfield  {author} {\bibinfo {author} {\bibfnamefont {Yifei}\ \bibnamefont
			{Wang}}, \bibinfo {author} {\bibfnamefont {Yixu}\ \bibnamefont {Wang}},
		\bibinfo {author} {\bibfnamefont {Yu-An}\ \bibnamefont {Chen}}, \bibinfo
		{author} {\bibfnamefont {Wenjun}\ \bibnamefont {Zhang}}, \bibinfo {author}
		{\bibfnamefont {Tao}\ \bibnamefont {Zhang}}, \bibinfo {author} {\bibfnamefont
			{Jiazhong}\ \bibnamefont {Hu}}, \bibinfo {author} {\bibfnamefont {Wenlan}\
			\bibnamefont {Chen}}, \bibinfo {author} {\bibfnamefont {Yingfei}\
			\bibnamefont {Gu}}, \ and\ \bibinfo {author} {\bibfnamefont {Zi-Wen}\
			\bibnamefont {Liu}},\ }\bibfield  {title} {\enquote {\bibinfo {title}
			{Efficient fault-tolerant implementations of non-{C}lifford gates with
				reconfigurable atom arrays},}\ }\href
	{http://dx.doi.org/10.1038/s41534-024-00945-3} {\bibfield  {journal}
		{\bibinfo  {journal} {npj Quantum Information}\ }\textbf {\bibinfo {volume}
			{10}},\ \bibinfo {pages} {136} (\bibinfo {year} {2024})}\BibitemShut
	{NoStop}%
	\bibitem [{\citenamefont {Wei}\ and\ \citenamefont {Liu}(2024)}]{wei2024noise}%
	\BibitemOpen
	\bibfield  {author} {\bibinfo {author} {\bibfnamefont {Fuchuan}\ \bibnamefont
			{Wei}}\ and\ \bibinfo {author} {\bibfnamefont {Zi-Wen}\ \bibnamefont {Liu}},\
	}\bibfield  {title} {\enquote {\bibinfo {title} {Noise robustness and
				threshold of many-body quantum magic},}\ }\href
	{https://arxiv.org/abs/2410.21215} {\bibfield  {journal} {\bibinfo  {journal}
			{arXiv:2410.21215}\ } (\bibinfo {year} {2024})}\BibitemShut {NoStop}%
	\bibitem [{\citenamefont {Jochym-O'Connor}\ and\ \citenamefont
		{Yoder}(2021)}]{PhysRevResearch.3.013118}%
	\BibitemOpen
	\bibfield  {author} {\bibinfo {author} {\bibfnamefont {Tomas}\ \bibnamefont
			{Jochym-O'Connor}}\ and\ \bibinfo {author} {\bibfnamefont {Theodore~J.}\
			\bibnamefont {Yoder}},\ }\bibfield  {title} {\enquote {\bibinfo {title}
			{Four-dimensional toric code with non-clifford transversal gates},}\ }\href
	{\doibase 10.1103/PhysRevResearch.3.013118} {\bibfield  {journal} {\bibinfo
			{journal} {Phys. Rev. Res.}\ }\textbf {\bibinfo {volume} {3}},\ \bibinfo
		{pages} {013118} (\bibinfo {year} {2021})}\BibitemShut {NoStop}%
	\bibitem [{\citenamefont {Wills}\ \emph {et~al.}(2025)\citenamefont {Wills},
		\citenamefont {Hsieh},\ and\ \citenamefont
		{Yamasaki}}]{WillsHsiehYamasaki2025ConstantOverhead}%
	\BibitemOpen
	\bibfield  {author} {\bibinfo {author} {\bibfnamefont {Adam}\ \bibnamefont
			{Wills}}, \bibinfo {author} {\bibfnamefont {Min-Hsiu}\ \bibnamefont {Hsieh}},
		\ and\ \bibinfo {author} {\bibfnamefont {Hayata}\ \bibnamefont {Yamasaki}},\
	}\bibfield  {title} {\enquote {\bibinfo {title} {Constant-overhead magic
				state distillation},}\ }\href {\doibase 10.1038/s41567-025-03026-0}
	{\bibfield  {journal} {\bibinfo  {journal} {Nature Physics}\ }\textbf
		{\bibinfo {volume} {21}},\ \bibinfo {pages} {1842--1846} (\bibinfo {year}
		{2025})}\BibitemShut {NoStop}%
	\bibitem [{\citenamefont {Huang}\ \emph {et~al.}(2024)\citenamefont {Huang},
		\citenamefont {Liu}, \citenamefont {Broughton}, \citenamefont {Kim},
		\citenamefont {Anshu}, \citenamefont {Landau},\ and\ \citenamefont
		{McClean}}]{Huang2024Learning}%
	\BibitemOpen
	\bibfield  {author} {\bibinfo {author} {\bibfnamefont {Hsin-Yuan}\
			\bibnamefont {Huang}}, \bibinfo {author} {\bibfnamefont {Yunchao}\
			\bibnamefont {Liu}}, \bibinfo {author} {\bibfnamefont {Michael}\ \bibnamefont
			{Broughton}}, \bibinfo {author} {\bibfnamefont {Isaac}\ \bibnamefont {Kim}},
		\bibinfo {author} {\bibfnamefont {Anurag}\ \bibnamefont {Anshu}}, \bibinfo
		{author} {\bibfnamefont {Zeph}\ \bibnamefont {Landau}}, \ and\ \bibinfo
		{author} {\bibfnamefont {Jarrod~R.}\ \bibnamefont {McClean}},\ }\bibfield
	{title} {\enquote {\bibinfo {title} {Learning shallow quantum circuits},}\
	}in\ \href {\doibase 10.1145/3618260.3649722} {\emph {\bibinfo {booktitle}
			{Proceedings of the 56th Annual ACM Symposium on Theory of Computing}}},\
	\bibinfo {series and number} {STOC 2024}\ (\bibinfo  {publisher} {Association
		for Computing Machinery},\ \bibinfo {address} {New York, NY, USA},\ \bibinfo
	{year} {2024})\ p.\ \bibinfo {pages} {1343–1351}\BibitemShut {NoStop}%
	\bibitem [{\citenamefont {Landau}\ and\ \citenamefont
		{Liu}(2024)}]{Landau2024Learning}%
	\BibitemOpen
	\bibfield  {author} {\bibinfo {author} {\bibfnamefont {Zeph}\ \bibnamefont
			{Landau}}\ and\ \bibinfo {author} {\bibfnamefont {Yunchao}\ \bibnamefont
			{Liu}},\ }\bibfield  {title} {\enquote {\bibinfo {title} {Learning quantum
				states prepared by shallow circuits in polynomial time},}\ }\href
	{https://arxiv.org/abs/2410.23618} {\bibfield  {journal} {\bibinfo  {journal}
			{arXiv:2410.23618}\ } (\bibinfo {year} {2024})}\BibitemShut {NoStop}%
	\bibitem [{\citenamefont {Tantivasadakarn}\ \emph {et~al.}(2023)\citenamefont
		{Tantivasadakarn}, \citenamefont {Vishwanath},\ and\ \citenamefont
		{Verresen}}]{PRXQuantum.4.020339}%
	\BibitemOpen
	\bibfield  {author} {\bibinfo {author} {\bibfnamefont {Nathanan}\
			\bibnamefont {Tantivasadakarn}}, \bibinfo {author} {\bibfnamefont {Ashvin}\
			\bibnamefont {Vishwanath}}, \ and\ \bibinfo {author} {\bibfnamefont {Ruben}\
			\bibnamefont {Verresen}},\ }\bibfield  {title} {\enquote {\bibinfo {title}
			{Hierarchy of topological order from finite-depth unitaries, measurement, and
				feedforward},}\ }\href {\doibase 10.1103/PRXQuantum.4.020339} {\bibfield
		{journal} {\bibinfo  {journal} {PRX Quantum}\ }\textbf {\bibinfo {volume}
			{4}},\ \bibinfo {pages} {020339} (\bibinfo {year} {2023})}\BibitemShut
	{NoStop}%
	\bibitem [{\citenamefont {Du}\ \emph {et~al.}(2025)\citenamefont {Du},
		\citenamefont {Liu},\ and\ \citenamefont
		{Ma}}]{du2025spacetimequantumcircuitcomplexity}%
	\BibitemOpen
	\bibfield  {author} {\bibinfo {author} {\bibfnamefont {Zhenyu}\ \bibnamefont
			{Du}}, \bibinfo {author} {\bibfnamefont {Zi-Wen}\ \bibnamefont {Liu}}, \ and\
		\bibinfo {author} {\bibfnamefont {Xiongfeng}\ \bibnamefont {Ma}},\
	}\href@noop {} {\enquote {\bibinfo {title} {Spacetime quantum circuit
				complexity via measurements},}\ } (\bibinfo {year} {2025}),\ \Eprint
	{http://arxiv.org/abs/2408.16602} {arXiv:2408.16602 [quant-ph]} \BibitemShut
	{NoStop}%
	\bibitem [{\citenamefont {Schuster}\ \emph {et~al.}(2025)\citenamefont
		{Schuster}, \citenamefont {Haferkamp},\ and\ \citenamefont
		{Huang}}]{schuster-low}%
	\BibitemOpen
	\bibfield  {author} {\bibinfo {author} {\bibfnamefont {Thomas}\ \bibnamefont
			{Schuster}}, \bibinfo {author} {\bibfnamefont {Jonas}\ \bibnamefont
			{Haferkamp}}, \ and\ \bibinfo {author} {\bibfnamefont {Hsin-Yuan}\
			\bibnamefont {Huang}},\ }\bibfield  {title} {\enquote {\bibinfo {title}
			{Random unitaries in extremely low depth},}\ }\href {\doibase
		10.1126/science.adv8590} {\bibfield  {journal} {\bibinfo  {journal}
			{Science}\ }\textbf {\bibinfo {volume} {389}},\ \bibinfo {pages} {92--96}
		(\bibinfo {year} {2025})}\BibitemShut {NoStop}%
	\bibitem [{\citenamefont {Korbany}\ \emph {et~al.}(2025)\citenamefont
		{Korbany}, \citenamefont {Gullans},\ and\ \citenamefont
		{Piroli}}]{korbany2025longrangenonstabilizernessphasesmatter}%
	\BibitemOpen
	\bibfield  {author} {\bibinfo {author} {\bibfnamefont {David~Aram}\
			\bibnamefont {Korbany}}, \bibinfo {author} {\bibfnamefont {Michael~J.}\
			\bibnamefont {Gullans}}, \ and\ \bibinfo {author} {\bibfnamefont {Lorenzo}\
			\bibnamefont {Piroli}},\ }\bibfield  {title} {\enquote {\bibinfo {title}
			{Long-range nonstabilizerness and phases of matter},}\ }\href {\doibase
		10.1103/1hlj-h6t9} {\bibfield  {journal} {\bibinfo  {journal} {Phys. Rev.
				Lett.}\ }\textbf {\bibinfo {volume} {135}},\ \bibinfo {pages} {160404}
		(\bibinfo {year} {2025})}\BibitemShut {NoStop}%
	\bibitem [{\citenamefont {Zeng}\ \emph {et~al.}(2008)\citenamefont {Zeng},
		\citenamefont {Chen},\ and\ \citenamefont {Chuang}}]{Zeng2008SemiClifford}%
	\BibitemOpen
	\bibfield  {author} {\bibinfo {author} {\bibfnamefont {Bei}\ \bibnamefont
			{Zeng}}, \bibinfo {author} {\bibfnamefont {Xie}\ \bibnamefont {Chen}}, \ and\
		\bibinfo {author} {\bibfnamefont {Isaac~L.}\ \bibnamefont {Chuang}},\
	}\bibfield  {title} {\enquote {\bibinfo {title} {Semi-{C}lifford operations,
				structure of $\mc{C}_{k}$ hierarchy, and gate complexity for fault-tolerant
				quantum computation},}\ }\href {\doibase 10.1103/PhysRevA.77.042313}
	{\bibfield  {journal} {\bibinfo  {journal} {Phys. Rev. A}\ }\textbf {\bibinfo
			{volume} {77}},\ \bibinfo {pages} {042313} (\bibinfo {year}
		{2008})}\BibitemShut {NoStop}%
	\bibitem [{\citenamefont {Hoffman}\ and\ \citenamefont
		{Wielandt}(2003)}]{hoffman2003variation}%
	\BibitemOpen
	\bibfield  {author} {\bibinfo {author} {\bibfnamefont {Alan~J}\ \bibnamefont
			{Hoffman}}\ and\ \bibinfo {author} {\bibfnamefont {Helmut~W}\ \bibnamefont
			{Wielandt}},\ }\bibfield  {title} {\enquote {\bibinfo {title} {The variation
				of the spectrum of a normal matrix},}\ }in\ \href@noop {} {\emph {\bibinfo
			{booktitle} {Selected Papers Of Alan J Hoffman: With Commentary}}}\ (\bibinfo
	{publisher} {World Scientific},\ \bibinfo {year} {2003})\ pp.\ \bibinfo
	{pages} {118--120}\BibitemShut {NoStop}%
	\bibitem [{\citenamefont {Lorenz}(2007)}]{lorenz2007algebra}%
	\BibitemOpen
	\bibfield  {author} {\bibinfo {author} {\bibfnamefont {Falko}\ \bibnamefont
			{Lorenz}},\ }\href@noop {} {\emph {\bibinfo {title} {Algebra: Volume ii:
				Fields with structure, algebras and advanced topics}}}\ (\bibinfo
	{publisher} {Springer Science \& Business Media},\ \bibinfo {year}
	{2007})\BibitemShut {NoStop}%
	\bibitem [{\citenamefont {Hostens}\ \emph {et~al.}(2005)\citenamefont
		{Hostens}, \citenamefont {Dehaene},\ and\ \citenamefont
		{De~Moor}}]{Hostens2005Stabilizer}%
	\BibitemOpen
	\bibfield  {author} {\bibinfo {author} {\bibfnamefont {Erik}\ \bibnamefont
			{Hostens}}, \bibinfo {author} {\bibfnamefont {Jeroen}\ \bibnamefont
			{Dehaene}}, \ and\ \bibinfo {author} {\bibfnamefont {Bart}\ \bibnamefont
			{De~Moor}},\ }\bibfield  {title} {\enquote {\bibinfo {title} {Stabilizer
				states and {C}lifford operations for systems of arbitrary dimensions and
				modular arithmetic},}\ }\href {\doibase 10.1103/PhysRevA.71.042315}
	{\bibfield  {journal} {\bibinfo  {journal} {Phys. Rev. A}\ }\textbf {\bibinfo
			{volume} {71}},\ \bibinfo {pages} {042315} (\bibinfo {year}
		{2005})}\BibitemShut {NoStop}%
	\bibitem [{\citenamefont {Gallian}(2021)}]{gallian2021contemporary}%
	\BibitemOpen
	\bibfield  {author} {\bibinfo {author} {\bibfnamefont {Joseph}\ \bibnamefont
			{Gallian}},\ }\href@noop {} {\emph {\bibinfo {title} {Contemporary abstract
				algebra}}}\ (\bibinfo  {publisher} {Chapman and Hall/CRC},\ \bibinfo {year}
	{2021})\BibitemShut {NoStop}%
	\bibitem [{\citenamefont {Vajda}(2008)}]{vajda2008fibonacci}%
	\BibitemOpen
	\bibfield  {author} {\bibinfo {author} {\bibfnamefont {Steven}\ \bibnamefont
			{Vajda}},\ }\href@noop {} {\emph {\bibinfo {title} {Fibonacci and Lucas
				numbers, and the golden section: theory and applications}}}\ (\bibinfo
	{publisher} {Courier Corporation},\ \bibinfo {year} {2008})\BibitemShut
	{NoStop}%
	\bibitem [{\citenamefont {Audenaert}\ and\ \citenamefont
		{Plenio}(2005)}]{Audenaert2005mixed}%
	\BibitemOpen
	\bibfield  {author} {\bibinfo {author} {\bibfnamefont {Koenraad M.~R.}\
			\bibnamefont {Audenaert}}\ and\ \bibinfo {author} {\bibfnamefont {Martin~B.}\
			\bibnamefont {Plenio}},\ }\bibfield  {title} {\enquote {\bibinfo {title}
			{Entanglement on mixed stabilizer states: normal forms and reduction
				procedures},}\ }\href {\doibase 10.1088/1367-2630/7/1/170} {\bibfield
		{journal} {\bibinfo  {journal} {New Journal of Physics}\ }\textbf {\bibinfo
			{volume} {7}},\ \bibinfo {pages} {170} (\bibinfo {year} {2005})}\BibitemShut
	{NoStop}%
	\bibitem [{\citenamefont {Jain}\ \emph {et~al.}(2013)\citenamefont {Jain},
		\citenamefont {Shi}, \citenamefont {Wei},\ and\ \citenamefont
		{Zhang}}]{Jain2013Efficient}%
	\BibitemOpen
	\bibfield  {author} {\bibinfo {author} {\bibfnamefont {Rahul}\ \bibnamefont
			{Jain}}, \bibinfo {author} {\bibfnamefont {Yaoyun}\ \bibnamefont {Shi}},
		\bibinfo {author} {\bibfnamefont {Zhaohui}\ \bibnamefont {Wei}}, \ and\
		\bibinfo {author} {\bibfnamefont {Shengyu}\ \bibnamefont {Zhang}},\
	}\bibfield  {title} {\enquote {\bibinfo {title} {Efficient protocols for
				generating bipartite classical distributions and quantum states},}\ }\href
	{\doibase 10.1109/TIT.2013.2258372} {\bibfield  {journal} {\bibinfo
			{journal} {IEEE Transactions on Information Theory}\ }\textbf {\bibinfo
			{volume} {59}},\ \bibinfo {pages} {5171--5178} (\bibinfo {year}
		{2013})}\BibitemShut {NoStop}%
	\bibitem [{\citenamefont {Nirkhe}(2022)}]{Nirkhe2022Lower}%
	\BibitemOpen
	\bibfield  {author} {\bibinfo {author} {\bibfnamefont {Chinmay}\ \bibnamefont
			{Nirkhe}},\ }\emph {\bibinfo {title} {Lower bounds on the complexity of
			quantum proofs}},\ \href
	{http://www2.eecs.berkeley.edu/Pubs/TechRpts/2022/EECS-2022-236.html} {Ph.D.
		thesis},\ \bibinfo  {school} {EECS Department, University of California,
		Berkeley} (\bibinfo {year} {2022})\BibitemShut {NoStop}%
\end{thebibliography}
\end{document}